\documentclass[a4paper,11pt]{article}
\usepackage{url}
\usepackage[utf8]{inputenc}
\usepackage{amsmath,accents,amssymb,amsthm}
\usepackage[usenames]{color}
\usepackage{longtable}
\usepackage[table]{xcolor}
\usepackage{algpseudocode}
\usepackage{amsfonts}
\usepackage{pgfplots}
\usepackage{subcaption}

\usepackage[normalem]{ulem}
\pgfplotsset{compat=1.18}

\theoremstyle{plain}
\newtheorem{theorem}{Theorem}[section]
\newtheorem{lemma}[theorem]{Lemma}

\newtheorem{corollary}[theorem]{Corollary}
\newtheorem{remark}[theorem]{Remark}
\theoremstyle{definition}

\usepackage[mathscr]{eucal}
\usepackage[usenames]{color}

\usepackage{algorithm,algpseudocode}
\algnewcommand{\Or}{\textbf{ or }}
\algnewcommand{\And}{\textbf{ and }}
\algnewcommand{\From}{\textbf{ from }}
\algnewcommand{\To}{\textbf{ to }}
\algnewcommand{\Let}{\textbf{Let }}
\algnewcommand{\Var}{\texttt}
\algnewcommand{\Int}{\textbf{ int }}
\algnewcommand{\List}{\textbf{ list }}
\algnewcommand{\Bool}{\textbf{ bool }}
\algnewcommand{\True}{\textsc{ true }}
\algnewcommand{\False}{\textsc{ false }}
\renewcommand\leq\leqslant
\renewcommand\geq\geqslant

\usepackage{tikz,tikz-qtree,tikz-qtree-compat,adjustbox,xcolor,array,hhline}
\usepackage{pst-plot}
\usepackage{etoolbox}
\usetikzlibrary{positioning}
\providecommand\circledcolorednumb{}\renewcommand\circledcolorednumb[2]{\resizebox{0.101111\textwidth}{!}{\tikz[baseline=(char.center)]{\node[shape = circle,draw, inner sep = 2pt,fill=#1](char)    {\phantom{00}};\node[anchor=center] at (char.center) {\makebox(0,0){\large{{\sf #2}}}};}}}
\providecommand\dotscircles{}\renewcommand\dotscircles{\resizebox{0.101111\textwidth}{!}{\dots}}

\robustify{\circledcolorednumb}
\providecommand\nongap{}\renewcommand\nongap[1]{\circledcolorednumb{yellow}{#1}}

\providecommand\gap{}\renewcommand\gap[1]{\circledcolorednumb{black!20}{\phantom{#1}}}
\providecommand\generator{}\renewcommand\generator[1]{\circledcolorednumb{orange!80}{#1}}

\title{Multiparameter counting of numerical semigroups: recurrences and leaf-discriminating trees}
\author{Maria Bras-Amorós and Vicenç Torra}

\begin{document}
\maketitle

\begin{abstract}
A numerical semigroup is a subset of the nonnegative integers, closed under addition and with finite complement. The size of the complement is its genus. The problem of counting semigroups by the largest gap got very important through the so-called Frobenius problem, first documented in 1884. Accordingly, the largest gap is called the Frobenius number. In the last two decades, counting by the genus has become a subject of even more intense study, mostly because of the non-solved conjectures on its monotonic and super-Fibonacci growth. We prove multiparameter recurrence relations and propose a new approach to counting semigroups by the Frobenius number and by the genus, by introducing two ad-hoc trees. Those are leaf-discriminating trees in the sense that their leaves correspond exactly to the objects we want to count and, so, exploring these trees instead of general trees is optimal. Our approach reduces the search space for genus to 23\% and for Frobenius to 0.03\% of that required by previous approaches and this allowed to obtain an important number of unkown terms from both sequences.

On the theoretical side, it is known that the number of semigroups of each genus grows asymptotically with the genus like the Fibonacci numbers and that the number of semigroups of each Frobenius number grows asymptotically in such a way that each number doubles the previous but one. We prove a formula for the number of numerical semigroups of each Frobenius number, genus, and multiplicity (first nonzero nongap), under the assumption that the multiplicity $m$ and the Frobenius number $F$ satisfy $m\geq\frac{F+1}{3}$. It is known that asymptotically almost all semigroups satisfy this condition. This formula gives, under the required restriction, a multiparameter exact version of the increasing behaviours just mentioned. Indeed, we prove that the number $n_{F,m,g}$ of semigroups with Frobenius number $F$, multiplicity $m$ and genus $g$ satisfies $n_{F,m,g}=n_{F-2,m-1,g-1}+n_{F-2,m-1,g-2}$ and that the number $n_{F,m}$ of semigroups with Frobenius number $F$ and multiplicity $m$ satisfies $n_{F,m}=2n_{F-2,m-1}.$

On the computational side, we implemented a recursive descending algorithm based on the so-called seeds structure, trimming the general semigroup tree exactly at those nodes with no descendants with a given genus, in the first case, or with no descendants with a given Frobenius number, in the second case.
We refined the parallelizing strategies and we overcame the previous limitation of the length of integers in the bitwise representation of the gap sequence and the seed sequence.
We computed multiparameter decomposition of the numbers
of semigroups of each Frobenius number up to $132$, the number of irreducible semigroups of Frobenius number up to $132$, and of the numbers of semigroups of each genus up to $80$.
\end{abstract}

\begin{center}
{\small {\bf Keywords:}
  Numerical semigroup, semigroup tree, genus, Frobenius number, algorithm, parallelization
}

{\small {\bf 2020 Mathematics Subject Classification:} 06F05, 20M14, 05A99, 68W30, 11Y55 }
\end{center}

\section{Introduction}

A {\em numerical semigroup} is a cofinite monoid of the nonnegative integers. The cardinality of its complement is called the {\em genus}, the first nonzero nongap is called the {\em multiplicity}, and the maximum gap is the {\em Frobenius number} of the numerical semigroup. There have been many efforts to count numerical semigroups by the Frobenius number \cite{fundamentalgaps,DK,DKM} and by the genus \cite{NivaldoMedeiros, Komeda, Br:fibonacci, FromentinHivert, seeds1, rgd, seeds2, DelgadoEliahouFromentin, unleaved}.
The number of semigroups of genus $g$ is usually denoted $n_g$. Zhai \cite{Zhai} proved a conjecture in \cite{Br:fibonacci}, according to which, the number of semigroups of each genus grows asymptotically with the genus like the Fibonacci numbers. On the other side, Backelin \cite{Backelin} proved that the number of semigroups of each Frobenius number grows asymptotically in such a way that each number doubles the previous but one. Let $n_{F,m,g}$ be the number of numerical semigroups with Frobenius number $F$, multiplicity $m$ and genus $g$. We prove a formula for $n_{F,m,g}$ under the assumption that $m\geq\frac{F+1}{3}$.
It is well known that as the genus grows to infinity, the Frobenius number of a random numerical semigroup of a given genus approaches twice the multiplicity of the semigroup \cite{KaplanYe,Singhal_Frobenius_genus}. The same behavior is proved when the Frobenius number grows to infinity \cite{Backelin}. Hence, the range where our formula is satisfied covers most numerical semigroups, either of a given Frobenius number or of a given genus.
From our formula we prove a multiparameter exact version of the increasing behaviors just mentioned. Indeed, we prove that, if $m\geq\frac{F+2}{3}$, then $$n_{F,m,g}=n_{F-2,m-1,g-1}+n_{F-2,m-1,g-2}.$$ A consequence of this equality is the result by Kaplan~\cite{Kaplan} that states that, if $N(m,g)$ is the number of numerical semigroups of multiplicity $m$ and genus $g$, then, under the hypothesis that $m$ and $g$ satisfy $2g \leq 3m-1$, it holds
$N (m, g) = N (m - 1, g - 1) + N (m - 1, g - 2).$ 

On the other hand, if $n_{F,m}$ is the number of numerical semigroups 
with Frobenius number $F$ and multiplicity $m$, then, from our formula, we deduce that, if $m\geq \frac{F+2}{3}$, then  $$n_{F,m}=2n_{F-2,m-1}.$$

The number of numerical semigroups of each genus up to $77$ is known today, thanks to many contributions \cite{NivaldoMedeiros, Komeda, Br:fibonacci, FromentinHivert, rgd, DelgadoEliahouFromentin, unleaved}
and the sequence can be looked up at the On-Line Encyclopedia of Integer Sequences 
({\tt https://oeis.org/A007323}) \cite{oeisA007323}.
It is generally computed by exploring the {\em semigroup tree} \cite{fundamentalgaps}, whose nodes are all numerical semigroups up to a given genus, whose root is ${\mathbb N}_0$, and where the parent of a non-trivial numerical semigroup $\Lambda$ with Frobenius number $F$ is the semigroup $\Lambda\cup\{F\}$.

The number of semigroups with each Frobenius number up to $39$ can be found in \cite{fundamentalgaps}, up to $60$ can be found in \cite{DKM}
and up to $62$ in \cite{DK}.
It has been studied in \cite{Backelin,Li}.
Martin Fuller computed the number of semigroups of each Frobenius number up to $100$. It can be looked up at the On-Line Encyclopedia of Integer Sequences 
({\tt https://oeis.org/A124506}) \cite{oeisA124506}.
From a different perspective, Singhal \cite{Singhal_Frobenius_genus} analyzes the distribution of the genus among numerical semigroups with a fixed Frobenius number.

The seeds algorithm \cite{seeds1,seeds2} is at the moment the most efficient algorithm to explore the semigroup tree. It encodes semigroups by long integers and operates bitwise on these integers. 
It can be parallelized by splitting semigroups by their first three jumps (the first one being the multiplicity, the second one being the difference of the second nonzero nongap and the multiplicity, and the third one being the difference between the third and second nonzero nongaps). Hence, apart from being efficient, it allows multiparameter counting by considering not only the genus, but also the multiplicity and the second and third jumps.
  Unfortunately, the current implementation does not scale up for genus larger than 64.

  The terms $n_{76}$ and $n_{77}$ were obtained in~\cite{unleaved} using another algorithm, the right-generators descendant algorithm (RGD). The RGD algorithm~\cite{rgd} encodes semigroups in arrays instead of long integers which makes it scalable, although a bit slower. In~\cite{unleaved}, the RGD descending algorithm is used to explore the so-called unleaved tree of numerical semigroups up to a given genus, which is a pruned tree excluding nodes that are guaranteed to be unnecessary for the calculation. The resulting algorithm may be called the RGD-unleaved algorithm.
The idea of pruning branches of the general
tree has previously been used in the context of the search of counterexamples of Wilf’s conjecture \cite{Wilf} (or to the related Eliahou’s conditions \cite{eliahou}) in \cite{DelgadoTrimming,DelgadoEliahouFromentin}, in which nodes whose descendants are known to verify the conjectures are not further explored.

In this work, we introduce the general leaf-discriminating trees as an extension of the unleaved tree used in the RGD-unleaved algorithm, now for counting semigroups either by the genus or by the Frobenius number.
The {\em genus-leaf-discriminating tree} is what was called the unleaved tree in \cite{unleaved}. The {\em Frobenius-leaf-discriminating tree} will contain only nodes with descendants with a given Frobenius number and its leaves will be exactly the semigroups of that given Frobenius number.
We will characterize the genus-leaf-discriminating tree
  and the Frobenius-leaf discriminating tree.
  Then we will explain how the pruned trees are constructed by means of deciding at each node which descendants should be in the pruned tree and which ones should not.

We adapted the seeds algorithm to explore the leaf-discriminating trees.
One advantadge of the seeds algorithm is that it descends only up to three levels above the leaves. This represents exploring only around $15\%$ of the nodes explored by the RGD-unleaved algorithm. 

We provide efficient ways to parallelize the algorithm by a detailed analysis of the abundance of the semigroups of each combination of the first three jumps. 
With this parallelization we obtained $n_{78}=76122842716469053$,
$n_{79}=123262186663081640$,
$n_{80}=199583757387067387$, as well as the number of semigroups of each Frobenius number up to $132$. Using the same algorithm we also computed the number of {\em irreducible} numerical semigroups, that is, semigroups that can not be obtained as a proper intersection of two semigroups, for Frobenius number up to $132$ (i.e., for genus up to $64$). 
In addition, we have (i) the multiparameter decomposition of all numbers $n_g$ up to $n_{80}$, by splitting by the multiplicity, and second and third jumps; (ii) the multiparameter decomposition of the number of semigroups of each Frobenius number up to $132$, by splitting by the genus and multiplicity.

The structure of the work is as follows.
In Section~\ref{s:multiparameter} we give a representation of the numerical semigroups whose multiplicity $m$ and Frobenius number $F$ satisfy $m\geq \frac{F+1}{3}$. From this representation we deduce the formula for the number of numerical semigroups with each $m$, $F$, and genus $g$, under the assumption that $m\geq \frac{F+1}{3}$.
We then deduce the exact two-step doubling behavior of the parametrized sequences in terms of the Frobenius number, as well as the Fibonacci-like behavior of the parametrized sequences in terms of the genus.

In Section~\ref{s:arbres} we introduce the leaf-discriminating trees.
In Section~\ref{s:algorithmg} we present a seeds-based algorithm for exploring the genus-leaf-discriminating tree and provide $n_{78},n_{79},n_{80}$.
On the other hand, in Section~\ref{s:algorithmF} we present a seeds-based algorithm for exploring the Frobenius-leaf-discriminating tree
and we give the number of numerical semigroups of each Frobenius number up to $132$.
Section~\ref{s:irred} is dedicated to the number of irreducible semigroups of each Frobenius number.
In Section~\ref{s:parallel} we explain our parallelization strategies when counting by the genus.

\section{Multiparameter exact Fibonacci growth by genus and two-step doubling growth by Frobenius number}
\label{s:multiparameter}

\subsection{Multiparameter counting by the Frobenius number, multiplicity, and genus}

Let ${\mathcal S}_{F,m}$ be the set of numerical semigroups of Frobenius number $F$ and multiplicity $m$ and let ${\mathcal S}_{F,m,g}$ be the set of numerical semigroups of Frobenius number $F$, multiplicity $m$, and genus $g$.
In this section we are interested in the cardinality of the sets ${\mathcal S}_{F,m}$ and ${\mathcal S}_{F,m,g}$, which we call $n_{F,m}$ and $n_{F,m,g}$, respectively.
As an illustrative example, in Table~\ref{tF24} we give the table with all the nonzero values $\#{\mathcal S}_{F=24,m,g}$, together with a right column with all the nonzero values $\#{\mathcal S}_{F=24,m}$. Similarly, in Table~\ref{tF25} and Table~\ref{tF26} we give the corresponding tables with Frobenius number 25, and 26. They will be used as paradigmatic examples all along the section.

\begin{table}
{\centering
  \resizebox{.7\textwidth}{!}{\begin{tabular}{||c||c|c|c|c|c|c|c|c|c|c|c|c||c||}
\hline\hline
 & $g=$13 & $g=$14 & $g=$15 & $g=$16 & $g=$17 & $g=$18 & $g=$19 & $g=$20 & $g=$21 & $g=$22 & $g=$23 & $g=$24 & total\\\hline\hline
$m=2$ &  &  &  &  &  &  &  &  &  &  &  &  & $0$ \\\hline
$m=3$ &  &  &  &  &  &  &  &  &  &  &  &  & $0$ \\\hline
$m=4$ &  &  &  &  &  &  &  &  &  &  &  &  & $0$ \\\hline
$m=5$ & $2$ & $5$ & $8$ & $9$ & $8$ & $6$ & $3$ & $1$ &  &  &  &  & $42$ \\\hline
$m=6$ &  &  &  &  &  &  &  &  &  &  &  &  & $0$ \\\hline
$m=7$ & $4$ & $12$ & $23$ & $32$ & $31$ & $23$ & $13$ & $5$ & $1$ &  &  &  & $144$ \\\hline
$m=8$ &  &  &  &  &  &  &  &  &  &  &  &  & $0$ \\\hline
$m=9$ & $4$ & $19$ & $45$ & $73$ & $87$ & $76$ & $49$ & $23$ & $7$ & $1$ &  &  & $384$ \\\hline
$m=10$ & $2$ & $14$ & $44$ & $83$ & $106$ & $97$ & $64$ & $29$ & $8$ & $1$ &  &  & $448$ \\\hline
$m=11$ & $1$ & $9$ & $36$ & $84$ & $126$ & $126$ & $84$ & $36$ & $9$ & $1$ &  &  & $512$ \\\hline
$m=12$ &  &  &  &  &  &  &  &  &  &  &  &  & $0$ \\\hline
$m=13$ & $1$ & $10$ & $45$ & $120$ & $210$ & $252$ & $210$ & $120$ & $45$ & $10$ & $1$ &  & $1024$ \\\hline
$m=14$ &  & $1$ & $9$ & $36$ & $84$ & $126$ & $126$ & $84$ & $36$ & $9$ & $1$ &  & $512$ \\\hline
$m=15$ &  &  & $1$ & $8$ & $28$ & $56$ & $70$ & $56$ & $28$ & $8$ & $1$ &  & $256$ \\\hline
$m=16$ &  &  &  & $1$ & $7$ & $21$ & $35$ & $35$ & $21$ & $7$ & $1$ &  & $128
$ \\\hline
$m=17$ &  &  &  &  & $1$ & $6$ & $15$ & $20$ & $15$ & $6$ & $1$ &  & $64$ \\\hline
$m=18$ &  &  &  &  &  & $1$ & $5$ & $10$ & $10$ & $5$ & $1$ &  & $32$ \\\hline
$m=19$ &  &  &  &  &  &  & $1$ & $4$ & $6$ & $4$ & $1$ &  & $16$ \\\hline
$m=20$ &  &  &  &  &  &  &  & $1$ & $3$ & $3$ & $1$ &  & $8$ \\\hline
$m=21$ &  &  &  &  &  &  &  &  & $1$ & $2$ & $1$ &  & $4$ \\\hline
$m=22$ &  &  &  &  &  &  &  &  &  & $1$ & $1$ &  & $2$ \\\hline
$m=23$ &  &  &  &  &  &  &  &  &  &  & $1$ &  & $1$ \\\hline
$m=24$ &  &  &  &  &  &  &  &  &  &  &  &  & $0$ \\\hline
$m=25$ &  &  &  &  &  &  &  &  &  &  &  & $1$ & $1$ \\\hline
\hline total &
$14$ & $70$ & $211$ & $446$ & $688$ & $790$ & $675$ & $424$ & $190$ & $58$ & $11$ & $1$ & $\Sigma=3578$
\\\hline\hline
\end{tabular}}
    \caption{Values of $\#{\mathcal S}_{F=24,m,g}$ and $\#{\mathcal S}_{F=24,m}$.}
\label{tF24}
}
\end{table}
\begin{table}
{\centering
  \resizebox{.8\textwidth}{!}{\begin{tabular}{||c||c|c|c|c|c|c|c|c|c|c|c|c|c||c||}
\hline\hline
 & $g=$13 & $g=$14 & $g=$15 & $g=$16 & $g=$17 & $g=$18 & $g=$19 & $g=$20 & $g=$21 & $g=$22 & $g=$23 & $g=$24 & $g=$25 & total\\\hline\hline
$m=2$ & $1$ &  &  &  &  &  &  &  &  &  &  &  &  & $1$ \\\hline
$m=3$ & $1$ & $1$ & $1$ & $1$ & $1$ &  &  &  &  &  &  &  &  & $5$ \\\hline
$m=4$ & $4$ & $5$ & $4$ & $4$ & $3$ & $2$ & $1$ &  &  &  &  &  &  & $23$ \\\hline
$m=5$ &  &  &  &  &  &  &  &  &  &  &  &  &  & $0$ \\\hline
$m=6$ & $8$ & $16$ & $25$ & $27$ & $24$ & $17$ & $10$ & $4$ & $1$ &  &  &  &  & $132$ \\\hline
$m=7$ & $6$ & $16$ & $31$ & $42$ & $46$ & $40$ & $27$ & $14$ & $5$ & $1$ &  &  &  & $228$ \\\hline
$m=8$ & $8$ & $26$ & $54$ & $76$ & $82$ & $68$ & $42$ & $19$ & $6$ & $1$ &  &  &  & $382$ \\\hline
$m=9$ & $8$ & $31$ & $71$ & $115$ & $140$ & $134$ & $99$ & $56$ & $24$ & $7$ & $1$ &  &  & $686$ \\\hline
$m=10$ & $4$ & $24$ & $69$ & $128$ & $170$ & $168$ & $126$ & $72$ & $30$ & $8$ & $1$ &  &  & $800$ \\\hline
$m=11$ & $2$ & $16$ & $58$ & $127$ & $189$ & $203$ & $161$ & $93$ & $37$ & $9$ & $1$ &  &  & $896$ \\\hline
$m=12$ & $1$ & $10$ & $45$ & $120$ & $210$ & $252$ & $210$ & $120$ & $45$ & $10$ & $1$ &  &  & $1024$ \\\hline
$m=13$ & $1$ & $11$ & $55$ & $165$ & $330$ & $462$ & $462$ & $330$ & $165$ & $55$ & $11$ & $1$ &  & $2048$ \\\hline
$m=14$ &  & $1$ & $10$ & $45$ & $120$ & $210$ & $252$ & $210$ & $120$ & $45$ & $10$ & $1$ &  & $1024$ \\\hline
$m=15$ &  &  & $1$ & $9$ & $36$ & $84$ & $126$ & $126$ & $84$ & $36$ & $9$ & $1$ &  & $512$ \\\hline
$m=16$ &  &  &  & $1$ & $8$ & $28$ & $56$ & $70$ & $56$ & $28$ & $8$ & $1$ &  & $256$ \\\hline
$m=17$ &  &  &  &  & $1$ & $7$ & $21$ & $35$ & $35$ & $21$ & $7$ & $1$ &  & $128$ \\\hline
$m=18$ &  &  &  &  &  & $1$ & $6$ & $15$ & $20$ & $15$ & $6$ & $1$ &  & $64$ \\\hline
$m=19$ &  &  &  &  &  &  & $1$ & $5$ & $10$ & $10$ & $5$ & $1$ &  & $32$ \\\hline
$m=20$ &  &  &  &  &  &  &  & $1$ & $4$ & $6$ & $4$ & $1$ &  & $16$ \\\hline
$m=21$ &  &  &  &  &  &  &  &  & $1$ & $3$ & $3$ & $1$ &  & $8$ \\\hline
$m=22$ &  &  &  &  &  &  &  &  &  & $1$ & $2$ & $1$ &  & $4$ \\\hline
$m=23$ &  &  &  &  &  &  &  &  &  &  & $1$ & $1$ &  & $2$ \\\hline
$m=24$ &  &  &  &  &  &  &  &  &  &  &  & $1$ &  & $1$ \\\hline
$m=25$ &  &  &  &  &  &  &  &  &  &  &  &  &  & $0$ \\\hline
$m=26$ &  &  &  &  &  &  &  &  &  &  &  &  & $1$ & $1$ \\\hline
\hline total &
$44$ & $157$ & $424$ & $860$ & $1360$ & $1676$ & $1600$ & $1170$ & $643$ & $256$ & $70$ & $12$ & $1$ & $\Sigma=8273$
\\\hline\hline
\end{tabular}}
  \caption{Values of $\#{\mathcal S}_{F=25,m,g}$ and $\#{\mathcal S}_{F=25,m}$.}
\label{tF25}
}
\end{table}
\begin{table}
{\centering
  \resizebox{.9\textwidth}{!}{\begin{tabular}{||c||c|c|c|c|c|c|c|c|c|c|c|c|c||c||}
\hline\hline
 & $g=$14 & $g=$15 & $g=$16 & $g=$17 & $g=$18 & $g=$19 & $g=$20 & $g=$21 & $g=$22 & $g=$23 & $g=$24 & $g=$25 & $g=$26 & total\\\hline\hline
$m=2$ &  &  &  &  &  &  &  &  &  &  &  &  &  & $0$ \\\hline
$m=3$ & $1$ & $1$ & $1$ & $1$ & $1$ &  &  &  &  &  &  &  &  & $5$ \\\hline
$m=4$ & $1$ & $2$ & $3$ & $4$ & $3$ & $2$ & $1$ &  &  &  &  &  &  & $16$ \\\hline
$m=5$ & $3$ & $6$ & $8$ & $9$ & $8$ & $6$ & $3$ & $1$ &  &  &  &  &  & $44$ \\\hline
$m=6$ & $3$ & $10$ & $18$ & $23$ & $23$ & $17$ & $10$ & $4$ & $1$ &  &  &  &  & $109$ \\\hline
$m=7$ & $7$ & $17$ & $28$ & $39$ & $42$ & $37$ & $26$ & $14$ & $5$ & $1$ &  &  &  & $216$ \\\hline
$m=8$ & $4$ & $18$ & $42$ & $65$ & $74$ & $64$ & $41$ & $19$ & $6$ & $1$ &  &  &  & $334$ \\\hline
$m=9$ & $8$ & $28$ & $66$ & $109$ & $136$ & $133$ & $99$ & $56$ & $24$ & $7$ & $1$ &  &  & $667$ \\\hline
$m=10$ & $4$ & $23$ & $64$ & $118$ & $160$ & $163$ & $125$ & $72$ & $30$ & $8$ & $1$ &  &  & $768$ \\\hline
$m=11$ & $2$ & $16$ & $58$ & $127$ & $189$ & $203$ & $161$ & $93$ & $37$ & $9$ & $1$ &  &  & $896$ \\\hline
$m=12$ & $1$ & $10$ & $45$ & $120$ & $210$ & $252$ & $210$ & $120$ & $45$ & $10$ & $1$ &  &  & $1024$ \\\hline
$m=13$ &  &  &  &  &  &  &  &  &  &  &  &  &  & $0$ \\\hline
$m=14$ & $1$ & $11$ & $55$ & $165$ & $330$ & $462$ & $462$ & $330$ & $165$ & $55$ & $11$ & $1$ &  & $2048$ \\\hline
$m=15$ &  & $1$ & $10$ & $45$ & $120$ & $210$ & $252$ & $210$ & $120$ & $45$ & $10$ & $1$ &  & $1024$ \\\hline
$m=16$ &  &  & $1$ & $9$ & $36$ & $84$ & $126$ & $126$ & $84$ & $36$ & $9$ & $1$ &  & $512$ \\\hline
$m=17$ &  &  &  & $1$ & $8$ & $28$ & $56$ & $70$ & $56$ & $28$ & $8$ & $1$ &  & $256$ \\\hline
$m=18$ &  &  &  &  & $1$ & $7$ & $21$ & $35$ & $35$ & $21$ & $7$ & $1$ &  & $128$ \\\hline
$m=19$ &  &  &  &  &  & $1$ & $6$ & $15$ & $20$ & $15$ & $6$ & $1$ &  & $64$ \\\hline
$m=20$ &  &  &  &  &  &  & $1$ & $5$ & $10$ & $10$ & $5$ & $1$ &  & $32$ \\\hline
$m=21$ &  &  &  &  &  &  &  & $1$ & $4$ & $6$ & $4$ & $1$ &  & $16$ \\\hline
$m=22$ &  &  &  &  &  &  &  &  & $1$ & $3$ & $3$ & $1$ &  & $8$ \\\hline
$m=23$ &  &  &  &  &  &  &  &  &  & $1$ & $2$ & $1$ &  & $4$ \\\hline
$m=24$ &  &  &  &  &  &  &  &  &  &  & $1$ & $1$ &  & $2$ \\\hline
$m=25$ &  &  &  &  &  &  &  &  &  &  &  & $1$ &  & $1$ \\\hline
$m=26$ &  &  &  &  &  &  &  &  &  &  &  &  &  & $0$ \\\hline
$m=27$ &  &  &  &  &  &  &  &  &  &  &  &  & $1$ & $1$ \\\hline
\hline total &
$35$ & $143$ & $399$ & $835$ & $1341$ & $1669$ & $1600$ & $1171$ & $643$ & $256$ & $70$ & $12$ & $1$ & $\Sigma=8175$
\\\hline\hline
\end{tabular}}
  \caption{Values of $\#{\mathcal S}_{F=26,m,g}$ and $\#{\mathcal S}_{F=26,m}$.}
\label{tF26}
}
\end{table}

Let $[a,b]=\{i\in{\mathbb Z}\mbox{ such that }a\leq i\leq b\}$ and
$[a,\infty)=\{i\in{\mathbb Z}\mbox{ such that }i\geq a\}$.
One first result that one can observe looking at the tables is the following lemma.

\begin{lemma}\label{polygon}
  \begin{enumerate}
    \item
      \begin{enumerate}
    \item If $m=1$, then $n_{F,m,g}=0$ for all $F\geq0$ and all $g\geq 0$.
    \item If $g<\frac{F+1}{2}$, then $n_{F,m,g}=0$ for all $m\geq 2$,
    \item If $g>F+1$, then $n_{F,m,g}=0$ for all $m\geq 2$,
    \item If $m=F-1$, then $n_{F,m,g}=\left\{\begin{array}{ll}1&\mbox{if }g=F-1\\0&\mbox{otherwise}\end{array}\right.$
    \item If $m=F$, then $n_{F,m,g}=0$ for all $g\geq 0$,
    \item If $m=F+1$, then $n_{F,m,g}=\left\{\begin{array}{ll}1&\mbox{if }g=F\\0&\mbox{otherwise}\end{array}\right.$
    \item If $m>F+1$, then $n_{F,m,g}=0$ for all $g\geq 0$,
      \end{enumerate}
    \item If $\frac{F+1}{2}\leq g\leq F+1$ and $2\leq m\leq F-1$, then
    \begin{enumerate}
    \item If $g=\lfloor\frac{m-1}{m}(F+1)\rfloor$, then $n_{F,m,g}=1$,
    \item If $g>\lfloor\frac{m-1}{m}(F+1)\rfloor$, then $n_{F,m,g}=0$,
    \item If $g=m$, then $n_{F,m,g}=1$,
    \item If $g<m$, then $n_{F,m,g}=0$.
    \end{enumerate}
    \end{enumerate}
\end{lemma}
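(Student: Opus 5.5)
Every item follows from elementary facts about a numerical semigroup $\Lambda$ with Frobenius number $F$, multiplicity $m$ and genus $g$. The gaps lie in $[1,F]$ and always include $[1,m-1]$ and $F$. By the symmetry argument, for every $i\in[0,F]$ at most one of $i$ and $F-i$ belongs to $\Lambda$, since otherwise $F\in\Lambda$. I will treat the degenerate cases first and then the polygon boundaries. The plan is to prove each item by either exhibiting the unique semigroup or bounding $g$.

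For item 1: if $m=1$ then $1\in\Lambda$, so $\Lambda={\mathbb N}_0$. By the convention that the multiplicity is the first nonzero nongap, ${\mathbb N}_0$ has multiplicity $1$ but no Frobenius number in $[0,\infty)$ (one may take $F=-1$), so $n_{F,1,g}=0$ for all $F\geq 0$. Item (b) is the classical bound $g\geq \frac{F+1}{2}$: pair $i$ with $F-i$ in $[0,F]$; each pair contains at least one gap and $0\in\Lambda$, so $F$ is a gap, and counting gives $2g\geq F+1$. Item (c) holds because gaps lie in $[1,F]$, giving $g\leq F$; this is stronger than needed.

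For (d)--(g) I compare $m$ with $F$. If $m>F+1$, then $F+1$ is a nonzero element smaller than $m$, which is impossible since every integer greater than $F$ lies in $\Lambda$. If $m=F+1$, then $\Lambda=\{0\}\cup[F+1,\infty)$, which is unique and has $g=F$. If $m=F$, then $F\in\Lambda$, a contradiction. If $m=F-1$, then $[1,F-2]$ are gaps, $F-1\in\Lambda$, $F$ is a gap, and $[F+1,\infty)\subseteq\Lambda$. This set is closed under addition because $2(F-1)\geq F+1$ when $F\geq 3$; the case $F=2$, $m=1$ is already excluded. Hence $\Lambda$ is unique with $g=F-1$.

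For item 2, with $2\leq m\leq F-1$, I bound $g$ using the multiples of $m$: every $km$ with $km\leq F$ is a nongap, and these are $\lfloor F/m\rfloor$ nonzero elements of $[1,F]$. So $g\leq F-\lfloor F/m\rfloor$. I would check the identity $F-\lfloor F/m\rfloor=\lfloor\frac{m-1}{m}(F+1)\rfloor$ using $F+1\not\equiv 0 \pmod m$, which holds because otherwise $F=km-1$ would be a gap with $F+1=km\in\Lambda$, which is consistent. I expect this identity to be the main delicate step. In the case $m\mid F$ the bound is attained differently, since $F$ must itself be a gap: if $m\mid F$, then $F\in\langle m\rangle$ contradicts $F$ being a gap, so $m\nmid F$, and the identity $F-\lfloor F/m\rfloor=\lfloor (m-1)(F+1)/m\rfloor$ then follows by writing $F=qm+r$ with $1\leq r\leq m-1$. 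Equality holds exactly when the nongaps below $F$ are precisely the multiples of $m$, so the maximizer is unique: $\Lambda=\langle m\rangle\cup[F+1,\infty)$, which is a semigroup. This gives (a) and (b). For the lower end, $[1,m-1]$ and $F>m$ are gaps, so $g\geq m$. Equality forces $\Lambda=\{0,m\}\cup\ldots$, with every integer in $[m,F-1]$ in $\Lambda$ and $F\notin\Lambda$. This is a semigroup only if the sums avoid $F$, which requires $2m>F$; this holds since $m\leq F-1$ and $g\geq\frac{F+1}{2}$ gives $m\geq\frac{F+1}{2}$. The semigroup is $\{0\}\cup[m,F-1]\cup[F+1,\infty)$, which is closed because $2m\geq F+1$. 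This proves (c) and (d).
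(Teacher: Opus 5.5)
Your route matches the paper's. You use the pairing $i\leftrightarrow F-i$ to get $2g\geq F+1$, and you identify the unique semigroup directly in the degenerate cases. For item 2 you use the same two extremal semigroups: $\langle m\rangle\cup[F+1,\infty)$ for the largest genus and $\{0\}\cup[m,F-1]\cup[F+1,\infty)$ for the smallest.

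There is a genuine gap in the existence half of 2(a). You obtain $m\nmid F$ from the assumed existence of some $\Lambda\in{\mathcal S}_{F,m}$. That is legitimate for the upper bound in 2(b) and for uniqueness. It is circular, however, when you then conclude that $\langle m\rangle\cup[F+1,\infty)$ has Frobenius number $F$.

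If $m\mid F$, then $F\in\langle m\rangle$ lies in every semigroup of multiplicity $m$. So ${\mathcal S}_{F,m}=\emptyset$ and $n_{F,m,g}=0$ for all $g$. Whenever $m\geq 3$, $m\mid F$ and $F\geq 2m$, the value $g=F-F/m=\lfloor\frac{m-1}{m}(F+1)\rfloor$ satisfies every hypothesis of item 2. Hence 2(a) is false as stated. For example, $F=6$, $m=3$, $g=4$ gives $n_{6,3,4}=0$, and $n_{24,6,20}=0$ in Table~\ref{tF24}.

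The paper's proof has the identical defect: its set $\{0\}\cup m[1,\lceil\frac{F+1}{m}-1\rceil]\cup[F+1,\infty)$ contains $F$ when $m\mid F$. The paper's own Lemma~\ref{zerorows} confirms these zeros. So 2(a) needs the extra hypothesis $m\nmid F$, and you should state it rather than appear to derive it. In 2(c) the condition is automatic, since $g=m\geq\frac{F+1}{2}$ forces $2m>F$.

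There are also two smaller points.

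\begin{itemize}
\item Your claim $F+1\not\equiv 0\pmod m$ is false: $\{0,3\}\cup[6,\infty)$ has $F=5$, $m=3$. It is also unnecessary. For $F=qm+r$ with any $0\leq r\leq m-1$ we have $\lceil\frac{F+1}{m}\rceil=q+1$, hence $\lfloor\frac{m-1}{m}(F+1)\rfloor=F+1-\lceil\frac{F+1}{m}\rceil=F-\lfloor\frac{F}{m}\rfloor$ unconditionally. So this identity is not the delicate step; divisibility of $F$ by $m$ is.
\item $F=2$ is not ``already excluded'' from 1(d). There the statement asserts $n_{2,1,1}=1$, which contradicts 1(a). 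Likewise 1(f) at $F=0$ asserts $n_{0,1,0}=1$. These items must be read for $F\geq 3$ and $F\geq 1$ respectively, as the paper's proof tacitly does too.
\end{itemize}
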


\begin{proof}
  \begin{enumerate} \item \begin{enumerate}\item $m=1$ implies that the Frobenius number is not defined or defined as $-1$.
     \item
       It is a consequence of the fact that any numerical semigroup satisfies $F+1\leq 2g$.
    \item It is a consequence of the fact that any semigroup satisfies
      $g\leq F+1$.
    \item The unique semigroup with $m=F-1$ is $\{0,F-1\}\cup[F+1,\infty)$, which has genus $F-1$.
    \item The multiplicity, which is a nongap, can not be the Frobenius number, which is a gap. Hencem $n_{a,a,g}=0$ for any $a$ and any $g$.
    \item The unique semigroup with $m=F+1$ is $\{0\}\cup[F+1,\infty)$, which has genus $F$.
    \item By definition of the Frobenius number, $F+1$ is a nongap, and, so, $F+1\geq m$.
  \end{enumerate}
\item  \begin{enumerate}
  \item If $m\leq F+1$, it is easy to check that
    $\{0\}\cup m[1,\lceil\frac{F+1}{m}-1\rceil]\cup[F+1,\infty)$ is the unique semigroup in ${\mathcal S}_{m,F,F-\lceil\frac{F+1}{m}-1\rceil}={\mathcal S}_{m,F,\lfloor F+1-\frac{F+1}{m}\rfloor}={\mathcal S}_{m,F,\lfloor\frac{m-1}{m}(F+1)\rfloor}$.
  \item It is a consequence of the fact that the previous semigroup has the largest possible genus among semigroups in ${\mathcal S}_{m,F}$. Indeed, any semigroup in ${\mathcal S}_{m,F}$ needs to contain it.
\item If $m=g$, by the hypotheses we have $F+1\leq 2g=2m$. Hence, it is easy to check that
    $\{0\}\cup[m,F-1]\cup[F+1,\infty)$ is a numerical semigroup, and it is, in fact, the unique semigroup in ${\mathcal S}_{m,F,m}={\mathcal S}_{m,F,\lfloor F+1-\frac{F+1}{m}\rfloor}={\mathcal S}_{m,F,\lfloor\frac{m-1}{m}(F+1)\rfloor}$.
\item It is a consequence of the fact that the previous semigroup has the smallest possible genus among semigroups in ${\mathcal S}_{m,F}$. Indeed, any semigroup in ${\mathcal S}_{m,F}$ is contained in it.
    \end{enumerate}      \end{enumerate}
      
    \end{proof}

As a consequence of Lemma~\ref{polygon}, for a fixed Frobenius number $F$, we have that $n_{F,m,g}$ may only be nonzero either for $g=F$ and $m=F+1$, in which case $n_{F,F+1,F}=1$, or inside the polygon delimited by the inequalities (i) $\frac{F+1}{2}\leq g\leq F-1$, (ii) $2\leq m\leq F-1$, and (iii) $m\leq g\leq\frac{m-1}{m}(F+1)$.
The next lemma characterizes the tuples $F,m,g$ for which $n_{F,m,g}=0$ inside this polygon.

For general sets of integers $A,B$ and an integer $n$ define $n+A=\{n+a:a\in A\}$, $n-A=\{n-a:a\in A\}$, $A+B=\{a+b:a\in A,b\in B\}$.

\begin{lemma}\label{zerorows}
  Let $F,m,g$ be integers such that $\frac{F+1}{2}\leq g\leq F-1$, $2\leq m\leq F-1$, and $m\leq g\leq\frac{m-1}{m}(F+1)$.
  Then $n_{F,m,g}=0$ if and only if $m$ divides $F$.
\end{lemma}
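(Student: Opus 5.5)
The direction ``$m$ divides $F$ implies $n_{F,m,g}=0$'' is immediate. If $m\mid F$, then $F=km$ with $k\geq 1$, so $F$ is a sum of copies of the multiplicity and lies in every semigroup with multiplicity $m$. This contradicts $F$ being a gap, so ${\mathcal S}_{F,m}=\emptyset$, and in particular $n_{F,m,g}=0$ for every $g$.

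For the converse, assume $m\nmid F$. I would build a one-parameter family of semigroups in ${\mathcal S}_{F,m}$ whose genus decreases by $0$ or $1$ at each step. For an integer $t$ with $2t>F$ and $t>m$, define
$$S_t=\{0\}\cup m{\mathbb N}\cup\{x\geq t:\ x\notin F-m{\mathbb N}_0\}.$$

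\textbf{Step 1: $S_t$ is a numerical semigroup with Frobenius number $F$ and multiplicity $m$.} Closure under addition has three cases.
\begin{itemize}
\item Two multiples of $m$: their sum is a multiple of $m$.
\item Two elements $x,y\geq t$: then $x+y\geq 2t>F$, and every integer larger than $F$ lies in $S_t$.
\item $x\geq t$ and $mj$ with $j\geq 1$: the only danger is $x+mj\in F-m{\mathbb N}_0$. That would force $x\in F-m{\mathbb N}_0$, which is excluded by the definition of $S_t$.
\end{itemize}
Next, $F\notin S_t$: it is not a multiple of $m$, and it is excluded from the third set. All integers above $F$ belong to $S_t$. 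Finally, $t>m$ guarantees that no nonzero element is smaller than $m$.

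\textbf{Step 2: the endpoints.} For $t=F+1$ we recover $\{0\}\cup m[1,\lceil\frac{F+1}{m}-1\rceil]\cup[F+1,\infty)$, whose genus is $\lfloor\frac{m-1}{m}(F+1)\rfloor$ by Lemma~\ref{polygon}. Passing from $t$ to $t-1$ adds at most the single element $t-1$, so the genus drops by at most one per step. Hence every genus value between the genus at the final $t$ and $\lfloor\frac{m-1}{m}(F+1)\rfloor$ is attained.

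\textbf{Step 3: the lower endpoint matches the polygon bound.} This is the step I expect to be the main obstacle. Take $t_0=\max(m+1,\lfloor F/2\rfloor+1)$; I must check that the genus of $S_{t_0}$ is at most $\max(m,\lceil\frac{F+1}{2}\rceil)$, which is the lower bound for $g$ in the hypotheses.

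\emph{Case $2m>F$.} Here $t_0=m+1$. Every element of $F-m{\mathbb N}$ is at most $F-m<m$, so none of them lies in $[m+1,F-1]$. Thus $S_{t_0}=\{0\}\cup[m,F-1]\cup[F+1,\infty)$, which has genus $m$.

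\emph{Case $2m\leq F$.} Here $t_0=\lfloor F/2\rfloor+1$. The gaps are of three kinds:
\begin{itemize}
\item the gap $F$ itself;
\item the $(t_0-1)-\lfloor\frac{t_0-1}{m}\rfloor$ non-multiples of $m$ in $[1,t_0-1]$;
\item the elements $F-mj$ with $j\geq1$ and $F-mj\geq t_0$, of which there are $\lfloor\frac{F-t_0}{m}\rfloor$.
\end{itemize}
Since $F-t_0=\lceil F/2\rceil-1\leq t_0-1$, the third count is at most the subtracted term $\lfloor\frac{t_0-1}{m}\rfloor$. So the genus is at most $t_0=\lceil\frac{F+1}{2}\rceil$.

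Combining the three steps, every $g$ in the allowed range is realized by some $S_t$, so $n_{F,m,g}\geq1$, which completes the equivalence.
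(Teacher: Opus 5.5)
Your proof is correct, and it uses the same two extremal semigroups as the paper. $S_{F+1}$ is the paper's $J_1$ (which is $I_1$ when $2m>F$), and $S_{t_0}$ is $I_2$ when $2m>F$ and $J_2$ when $2m\le F$.

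The real difference is how you fill in the intermediate genera. The paper asserts that every set $S$ with $J_1\subseteq S\subseteq J_2$ is a numerical semigroup, and then picks one of each size. That is harmless when $2m>F$, since any two nonzero elements then sum to at least $2m\ge F+1$. It fails when $2m\le F$. For $F=11$, $m=3$ one gets $J_1=\{0,3,6,9\}\cup[12,\infty)$ and $J_2=\{0,3,6,7,9,10\}\cup[12,\infty)$, yet $J_1\cup\{7\}$ is not closed because $3+7=10$.

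Your family is exactly $S_t=S_{F+1}\cup\bigl(S_{t_0}\cap[t,\infty)\bigr)$: you add elements of the upper semigroup only from the top down. That is what makes the case ``$x\ge t$ plus $mj$'' of your closure check go through. Adding at most one element per step then gives every genus by a discrete intermediate value argument.

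So your route is a rigorous repair of the paper's sandwich argument, and it also gives a single construction covering both cases. You only need an upper bound on the genus of $S_{t_0}$, whereas the paper computes it exactly.

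One small point: in Step 1 you should restrict to $t\le F+1$. For $t>F+1$ the claim that every integer above $F$ lies in $S_t$ is false. You only use $t\in[t_0,F+1]$, so nothing breaks.
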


\begin{proof}
  It is obvious that if $m$ divides $F$, then $n_{F,m,g}=0$ for any $g$.
  Now, suppose that $m$ does not divide $F$.

  If $m\geq\frac{F+1}{2}$, then $$I_1=\{0,m\}\cup[F+1,\infty)$$ is a numerical semigroup of genus $F-1$ and $$I_2=\{0\}\cup[m,F-1]\cup[F+1,\infty)$$ is a numerical semigroup of genus $m$.
    Any set $S$ such that $I_1\subseteq S\subseteq I_2$ will be a numerical semigroup with multiplicity $m$ and Frobenius number $F$. And there will be at least one such subset for each genus from $m$ to $F-1$. Hence, for $m\geq \frac{F+1}{2}$, we have $n_{F,m,g}\neq 0$ within the limits $m\leq g\leq F-1$.

  If $m<\frac{F+1}{2}$, then   
  $$J_1=\left\{0,m,2m,\dots,\left\lfloor\frac{F}{m}\right\rfloor m\right\}\cup[F+1,\infty)$$ is a numerical semigroup of genus $F-\lfloor\frac{F}{m}\rfloor=\lceil\frac{m-1}{m}F\rceil=\lfloor\frac{m-1}{m}(F+1)\rfloor$.
    Now define define $J_0=\left\{m,2m,\dots,    \left\lfloor    \frac{\lfloor\frac{F}{2}\rfloor}{m}    \right\rfloor m    \right\}$. Observe that
    $$F-J_0\subseteq \left[\left\lfloor\frac{F}{2}\right\rfloor+1,F-1\right].$$
    Indeed, if $j\in J_0$, then,
    $m\leq j\leq\left\lfloor\frac{\left\lfloor\frac{F}{2}\right\rfloor}{m}\right\rfloor m$. Notice that, if $F$ is odd, then
    $\left\lfloor\frac{\left\lfloor\frac{F}{2}\right\rfloor}{m}\right\rfloor m\leq\lfloor\frac{F}{2}\rfloor=\lfloor\frac{F-1}{2}\rfloor$. On the other hand, if $F$ is even, since $m$ does not divide $F$, then $m$ does not divide $\lfloor\frac{F}{2}\rfloor$ either and, hence, we have
    $\left\lfloor\frac{\left\lfloor\frac{F}{2}\right\rfloor}{m}\right\rfloor m\leq\lfloor\frac{F}{2}\rfloor-1=\lfloor\frac{F-1}{2}\rfloor$.
Hence, if $j\in J_0$, then,
$m\leq j\leq\left\lfloor\frac{F-1}{2}\right\rfloor$.
Now, $F-\left\lfloor\frac{F-1}{2}\right\rfloor\leq F-j\leq F-m$. Since, $F-\left\lfloor\frac{F-1}{2}\right\rfloor=\left\lfloor\frac{F}{2}\right\rfloor+1$, it follows that $F-j\in \left[\left\lfloor\frac{F}{2}\right\rfloor+1,F-1\right]$.    
    Now, we have that the set
    $$J_2=\{0\}\cup J_0\cup\left(\left[\left\lfloor\frac{F}{2}\right\rfloor+1,F-1\right]\setminus(F-J_0)\right)
    \cup[F+1,\infty)$$
      is a numerical semigroup of genus $\lfloor\frac{F}{2}\rfloor+1=\lceil\frac{F+1}{2}\rceil$.
      Any set $S$ such that $J_1\subseteq S\subseteq J_2$ will be a numerical semigroup with multiplicity $m$ and Frobenius number $F$. And there will be at least one such subset for each genus from $\lfloor\frac{m-1}{m}(F+1)\rfloor$
      to $\lceil\frac{F+1}{2}\rceil$. Hence, for $m< \frac{F+1}{2}$, we have $n_{F,m,g}\neq 0$ within the limits $\frac{F+1}{2}\leq g\leq \frac{m-1}{m}(F+1)$.
\end{proof}

Lemma~\ref{zerorows} can be checked in Table~\ref{tF24}, where the rows corresponding to $m=2,3,4,6,8,12$ are empty, while all the other rows are non-empty exactly inside the polygon given by the bounds on $g$.

\subsection{A representation of the numerical semigroups with $m\geq\frac{F+1}{3}$}

In this section we will study semigroups for which the multiplicity $m$ and the Frobenius number $F$ satisfy $m\geq \frac{F+1}{3}$. We will base all our results on the definition of the sets of subsets $B_d$ that follows. 
We want to remark that this definition is equivalent to that of the set of subsets denoted ${\mathcal A}_k$ in Zhao's reference \cite{Zhao}, with $k=d$. In that reference, Zhao defined a semigroup $\Lambda$ to be of type $(A,k)$, for $A\in{\mathcal A}_k$, if $\Lambda\cap[m,m+k]=m+A$, and showed that every numerical semigroup with $2m<F<3m$ is of type $(A,k)$ for a unique $k$ and $A\in{\mathcal A}_k$. Then he described the set of all semigroups with type $(A,k)$ for $A \in {\mathcal A}_k$, such that $2m<F<3m$, and used it to derive the number of such semigroups with a given genus $\gamma$.
  In a similar but slightly different approach, reference \cite{BrasKaplanSinghal} focuses on the classification of semigroups with $2m<F<3m$ in terms of
  their intersection with $[m,F-m]$ and with $[2m,F]$. In this case, the first intersection would correspond to the set $m+A$ for some $A\in{\mathcal A}_{F-2m}$ according to Zhao's definition.
  We will push Zhao's construction further in order to describe and count all semigroups with a given Frobenius number $F$ and a given multiplicity $m$,
and to count all semigroups with a given combination of parameters $F,m,\gamma$,
whenever $2m\leq F<3m$ (see Corrollary~\ref{cor} and Theorem~\ref{pot2}). Then, from these formulas we will derive the multiparameter versions of
the two-step doubling growth of the number of semigroups by Frobenius number
(Theorem~\ref{t:twostepdoubling}) and the Fibonacci-like growth of the number of semigroups by genus (Theorem~\ref{kaplanrefinat}).

For a positive integer $d$, define $B_d$ to be the set of subsets
\begin{eqnarray*}B_d&=&\{A\subseteq[1,d-1]\mbox{ such that }d\not\in A+A\}\\
  &=&\left\{A\subseteq[1,d-1] \mbox{ such that }\{t,d-t\}\not\subseteq A \mbox{ for all }t\in\left[1,\left\lfloor\frac{d}{2}\right\rfloor\right]\right\}
\end{eqnarray*}
      The subtle difference between this definition of $B_d$ and the definition of ${\mathcal A}_d$ in \cite{Zhao}, is that the subsets of ${\mathcal A}_d$ contain $0$. More precisely, $B_d=\{A\setminus\{0\}: A \in {\mathcal A}_d\}$. We chose not to include $0$ for simplifying the presentation.

Observe that,
if $A\in B_d$, then
\begin{equation}\label{boundcardinalityA}\#A\leq \left\lfloor\frac{d-1}{2}\right\rfloor
  \end{equation}
and
\begin{equation}\label{cardinalityB}\#B_d=3^{\lfloor\frac{d-1}{2}\rfloor}.\end{equation}
Indeed, in the case of $d$ even, $d/2$ can not belong to any $A\in B_d$, and,
apart from this,
for each $A\in B_d$, for each $t\in[1,\lfloor\frac{d-1}{2}\rfloor]$, either (i) $t\in A$, $d-t\not\in A$, (ii) $t\not\in A$, $d-t\in A$, (iii) $t\not\in A$, $d-t\not\in A$.
Conversely, for any choice of (i) $t\in A$, $d-t\not\in A$, (ii) $t\not\in A$, $d-t\in A$, (iii) $t\not\in A$, $d-t\not\in A$, for any choice of $t\in[1,\lfloor\frac{d-1}{2}\rfloor]$, we obtain a set $A\in B_d$. 

\renewcommand\span{{\mathrm {span}}}
For $A\in B_d$ define
\begin{eqnarray*}
  \span_d(A)&=&(A\cup(A+A))\cap[1,d-1]\\
  \sigma_d(A)&=&\#\span(S)
\end{eqnarray*}

It holds
\begin{equation}\label{sigmabounds}0\leq \sigma_d(A)\leq d-1.\end{equation} The
lower bound is attained by $A=\emptyset$.
If $d$ is odd, the upper bound is attained by $A=[1,\frac{d-1}{2}]$.
If $d$ is even and $d\geq 8$, the upper bound is attained by $A=[1,\frac{d-4}{2}]\cup\{\frac{d+2}{2}\}$. Indeed, $[1,\dots,d-4]\in\span_d(A)$, since the whole interval can be obtained as sums of pairs in $[1,\frac{d-4}{2}]$. Furthermore, (i) $d-3$ is either $\frac{d+2}{2}$ (if $d=8$) or $\frac{d-8}{2}+\frac{d+2}{2}\in\span_d(A)$ (if $d\geq 10$); (ii) $d-2=\frac{d-6}{2}+\frac{d+2}{2}\in\span_d(A)$ (if $d\geq 8$); (iii) $d-1=\frac{d-4}{2}+\frac{d+2}{2}\in\span_d(A)$.

It will be used later that, if $d=F-2m$, and $A\in B_d$, whenever $d\geq 0$, 
\begin{equation}\label{2m+span}
  ((2m+A)\cup((m+A)+(m+A)))\cap[2m+1,F-1]=2m+\span(A).
  \end{equation}

Define now
\begin{eqnarray*}
  \gamma_{d,F}(A)&=&F-2-\#A-\sigma_d(A)\\
\nu_{d,m}(A)&=&m-2-\sigma_d(A)  
\end{eqnarray*}
From the definitions of $\gamma_{d,F}(A)$ and $\nu_{d,m}(A)$, and by \eqref{boundcardinalityA}, one can deduce that, if $F-2m=d$, for any $A\in B_d$,
whenever $d\geq 0$, \begin{equation}\label{nulegamma}
 \nu_{d,m}(A)\leq \gamma_{d,F}(A).\end{equation} 

\begin{table}[ht]
\begin{center}
  \begin{tabular}{|l|l|l|l|l|l|}
    \hline
$A$ & $\#A$  & $\span_1(A)$ & $\sigma_{1}(A)$ & $\gamma_{1,F}(A)$ & $\nu_{1,m}(A)$
      \\
      \hline
$\emptyset$ & 0 & $\emptyset$     &0 & $F-2$ & $m-2$ \\
\hline
  \end{tabular}\end{center}
    \caption{Sets in $B_1$}
    \label{t1}
\end{table}
~
\begin{table}[ht]
  \begin{center}
  \begin{tabular}{|l|l|l|l|l|l|}
    \hline
$A$ & $\#A$  & $\span_2(A)$ & $\sigma_{2}(A)$ & $\gamma_{2,F}(A)$ & $\nu_{2,m}(A)$
      \\
      \hline
$\emptyset$ & 0 & $\emptyset$     &0 & $F-2$ & $m-2$ \\
\hline
  \end{tabular}\end{center}
      \caption{Sets in $B_2$}
    \label{t2}
\end{table}
~
\begin{table}[ht]
\begin{center}
  \begin{tabular}{|l|l|l|l|l|l|}
    \hline
$A$ & $\#A$  & $\span_3(A)$ & $\sigma_{3}(A)$ & $\gamma_{3,F}(A)$ & $\nu_{3,m}(A)$
      \\
      \hline
      $\emptyset$ & 0 & $\emptyset$ &0 & $F-2$ & $m-2$ \\
      $\{1\}$     & 1 & $\{1,2\}$   &2 & $F-5$ & $m-4$ \\
      $\{2\}$     & 1 & $\{2\}$   &1 & $F-4$ & $m-3$ \\
      \hline
  \end{tabular}\end{center}
    \caption{Sets in $B_3$}
    \label{t3}
\end{table}
~
\begin{table}[ht]
\begin{center}
  \begin{tabular}{|l|l|l|l|l|l|}
    \hline
$A$ & $\#A$  & $\span_4(A)$ & $\sigma_{4}(A)$ & $\gamma_{4,F}(A)$ & $\nu_{4,m}(A)$
      \\
      \hline
      $\emptyset$ & 0 & $\emptyset$ &0 & $F-2$ & $m-2$ \\
      $\{1\}$     & 1 & $\{1,2\}$   &2 & $F-5$ & $m-4$ \\
      $\{3\}$     & 1 & $\{3\}$   &1 & $F-4$ & $m-3$ \\
      \hline
  \end{tabular}\end{center}
    \caption{Sets in $B_4$}
    \label{t4}
\end{table}
~
\begin{table}[ht]
  \begin{center}
  \begin{tabular}{|l|l|l|l|l|l|}
    \hline
$A$ & $\#A$  & $\span_5(A)$ & $\sigma_{5}(A)$ & $\gamma_{5,F}(A)$ & $\nu_{5,m}(A)$
      \\
      \hline
      $\emptyset$ & 0 & $\emptyset$     &0 & $F-2$ & $m-2$ \\
      $\{1\}$     & 1 & $\{1,2\}$       &2 & $F-5$ & $m-4$ \\
      $\{2\}$     & 1 & $\{2,4\}$       &2 & $F-5$ & $m-4$ \\
      $\{3\}$     & 1 & $\{3\}$         &1 & $F-4$ & $m-3$ \\
      $\{4\}$     & 1 & $\{4\}$         &1 & $F-4$ & $m-3$ \\
      $\{1,2\}$     & 2 & $\{1,2,3,4\}$ &4 & $F-8$ & $m-6$ \\
      $\{1,3\}$     & 2 & $\{1,2,3,4\}$ &4 & $F-8$ & $m-6$ \\
      $\{2,4\}$     & 2 & $\{2,4\}$     &2 & $F-6$ & $m-4$ \\
      $\{3,4\}$     & 2 & $\{3,4\}$     &2 & $F-6$ & $m-4$ \\
      \hline
  \end{tabular}\end{center}
    \caption{Sets in $B_5$}
    \label{t5}
\end{table}
~
\begin{table}[ht]
\begin{center}
  \begin{tabular}{|l|l|l|l|l|l|}
    \hline
$A$ & $\#A$  & $\span_6(A)$ & $\sigma_{6}(A)$ & $\gamma_{6,F}(A)$ & $\nu_{6,m}(A)$
      \\
      \hline
      $\emptyset$ & 0 & $\emptyset$     &0 & $F-2$ & $m-2$ \\
      $\{1\}$     & 1 & $\{1,2\}$       &2 & $F-5$ & $m-4$ \\
      $\{2\}$     & 1 & $\{2,4\}$       &2 & $F-5$ & $m-4$ \\
      $\{4\}$     & 1 & $\{4\}$         &1 & $F-4$ & $m-3$ \\
      $\{5\}$     & 1 & $\{5\}$         &1 & $F-4$ & $m-3$ \\
      $\{1,2\}$     & 2 & $\{1,2,3,4\}$ &4 & $F-8$ & $m-6$ \\
      $\{1,4\}$     & 2 & $\{1,2,4,5\}$ &4 & $F-8$ & $m-6$ \\
      $\{2,5\}$     & 2 & $\{2,4,5\}$   &3 & $F-7$ & $m-5$ \\
      $\{4,5\}$     & 2 & $\{4,5\}$     &2 & $F-6$ & $m-4$ \\
      \hline
  \end{tabular}\end{center}
    \caption{Sets in $B_6$}
    \label{t6}
\end{table}
~
\begin{table}[ht]
\begin{center}
  \begin{tabular}{|l|l|l|l|l|l|}
    \hline
$A$ & $\#A$  & $\span_7(A)$ & $\sigma_{7}(A)$ & $\gamma_{7,F}(A)$ & $\nu_{7,m}(A)$
      \\
      \hline
$\emptyset$ & 0 & $\emptyset$     &0 & $F-2$ & $m-2$ \\
$\{1\}$     & 1 & $\{1,2\}$       &2 & $F-5$ & $m-4$ \\
$\{2\}$     & 1 & $\{2,4\}$       &2 & $F-5$ & $m-4$ \\
$\{3\}$     & 1 & $\{3,6\}$       &2 & $F-5$ & $m-4$ \\
$\{4\}$     & 1 & $\{4\}$         &1 & $F-4$ & $m-3$ \\
$\{5\}$     & 1 & $\{5\}$         &1 & $F-4$ & $m-3$ \\
$\{6\}$     & 1 & $\{6\}$         &1 & $F-4$ & $m-3$ \\
$\{1,2\}$   & 2 & $\{1,2,3,4\}$   &4 & $F-8$ & $m-6$ \\
$\{1,3\}$   & 2 & $\{1,2,3,4,6\}$ &5 & $F-9$ & $m-7$ \\
$\{1,4\}$   & 2 & $\{1,2,4,5\}$   &4 & $F-8$ & $m-6$ \\
$\{1,5\}$   & 2 & $\{1,2,5,6\}$   &4 & $F-8$ & $m-6$ \\
$\{2,3\}$   & 2 & $\{2,3,4,5,6\}$ &5 & $F-9$ & $m-7$ \\
$\{2,4\}$   & 2 & $\{2,4,6\}$     &3 & $F-7$ & $m-5$ \\
$\{2,6\}$   & 2 & $\{2,4,6\}$     &3 & $F-7$ & $m-5$ \\
$\{3,5\}$   & 2 & $\{3,5,6\}$     &3 & $F-7$ & $m-5$ \\
$\{3,6\}$   & 2 & $\{3,6\}$       &2 & $F-6$ & $m-4$ \\
$\{4,5\}$   & 2 & $\{4,5\}$       &2 & $F-6$ & $m-4$ \\
$\{4,6\}$   & 2 & $\{4,6\}$       &2 & $F-6$ & $m-4$ \\
$\{5,6\}$   & 2 & $\{5,6\}$       &2 & $F-6$ & $m-4$ \\
$\{1,2,3\}$ & 3 & $\{1,2,3,4,5,6\}$ &6 & $F-11$ & $m-8$ \\
$\{1,2,4\}$ & 3 & $\{1,2,3,4,5,6\}$ &6 & $F-11$ & $m-8$ \\
$\{1,3,5\}$ & 3 & $\{1,2,3,4,5,6\}$ &6 & $F-11$ & $m-8$ \\
$\{1,4,5\}$ & 3 & $\{1,2,4,5,6\}$   &5 & $F-10$ & $m-7$ \\
$\{2,3,6\}$ & 3 & $\{2,3,4,5,6\}$   &5 & $F-10$ & $m-7$ \\
$\{2,4,6\}$ & 3 & $\{2,4,6\}$       &3 & $F-8$ & $m-5$ \\
$\{3,5,6\}$ & 3 & $\{3,5,6\}$       &3 & $F-8$ & $m-5$ \\
$\{4,5,6\}$ & 3 & $\{4,5,6\}$       &3 & $F-8$ & $m-5$ \\
\hline
  \end{tabular}\end{center}
    \caption{Sets in $B_7$}
    \label{t7}
\end{table}
~
\begin{table}[ht]
\begin{center}
  \begin{tabular}{|l|l|l|l|l|l|}
    \hline
$A$ & $\#A$  & $\span_8(A)$ & $\sigma_{8}(A)$ & $\gamma_{8,F}(A)$ & $\nu_{8,m}(A)$
      \\
      \hline
$\emptyset$ & 0&$\emptyset$      & 0& $F-2$ & $m-2$ \\
$\{1\}$     & 1& $\{1,2\}$       & 2& $F-5$ & $m-4$ \\
$\{2\}$     & 1& $\{2,4\}$       & 2& $F-5$ & $m-4$ \\
$\{3\}$     & 1& $\{3,6\}$       & 2& $F-5$ & $m-4$ \\
$\{5\}$     & 1& $\{5\}$         & 1& $F-4$ & $m-3$ \\
$\{6\}$     & 1& $\{6\}$         & 1& $F-4$ & $m-3$ \\
$\{7\}$     & 1& $\{7\}$         & 1& $F-4$ & $m-3$ \\
$\{1,2\}$   & 2& $\{1,2,3,4\}$   & 4& $F-8$ & $m-6$ \\
$\{1,3\}$   & 2& $\{1,2,3,4,6\}$ & 5& $F-9$ & $m-7$ \\
$\{1,5\}$   & 2& $\{1,2,5,6\}$   & 4& $F-8$ & $m-6$ \\
$\{1,6\}$   & 2& $\{1,2,6,7\}$   & 4& $F-8$ & $m-6$ \\
$\{2,3\}$   & 2& $\{2,3,4,5,6\}$ & 5& $F-9$ & $m-7$ \\
$\{2,5\}$   & 2& $\{2,4,5,7\}$   & 4& $F-8$ & $m-6$ \\
$\{2,7\}$   & 2& $\{2,4,7\}$     & 3& $F-7$ & $m-5$ \\
$\{3,6\}$   & 2& $\{3,6\}$       & 2& $F-6$ & $m-4$ \\
$\{3,7\}$   & 2& $\{3,6,7\}$     & 3& $F-7$ & $m-5$ \\
$\{5,6\}$   & 2& $\{5,6\}$       & 2& $F-6$ & $m-4$ \\
$\{5,7\}$   & 2& $\{5,7\}$       & 2& $F-6$ & $m-4$ \\
$\{6,7\}$   & 2& $\{6,7\}$       & 2& $F-6$ & $m-4$ \\
$\{1,2,3\}$ & 3& $\{1,2,3,4,5,6\}$   & 6& $F-11$ & $m-8$ \\
$\{1,2,5\}$ & 3& $\{1,2,3,4,5,6,7\}$ & 7& $F-12$ & $m-9$ \\
$\{1,3,6\}$ & 3& $\{1,2,3,4,6,7\}$   & 6& $F-11$ & $m-8$ \\
$\{1,5,6\}$ & 3& $\{1,2,5,6,7\}$     & 5& $F-10$ & $m-7$ \\
$\{2,3,7\}$ & 3& $\{2,3,4,5,6,7\}$   & 6& $F-11$ & $m-8$ \\
$\{2,5,7\}$ & 3& $\{2,4,5,7\}$       & 4& $F-9$ & $m-6$ \\
$\{3,6,7\}$ & 3& $\{3,6,7\}$         & 3& $F-8$ & $m-5$ \\
      $\{5,6,7\}$ & 3& $\{5,6,7\}$         & 3& $F-8$ & $m-5$ \\
      \hline
    \end{tabular}\end{center}
    \caption{Sets in $B_8$}
    \label{t8}
\end{table}

To illustrate all these definitions, in Table~\ref{t1} to Table~\ref{t8} we provide the list of all sets in $B_1,B_2,\dots,B_8$, respectively, with their associated parameters. As explained, in all cases there are $3^{\lfloor\frac{d-1}{2}\rfloor}$ sets, that is, $\#B_1=\#B_2=1$, $\#B_3=\#B_4=3$, $\#B_5=\#B_6=9$.

\begin{lemma}\label{nubound}
  Let $F,m$ be nonnegative integers such that $\frac{F+1}{3}\leq m\leq \frac{F}{2}$,
  and let $d=F-2m$. Then, for any $A\in B_d$, it holds 
$\nu_{d,m}(A)\geq 0$.
\end{lemma}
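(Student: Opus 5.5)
The statement is $m-2-\sigma_d(A)\geq 0$, i.e.\ $\sigma_d(A)\leq m-2$ for every $A\in B_d$. The plan is to combine the universal upper bound \eqref{sigmabounds}, $\sigma_d(A)\leq d-1$, with the hypothesis $m\geq\frac{F+1}{3}$, rewritten in terms of $d=F-2m$.

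First I translate the hypothesis. From $3m\geq F+1=d+2m+1$ we get $m\geq d+1$, so $d\leq m-1$. The other hypothesis $m\leq F/2$ only ensures $d\geq 0$, which is what makes $B_d$ meaningful.

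Next I bound $\sigma_d(A)$. For $d\geq 1$, \eqref{sigmabounds} gives $\sigma_d(A)\leq d-1\leq m-2$. This is immediate from $\span_d(A)\subseteq[1,d-1]$. For $d=0$ the interval $[1,-1]$ is empty, so $\span_0(A)=\emptyset$ and $\sigma_0(A)=0$. Here $\nu_{0,m}(A)=m-2$, and we need $m\geq 2$. This holds in the relevant range: $F=2m$ with $m\geq 1$, and $m=1$ would give $F=2$ with multiplicity $1$, which is excluded (compare Lemma~\ref{polygon}, part 1(a)). If one insists on $m\geq0$ literally, I would note that $m\geq\frac{F+1}{3}$ together with $F=2m$ forces $m\geq1$. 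The degenerate case $m=1$ should be dismissed as not arising from a numerical semigroup, or by observing that the paper implicitly assumes $m\geq2$.

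The main obstacle is the boundary case just described, $d=0$ with small $m$; the generic case $d\geq1$ is a one-line chain of inequalities. I would therefore state explicitly that $m\geq 2$ (multiplicity of a nontrivial semigroup) and close the argument with $\sigma_d(A)\leq\max(d-1,0)\leq m-2$.
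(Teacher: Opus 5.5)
Your argument is the paper's: the bound $\sigma_d(A)\leq d-1$ from \eqref{sigmabounds} gives $\nu_{d,m}(A)\geq m-2-(F-2m-1)=3m-F-1\geq 0$, which is exactly your chain $\sigma_d(A)\leq d-1\leq m-2$. The paper does not treat $d=0$ separately, and there \eqref{sigmabounds} fails literally, so you are right that it needs a word; the better resolution, though, is to read $d=0$ as vacuous rather than adding the assumption $m\geq 2$, because $B_d$ is only defined for $d\geq 1$ and setting $B_0=\{\emptyset\}$ would make Theorem~\ref{pot2} give $n_{2m,m}=2^{m-2}$ instead of $0$.
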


\begin{proof}
  $\nu_{d,m}(A)=m-2-\sigma_d(A)$ which, by \eqref{sigmabounds}, is at least
  $3m-F-1$, and which, by the hypothesis, is at least $0$.
  \end{proof}

For any given integers $F,m$ ($F\geq 2m$) and for any $A\in B_{F-2m}$, define
$$M_{F,m}(A)=\{0,m\}\cup(m+A)\cup\{2m\}\cup(2m+\span_{F-2m}(A))\cup[F+1,\infty).$$

As an example, take $F=28$, $m=10$. Then $F-2m=8$ as in Table~\ref{t8}. Consider $A=\{2,5\}$. Then, as shown in Table~\ref{t8}, $\span_8(\{2,5\})=\{2,4,5,7\}$, and
  $$M_{28,10}(\{2,5\})=\{0,10\}\cup \{12,15\} \cup \{20\} \cup \{22,24,25,27\}\cup [29,\infty)$$

\begin{lemma}\label{MAVT}
  Let $F,m$ be nonnegative integers such that $\frac{F+1}{3}\leq m\leq \frac{F}{2}$,
  and let $d=F-2m$. For any $A\in B_d$, the following statements hold.
  \begin{enumerate}
  \item The sets in the union in the definition of $M_{F,m}(A)$ are disjoint.
  \item $M_{F,m}(A)$ is a numerical semigroup.
  \item The genus of $M_{F,m}(A)$ is $\gamma_{d,F}(A)$.
  \item Any numerical semigroup in ${\mathcal S}_{F,m,g}$ containing $m+A$, contains $M_{F,m}(A)$.
  \end{enumerate}
\end{lemma}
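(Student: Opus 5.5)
We prove the four items in order; everything follows from the interval bookkeeping $[0,F]=\{0\}\cup[1,m-1]\cup\{m\}\cup(m+[1,d-1])\cup\{m+d\}\cup\{2m\}\cup(2m+[1,d-1])\cup\{F\}$, using $F=2m+d$. The hypothesis $m\geq\frac{F+1}{3}$ gives $d=F-2m\leq m-1$, so $m+d<2m$. Throughout we use $A\subseteq[1,d-1]$ and $\span_d(A)\subseteq[1,d-1]$.

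For item 1, we check the blocks pairwise. The block $m+A$ lies in $[m+1,m+d-1]\subseteq[m+1,2m-1]$, and $2m+\span_d(A)$ lies in $[2m+1,F-1]$. Together with $\{0\}$, $\{m\}$, $\{2m\}$ and $[F+1,\infty)$, these are separated by the interval structure above, so the union is disjoint.

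For item 2, $M=M_{F,m}(A)$ contains $0$ and is cofinite, so only closure under addition needs checking. Take nonzero $x,y\in M$ with $x+y\leq F$.
\begin{itemize}
\item If either summand is at least $2m$, then $x+y\geq 3m\geq F+1$, so the sum lies in $[F+1,\infty)$.
\item Otherwise both summands lie in $\{m\}\cup(m+A)$, and $x+y\in 2m+(\{0\}\cup A\cup(A+A))$.
\item The value $2m$ belongs to $M$.
\item A value in $2m+[1,d-1]$ belongs to $2m+\span_d(A)$, which is exactly identity \eqref{2m+span}.
\item The value $2m+d=F$ is impossible, since $d\notin A+A$ by the definition of $B_d$ and $d\notin A$ because $A\subseteq[1,d-1]$.
\item Values above $F$ are in $M$.
\end{itemize}
So $M$ is closed under addition. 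Its multiplicity is $m$, since nothing in $M$ lies in $[1,m-1]$. Its Frobenius number is $F$, since $F\notin M$ and $[F+1,\infty)\subseteq M$.

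For item 3, we count the nongaps in $[0,F]$. By item 1 the blocks are disjoint, so there are $3+\#A+\sigma_d(A)$ of them, namely $0$, $m$, $2m$, the elements of $m+A$ and the elements of $2m+\span_d(A)$. Hence the genus is $F+1-3-\#A-\sigma_d(A)=\gamma_{d,F}(A)$.

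For item 4, let $\Lambda\in{\mathcal S}_{F,m,g}$ with $m+A\subseteq\Lambda$. Then $\Lambda$ contains $0$, $m$, $2m=m+m$ and $[F+1,\infty)$. It also contains $2m+A=m+(m+A)$ and $2m+(A+A)=(m+A)+(m+A)$. Intersecting these with $[2m+1,F-1]$ and applying \eqref{2m+span}, $\Lambda$ contains $2m+\span_d(A)$, and therefore $\Lambda\supseteq M_{F,m}(A)$.

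The main obstacle is the closure argument in item 2. It is the only place where both the inequality $3m\geq F+1$ and the defining condition $d\notin A+A$ of $B_d$ are used, and it needs the careful case split above by which block each summand lies in.
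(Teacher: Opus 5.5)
Your proof is correct and follows the paper's argument essentially step for step. Disjointness comes from the ordering of the blocks $\{0\}$, $\{m\}$, $m+A$, $\{2m\}$, $2m+\span_d(A)$, $[F+1,\infty)$. Closure uses $3m\geq F+1$, the condition $d\notin A+A$ and identity \eqref{2m+span}; you organize this check by the value of $x+y$ rather than by pairs of blocks, which is only cosmetic. The genus comes from direct counting, and item 4 uses \eqref{2m+span} again.

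One small slip: your displayed ``decomposition'' of $[0,F]$ omits $[m+d+1,2m-1]$ whenever $d\leq m-2$ (e.g.\ $19$ for $F=28$, $m=10$). So it is only an inclusion of ordered blocks, not a partition of $[0,F]$, but that ordering is all you actually use.
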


\begin{proof}
  \begin{enumerate}
  \item It is a consequence of the inequalities
    \begin{itemize}
    \item $m<\min(m+A)$,
    \item $\max(m+A)<m+d-1=F-m-1\leq 2m-1$ (where we used the hypothesis of the lemma),
    \item $2m<\min(2m+\span_d(A))$,
      \item $\max(2m+\span_d(A))\leq 2m+d-1=F-1$.
    \end{itemize}
  \item It is enough to prove that it is closed under addition.
    Indeed, one can check that
    \begin{itemize}
    \item $m+M_{F,m}(A)\subseteq M_{F,m}(A)$
    \item $(m+A)+(m+A)\subseteq 2m+(A+A)=\left((2m+(A+A))\cap[1,F-1]\right)\cup\left((2m+(A+A))\cap[F,\infty)\right)$,
      But,
      \begin{itemize}
      \item $(2m+(A+A))\cap[1,F-1]\subseteq 2m+\span_d(A)\subseteq M_{F,m}(A)$, by \eqref{2m+span},
      \item  $(2m+(A+A))\cap[F,\infty)\subset[F+1,\infty)\subseteq M_{F,m}(A)$, because $F\not\in 2m+(A+A)$, by definition of $B_d$.
        \item All elements in $(m+A)+(\{2m\}\cup(2m+\span_d(S)))$ are larger than or equal to $3m+1\geq F+1$, and, so, they belong to $M_{F,m}(A)$.
          \item All the sums of two elements belonging to $\{2m\}\cup(2m+\span_d(A))\cup [F+1,\infty)$ will be larger than or equal to $4m\geq F+1$, and so, all these sums belong to $M_{F,m}(A)$.
        \end{itemize}
    \end{itemize}
  \item By definition of $M_{F,m}(A)$, its genus is
    \begin{eqnarray*}g(M_{F,m}(A))&=&F-\#\left(\{m\}\cup(m+A)\cup\{2m\}\cup(2m+\span_d(A))\right)\\&=&F-1-\#A-1-\sigma_d(A)=\gamma_{d,F}(A).\end{eqnarray*}
    \item Any numerical semigroup in ${\mathcal S}_{F,m,g}$ containing $m+A$ will contain $\{0,m\}$, $m+A$, $\{2m\}$, $[F+1,\infty)$, and the union of $m+(m+A)=2m+A$ and $(m+A)+(m+A)$. If it contains the union of $2m+A$ and $(m+A)+(m+A)$, by \eqref{2m+span}, then it needs to contain $2m+\span_d(A)$. Hence, any numerical semigroup in ${\mathcal S}_{F,m,g}$ containing $m+A$, contains $M_{F,m}(A)$.
    \end{enumerate}
\end{proof}

One can check all these statements in the case of $M_{28,10}(\{2,5\})$. In particular, its genus is $20$, which coincides with $\gamma_{d,F}(\{2,5\})$, which, as predicted in Table~\ref{t8}, is $F-8$.

Now, consider a general semigroup $S$ of Frobenius number $F$, multiplicity $m$ and genus equal to $g$, that is, $S\in {\mathcal S}_{F,m,g}$. For instance take in ${\mathcal S}_{28,10,18}$ the semigroup
$$S=\{0,10,12,15,19,20,22,23,24,25,27\}\cup[29,\infty).$$
The next lemma proves that $S\cap[m+1,F-m-1]=m+A$ for some $A$ in $B_d$. In our example, $S\cap[11,17]=\{12,15\}=10+\{2,5\}$, with $\{2,5\}\in B_4.$

\begin{lemma}\label{AdeS}
  Let $F,m,g$ be nonnegative integers such that $\frac{F+1}{3}\leq m\leq \frac{F}{2}$, and let $d=F-2m$.   
  Let $S\in{\mathcal S}_{F,m,g}$, and let $A$ be such that 
  $S\cap[m+1,F-m-1]=m+A$. That is, let $A=-m+(S\cap[m+1,F-m-1])$. Then,
  \begin{enumerate}
  \item $A\in B_d$,
  \item $\gamma_{d,F}(A)\geq g$,
  \item $\nu_{d,m}(A)\geq \gamma_{d,F}(A)-g$.
    \end{enumerate}
\end{lemma}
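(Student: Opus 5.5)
The plan is to prove the three items in order. Each follows from closure of $S$ under addition, together with Lemma~\ref{MAVT} and the counting of elements of $S$ in $[1,F-1]$.

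\emph{Item 1.} By construction $A\subseteq[1,d-1]$, since $m+A\subseteq[m+1,F-m-1]=[m+1,m+d-1]$. Suppose $d\in A+A$, say $d=a+a'$ with $a,a'\in A$. Then $(m+a)+(m+a')=2m+d=F$. Since $m+a$ and $m+a'$ lie in $S$, closure gives $F\in S$, contradicting that $F$ is a gap. Hence $A\in B_d$.

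\emph{Item 2.} $S$ contains $m+A$, so by Lemma~\ref{MAVT}(4) we get $S\supseteq M_{F,m}(A)$. By Lemma~\ref{MAVT}(3), $M_{F,m}(A)$ has genus $\gamma_{d,F}(A)$. A larger semigroup has no more gaps, so $g\leq\gamma_{d,F}(A)$.

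\emph{Item 3.} This is the main step. Here $\gamma_{d,F}(A)-g$ equals $\#(S\setminus M_{F,m}(A))$, the number of extra nongaps of $S$ below $F$. I will locate where these extras can live.

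In $[1,m-1]$ there are none, since $m$ is the multiplicity. In $[m+1,F-m-1]$ there are none, because by the definition of $A$ the elements of $S$ there are exactly $m+A$. The element $F-m$ is a gap: otherwise $m+(F-m)=F\in S$. So every extra lies in $[F-m+1,F-1]$, outside $\{2m\}\cup(2m+\span_d(A))$.

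Next I check that $F-m+1>2m$ fails in general, so the counting must be done carefully. We have $F-m=m+d\leq 2m$, since $d\leq m$ by the hypothesis $m\geq\frac{F+1}{3}$, which gives $d=F-2m\leq m-1$. Thus $[F-m+1,F-1]$ has $m-1$ elements. It consists of $[F-m+1,2m-1]$ together with $\{2m\}$ and $[2m+1,F-1]$. In $M_{F,m}(A)$ this interval contains $2m$ and $2m+\span_d(A)$, which is $1+\sigma_d(A)$ elements; note $2m\geq F-m+1$ because $d\leq m-1$.

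Hence the number of extras is at most $(m-1)-1-\sigma_d(A)=m-2-\sigma_d(A)=\nu_{d,m}(A)$. The only care needed is the interval bookkeeping: verifying $2m\in[F-m+1,F-1]$ and that $2m+\span_d(A)\subseteq[2m+1,F-1]$, both of which follow from $1\leq d\leq m-1$. The degenerate case $d=0$ needs a separate one-line check, where $\span$ is empty and the count is the same.
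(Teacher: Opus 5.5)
Your proof is correct and is essentially the paper's argument: the same contradiction $F\in S$ for item 1, the same inclusion $M_{F,m}(A)\subseteq S$ from Lemma~\ref{MAVT} for item 2, and the same count of $[F-m+1,F-1]$ minus $\{2m\}\cup(2m+\mathrm{span}_d(A))$ for item 3. Your explicit exclusion of $[1,m-1]$, $[m+1,F-m-1]$ and $F-m$ is in fact more careful than the paper's intermediate claim $S\setminus M_{F,m}(A)\subseteq[2m,F-1]$, which fails in general (in the paper's own example $19\in S\setminus M_{28,10}(\{2,5\})$ while $2m=20$); the one slip in yours is the $d=0$ remark, where the count is not the same since $2m=F\notin[F-m+1,F-1]$, but that case is vacuous because $m\in S$ forces $2m\in S$, so $F\neq 2m$.
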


\begin{proof}
  \begin{enumerate}
    \item
  On one hand, $-m+(S\cap[m+1,F-m-1])\subseteq [1,F-2m-1]$. On the other hand, if $t$ and $F-2m-t$ both belong to $-m+(S\cap[m+1,F-m-1])$, then $t+m$ and $F-m-t$ both belong to $S\cap[m+1,F-m-1]$. Hence, $(t+m)+(F-m-t)=F$ belongs to $S$, which is a contradiction because $F$ is the Frobenius number of $S$.
\item 
  The inequality $\gamma_{d,F}(A)\geq g$ follows from the fact that $\gamma_{d,F}(A)$ and $g$ are the genera of $M_{F,m}(A)$ and $S$, respectively, while $M_{F,m}(A)\subseteq S$, by Lemma~\ref{MAVT}.
\item Observe that $\nu_{d,m}(A)=\#\left([F-m+1,F-1]\setminus M_{F,m}(A)\right)$. Indeed,
  $$\#\left([F-m+1,F-1]\cap M_{F,m}(A)\right)=\#\{2m\}+\#\span_d(A)=1+\sigma_d(A)$$
  and, so,
  $$\#\left([F-m+1,F-1]\setminus M_{F,m}(A)\right)=m-1-1-\sigma_d(A)=\nu_{d,m}(A).$$
  Now,
  $$S\setminus M_{F,m}(A)\subseteq [2m,F-1]\setminus M_{F,m}(A) \subseteq [F-m+1,F-1]\setminus M_{F,m}(A)$$
  and, so,
by Lemma~\ref{MAVT},
  $$\gamma_{d,F}(A)-g=\#\left(S\setminus M_{F,m}(A)\right)\leq \#\left([F-m+1,F-1]\setminus M_{F,m}(A)\right)=\nu_{d,m}(A).$$
  
  \end{enumerate}
  \end{proof}

The previous lemma suggests defining, for nonnegative integers $F,m,g$ with $F-2m\geq 0$, and for any $A\in B_{F-2m}$, $${\mathcal S}_{F,m,g}^A=\{S\in{\mathcal S}_{F,m,g}\mbox{ such that }S\cap[m+1,F-m-1]=m+A\}.$$
Now, the next corollary is a consequence of Lemma~\ref{AdeS}.

\begin{corollary}\label{cor}
  If $\frac{F+1}{3}\leq m\leq \frac{F}{2}$, 
  then
  $${\mathcal S}_{F,m}=\bigsqcup_{A\in B_{d}}\left(\bigsqcup_{g=\gamma_{d,F}(A)-\nu_{d,m}(A)}^{\gamma_{d,F}(A)}{\mathcal S}_{F,m,g}^A\right),$$
  where $d=F-2m$.
\end{corollary}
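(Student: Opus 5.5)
The plan is to show that the sets ${\mathcal S}^A_{F,m,g}$ partition ${\mathcal S}_{F,m}$ as $A$ runs over $B_d$ and $g$ runs over the indicated range. Throughout, $d=F-2m\geq 0$, which holds by the hypothesis $m\leq\frac F2$. The argument has three parts: both sides are contained in one another, and the union on the right is disjoint.

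First, the right-hand side is contained in the left. By definition, every ${\mathcal S}^A_{F,m,g}$ is a subset of ${\mathcal S}_{F,m,g}\subseteq{\mathcal S}_{F,m}$.

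Second, disjointness. Take a semigroup $S$ lying in both ${\mathcal S}^A_{F,m,g}$ and ${\mathcal S}^{A'}_{F,m,g'}$.
\begin{enumerate}
\item The genus of $S$ is unique, so $g=g'$.
\item From $m+A=S\cap[m+1,F-m-1]=m+A'$ we get $A=A'$.
\end{enumerate}
So distinct index pairs $(A,g)$ give disjoint sets, and all the unions are disjoint.

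Third, the left-hand side is contained in the right. Let $S\in{\mathcal S}_{F,m}$ and let $g$ be its genus, so $S\in{\mathcal S}_{F,m,g}$. Define $A=-m+(S\cap[m+1,F-m-1])$. The hypotheses $\frac{F+1}{3}\leq m\leq\frac F2$ are exactly those of Lemma~\ref{AdeS}, which gives:
\begin{enumerate}
\item $A\in B_d$, by part 1;
\item $g\leq\gamma_{d,F}(A)$, by part 2;
\item $g\geq\gamma_{d,F}(A)-\nu_{d,m}(A)$, by part 3.
\end{enumerate}
By construction $S\cap[m+1,F-m-1]=m+A$. Hence $S\in{\mathcal S}^A_{F,m,g}$ with $g$ in the stated range, and $S$ belongs to the right-hand side.

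No step is a real obstacle, since the content is already in Lemma~\ref{AdeS}. The one point to check is that the index range makes sense: the lower limit should not exceed the upper limit. This holds because $\nu_{d,m}(A)\geq 0$ by Lemma~\ref{nubound}. If some ${\mathcal S}^A_{F,m,g}$ in the range happen to be empty, that does not affect the set equality.
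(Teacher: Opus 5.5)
Your proof is correct and takes the paper's approach: the paper states the corollary as a direct consequence of Lemma~\ref{AdeS}, and you have written out that consequence. Both inclusions come from parts 1--3 of that lemma, and disjointness comes from uniqueness of the genus and of $A=-m+(S\cap[m+1,F-m-1])$.
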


\subsection{Counting numerical semigroups with $m\geq\frac{F+1}{3}$}

Define
$${\mathcal G}_{F,m,g}^A=\{G \mbox{ subset of }{[F-m+1,F-1]\setminus(\{2m\}\cup(2m+\span_d(A)))}\mbox{ such that }\#G=\gamma_{d,F}(A)-g\},$$
where, as before, $d=F-2m$.
\begin{lemma}\label{SG}
    Let $F,m,g$ be nonnegative integers such that $\frac{F+1}{3}\leq m\leq \frac{F}{2}$, and let $d=F-2m$. Let $A\in B_d$. Then,
  \begin{enumerate}
  \item ${\mathcal S}_{F,m,g}^A$ is in bijection with ${\mathcal G}_{F,m,g}^A$,
  \item $\#{\mathcal G}_{F,m,g}^A=\binom{\nu_{d,m}(A)}{\gamma_{d,F}(A)-g}$, which is well defined by Corollary~\ref{cor},
  \end{enumerate}
and, therefore, $\#{\mathcal S}_{F,m,g}^A=\binom{\nu_{d,m}(A)}{\gamma_{d,F}(A)-g}$. 
\end{lemma}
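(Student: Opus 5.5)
The bijection will be $S\mapsto G(S)=[F-m+1,F-1]\setminus\bigl(S\cup\{2m\}\cup(2m+\span_d(A))\bigr)$. Equivalently, $G(S)$ is the set of elements of $[F-m+1,F-1]$ that lie outside $M_{F,m}(A)$ but are added by $S$ are not; more precisely, I will work with the complementary description $S=M_{F,m}(A)\sqcup H$ with $H\subseteq W:=[F-m+1,F-1]\setminus(\{2m\}\cup(2m+\span_d(A)))$. Note $\#W=\nu_{d,m}(A)$, as computed in the proof of Lemma~\ref{AdeS}(3). I will set up the bijection with subsets $H$ of $W$ of size $\gamma_{d,F}(A)-g$ (the added nongaps, since each added element lowers the genus by one from $\gamma_{d,F}(A)$). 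This matches the definition of ${\mathcal G}^A_{F,m,g}$, which asks for subsets of $W$ of cardinality $\gamma_{d,F}(A)-g$. Part 2 is then immediate: $\#{\mathcal G}^A_{F,m,g}=\binom{\#W}{\gamma_{d,F}(A)-g}=\binom{\nu_{d,m}(A)}{\gamma_{d,F}(A)-g}$.

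\textbf{The forward map.} Given $S\in{\mathcal S}^A_{F,m,g}$, Lemma~\ref{MAVT}(4) gives $M_{F,m}(A)\subseteq S$. The proof of Lemma~\ref{AdeS}(3) shows $S\setminus M_{F,m}(A)\subseteq[F-m+1,F-1]\setminus M_{F,m}(A)=W$. I will make sure $W$ is exactly that set difference, using that $M_{F,m}(A)\cap[F-m+1,F-1]=\{2m\}\cup(2m+\span_d(A))$ by the disjointness in Lemma~\ref{MAVT}(1), since $F-m+1\le 2m$ by $3m\geq F+1$. So I put $H=S\setminus M_{F,m}(A)$. By Lemma~\ref{MAVT}(3), $\#H=\gamma_{d,F}(A)-g$, so $H\in{\mathcal G}^A_{F,m,g}$. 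The map is injective because $S=M_{F,m}(A)\sqcup H$.

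\textbf{Surjectivity: the main step.} Given any $H\subseteq W$ of the right size, I must show that $S=M_{F,m}(A)\cup H$ is a numerical semigroup in ${\mathcal S}^A_{F,m,g}$. The plan for each property is as follows.
\begin{itemize}
\item Closure under addition: every element of $H$ is at least $F-m+1$, and every nonzero element of $S$ is at least $m$. So any sum involving an element of $H$ is at least $F+1$ and lies in $[F+1,\infty)\subseteq S$. Sums not involving $H$ are handled by Lemma~\ref{MAVT}(2).
\item Frobenius number: $F\notin S$, since $H\subseteq[F-m+1,F-1]$ and $F\notin M_{F,m}(A)$. Together with $[F+1,\infty)\subseteq S$, this gives Frobenius number $F$.
\item Multiplicity: $H\subseteq[F-m+1,F-1]$ and $F-m+1>m$ (as $F\ge 2m$), so the multiplicity stays $m$.
\item Intersection condition: $S\cap[m+1,F-m-1]=m+A$, since $H$ lies above $F-m$ and the rest of $M_{F,m}(A)$ in that window is exactly $m+A$, with $2m>F-m-1$.
\item Genus: the genus is $\gamma_{d,F}(A)-\#H=g$.
\end{itemize}

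The only delicate point is the interval bookkeeping: the hypothesis $m\ge\frac{F+1}{3}$ guarantees both $2m\ge F-m+1$ (so $W$ is well placed) and $m+(F-m+1)>F$ (so sums with $H$ are harmless). I expect no real obstacle beyond verifying these inequalities carefully.
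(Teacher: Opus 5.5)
Your argument is correct and is essentially the paper's. The bijection is $S\mapsto S\setminus M_{F,m}(A)$. It is injective because $M_{F,m}(A)\subseteq S$. It is surjective because every element of $H$ is at least $F-m+1$, so all sums with a nonzero element land in $[F+1,\infty)$. The count then comes from $\#W=\nu_{d,m}(A)$. You are in fact more explicit than the paper in checking that the map lands in ${\mathcal G}^A_{F,m,g}$, and that $M_{F,m}(A)\cup H$ has Frobenius number $F$, multiplicity $m$, window $m+A$ and genus $g$.

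Two small fixes:
\begin{itemize}
\item Delete the opening formula $G(S)=[F-m+1,F-1]\setminus(S\cup\cdots)$. It is the set of gaps of $S$ in $W$, of size $\nu_{d,m}(A)-\gamma_{d,F}(A)+g$ rather than $\gamma_{d,F}(A)-g$, and it contradicts the map $H=S\setminus M_{F,m}(A)$ you actually use.
\item Justify $S\setminus M_{F,m}(A)\subseteq W$ directly, from $S\cap[0,F-m-1]=\{0,m\}\cup(m+A)$ and $F-m\notin S$, instead of citing the proof of Lemma~\ref{AdeS}. That proof passes through an inclusion into $[2m,F-1]$ which is false: for $S=\{0,5,7,10\}\cup[12,\infty)$ we have $7\in S\setminus M_{11,5}(\emptyset)$ but $7\notin[10,10]$.
\end{itemize}
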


\begin{proof}
  \begin{enumerate}
  \item
Define the map $$\begin{array}{rcl}\mu:{\mathcal S}_{F,m,g}^A&\longrightarrow&{\mathcal G}_{F,m,g}^A\\S&\mapsto&S\setminus M_{F,m}(A)\end{array}.$$
The map $\mu$ is injective. Indeed, if $S,S'\in{\mathcal S}_{F,m,g}^A$ and $S\neq S'$, then, since $M_{F,m}(A)\subseteq S,S'$, then $S\setminus M_{F,m}(A)\neq S'\setminus M_{F,m}(A)$.

Let us see that $\mu$ is exhaustive. Let us first see that, if $G\in{\mathcal G}_{F,m,g}^A$, then $M_{F,m}(A)\cup G$ is a numerical semigroup. Indeed,
on one hand, $G+G\subseteq M_{F,m}(A)\cup G$ because $\min(G+G)=2\min G\geq 2F-2m+2\geq F+1$, by the hypothesis on $m$.
    Let us see, on the other hand, that $M_{F,m}(A)+G\subseteq M_{F,m}(A)\cup G$. For $a\in M_{F,m}(A)$, it can be either $a=0$ or $a\geq m$. If $a=0$, then $a+G=G\subseteq M_{F,m}(A)\cup G$. Otherwise, $a+G\subseteq [a+\min G,\infty)\subseteq [F+1,\infty)\subseteq M_{F,m}(A)\cup G$. This concludes that $M_{F,m}(A)\cup G\in{\mathcal S}_{F,m,g}^A$. Exhaustivity follows now from the fact that $M_{F,m}(A)\cap G=\emptyset$ and, so, $\mu(M_{F,m}(A)\cup G)=G$. 
  
  \item Since $\{2m\}\cup(2m+\span_d(A))\subseteq [F-m+1,F-1]$, the cardinality of $[F-m+1,F-1]\setminus(\{2m\}\cup(2m+\span_d(A)))$ is $m-2-\sigma_d(A)=\nu_{d,m}(A)$.
    The result follows immediately. 
    \end{enumerate}
  \end{proof}

In the next theorem, we assume that $\binom{a}{b}=0$ whenever $b<0$ or $b>a$.

\begin{theorem}\label{pot2}
  If $\frac{F+1}{3}\leq m\leq \frac{F}{2}$, and, as ususal, $\frac{F+1}{2}\leq g\leq F$,
  then
  $$n_{F,m,g}=\sum_{\{A\in B_{d}\}}\binom{\nu_{d,m}(A)}{\gamma_{d,F}(A)-g}$$
and  
  $$n_{F,m}=\sum_{A\in B_d}2^{\nu_{d,m}(A)},$$
  where $d=F-2m$.
  \end{theorem}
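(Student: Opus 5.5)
The plan is to read off both formulas from the partition in Corollary~\ref{cor} together with the counts in Lemma~\ref{SG}.

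\emph{First formula.} Fix $g$. By Lemma~\ref{AdeS}, every $S\in{\mathcal S}_{F,m,g}$ determines a unique $A=-m+(S\cap[m+1,F-m-1])$, and this $A$ lies in $B_d$. Hence ${\mathcal S}_{F,m,g}=\bigsqcup_{A\in B_d}{\mathcal S}^A_{F,m,g}$, with disjointness coming from the uniqueness of $A$. This gives
$$n_{F,m,g}=\sum_{A\in B_d}\#{\mathcal S}^A_{F,m,g}.$$
Lemma~\ref{SG} states that $\#{\mathcal S}^A_{F,m,g}=\binom{\nu_{d,m}(A)}{\gamma_{d,F}(A)-g}$. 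This count rests on the bijection $S\mapsto S\setminus M_{F,m}(A)$ onto the $(\gamma_{d,F}(A)-g)$-subsets of a set of size $\nu_{d,m}(A)$, and Lemma~\ref{nubound} ensures $\nu_{d,m}(A)\geq 0$.

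The one point needing care is the range of $g$. By Lemma~\ref{AdeS}, ${\mathcal S}^A_{F,m,g}$ is empty whenever $g>\gamma_{d,F}(A)$ or $g<\gamma_{d,F}(A)-\nu_{d,m}(A)$. Under the stated convention $\binom{a}{b}=0$ for $b<0$ or $b>a$, the binomial also vanishes exactly then. So the sum may run over all $A\in B_d$ without any restriction on $g$. I also have to check that Lemma~\ref{SG} applies in the degenerate situations, for instance when ${\mathcal G}^A_{F,m,g}$ is empty. This holds because the bijection argument never used nonemptiness.

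\emph{Second formula.} Sum the first formula over $g$ and exchange the order of summation:
$$n_{F,m}=\sum_g n_{F,m,g}=\sum_{A\in B_d}\sum_{g}\binom{\nu_{d,m}(A)}{\gamma_{d,F}(A)-g}.$$
As $g$ runs over $[\gamma_{d,F}(A)-\nu_{d,m}(A),\gamma_{d,F}(A)]$, the index $k=\gamma_{d,F}(A)-g$ runs over $[0,\nu_{d,m}(A)]$, so each inner sum is $\sum_k\binom{\nu_{d,m}(A)}{k}=2^{\nu_{d,m}(A)}$.

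To justify summing over all integers $g$ rather than only over $\frac{F+1}{2}\leq g\leq F$, note that the terms outside this window are zero on both sides. Every numerical semigroup in ${\mathcal S}_{F,m}$ has genus in that window (Lemma~\ref{polygon}), so the left side is unaffected. For the right side, Lemma~\ref{SG} shows each $G$ yields an actual semigroup of genus $g$, so a nonzero binomial forces $g$ into the window. Alternatively, the whole second formula follows directly from Corollary~\ref{cor}: ${\mathcal S}_{F,m}$ is in bijection with $\bigsqcup_{A}\{G\subseteq [F-m+1,F-1]\setminus(\{2m\}\cup(2m+\span_d(A)))\}$, all subsets of a $\nu_{d,m}(A)$-set, which has $2^{\nu_{d,m}(A)}$ elements.

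I expect no real obstacle. The step needing most attention is the bookkeeping just described: confirming that the vanishing conventions for the binomials match exactly the empty sets ${\mathcal S}^A_{F,m,g}$, so the exchange of summations is legitimate.
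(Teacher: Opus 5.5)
Your proposal is correct and is essentially the paper's own proof. The first formula comes from the decomposition of Corollary~\ref{cor} combined with the count $\#{\mathcal S}^A_{F,m,g}=\binom{\nu_{d,m}(A)}{\gamma_{d,F}(A)-g}$ of Lemma~\ref{SG}, and the second comes from summing these binomials over $g$; your extra bookkeeping on vanishing binomials just makes explicit what the paper leaves implicit.

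One caveat applies equally to the paper: the argument tacitly needs $d=F-2m\geq 1$, which is where $B_d$ is defined. For $F=2m$ the set $M_{F,m}(\emptyset)$ contains $2m=F$, so the bijection of Lemma~\ref{SG} fails, and indeed $n_{2m,m}=0$, whereas reading $B_0=\{\emptyset\}$ the formula would give $2^{m-2}$.
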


\begin{proof}
  The first formula is a direct consequence of Corollary~\ref{cor} and Lemma~\ref{SG}.
  As for the second formula, 
  $$n_{F,m}=\sum_{A\in B_{d}}\sum_{g=\gamma_{d,F}(A)-\nu_{d,m}(A)}^{\gamma_{d,F}(A)}\binom{\nu_{d,m}(A)}{\gamma_{d,F}(A)-g}=\sum_{A\in B_{d}}\sum_{k=0}^{\nu_{d,m}(A)}\binom{\nu_{d,m}(A)}{k}=\sum_{A\in B_d}2^{\nu_{d,m}(A)}.$$
\end{proof}

As an example of the first formula, take $F=24$, $m=10$. In this case, $d=4$, and, considering the list of sets in $B_4$ shown in Table~\ref{t4}, 
$$n_{24,10,g}=\binom{m-2}{F-2-g}+\binom{m-4}{F-5-g}+\binom{m-3}{F-4-g}=\binom{8}{22-g}+\binom{6}{19-g}+\binom{7}{20-g}$$
Now, evaluating for instance at $g=15$ we obtain
$$n_{24,10,15}=\binom{8}{7}+\binom{6}{4}+\binom{7}{5}=8+15+21=44,$$
as can be checked in Table~\ref{tF24}.

As an example of the second formula, considering $F=24$ and $m=10$, and looking at Table~\ref{t4}, we get that, $n_{F,m}=2^{m-2}+2^{m-4}+2^{m-3}=2^8+2^6+2^7=256+64+128=448$, as can be cheched in Table~\ref{tF24}.

As a larger example, looking at Table~\ref{t7}, we get that, for any $m\geq 8$, and for $F=2m+7$,
\begin{eqnarray*}
  n_{F,m}&=&1\cdot 2^{m-2}+    3\cdot 2^{m-3}+    7\cdot 2^{m-4}+    6\cdot 2^{m-5}+    3\cdot 2^{m-6}+    4\cdot 2^{m-7}+    3\cdot 2^{m-8} \\&=& 2^{m-8} \cdot (64 + 3 \cdot 32 + 7\cdot 16+6\cdot 8+3\cdot 4+4\cdot 2+3\cdot 1) \\&=& 343 \cdot 2^{m-8}.
\end{eqnarray*}

On the other hand, looking at Table~\ref{t8}, we get that, for any $m\geq 9$, for $F=2m+8$,
\begin{eqnarray*}
 n_{F,m}&=&1\cdot 2^{m-2}+    3\cdot 2^{m-3}+    7\cdot 2^{m-4}+    4\cdot 2^{m-5}+   5\cdot 2^{m-6}+    3\cdot 2^{m-7}+    3\cdot 2^{m-8} + 1\cdot 2^{m-9}
  \\&=& 2^{m-9} \cdot (128 + 3 \cdot 64 + 7 \cdot 32 + 4\cdot 16+5\cdot 8+3\cdot 4+3\cdot 2+ 1) \\&=& 667 \cdot 2^{m-9}.
\end{eqnarray*}

Similarly, the same argument proves the formulas
$$\begin{array}{rcll}
n_{2m,m}&=&0&\\
n_{2m+1,m}&=&2^{m-2}&\mbox{if }m\geq 2\\
n_{2m+2,m}&=&2^{m-2}&\mbox{if }m\geq 3\\
n_{2m+3,m}&=&7\cdot 2^{m-4}&\mbox{if }m\geq 4\\
n_{2m+4,m}&=&7\cdot 2^{m-4}&\mbox{if }m\geq 5\\
n_{2m+5,m}&=&50\cdot 2^{m-6}&\mbox{if }m\geq 6\\
n_{2m+6,m}&=&96\cdot 2^{m-7}&\mbox{if }m\geq 7\\
n_{2m+7,m}&=&343\cdot 2^{m-8}&\mbox{if }m\geq 8\\
n_{2m+8,m}&=&667\cdot 2^{m-9}&\mbox{if }m\geq 9\\
&\vdots&&\\
\end{array}$$

These formulas can be checked in Table~\ref{tF25} and Table~\ref{tF26}, since one can check that
$$\begin{array}{rcll}
n_{26,13}&=&0&\\
n_{25,12}&=&2^{m-2}=2^{10}=1024&\\
n_{26,12}&=&2^{m-2}=2^{10}=1024&\\
n_{25,11}&=&7\cdot 2^{m-4}=7\cdot 2^7=896 &\\
n_{26,11}&=&7\cdot 2^{m-4}=7\cdot 2^7=896&\\
n_{25,10}&=&50\cdot 2^{m-6}=50\cdot 2^4=800&\\
n_{26,10}&=&96\cdot 2^{m-7}=96\cdot 2^3=768&\\
n_{25,9}&=&343\cdot 2^{m-8}=343\cdot 2=686&\\
n_{26,9}&=&667\cdot 2^{m-9}=667&\\
\end{array}$$

\subsection{Two-step-doubling growth by the Frobenius number}
  
Now we want to relate $n_{F,m}$ to $n_{F-2,m-1}.$
The next lemma gives a well known formula for $n_{F,m}$ for the case in which $m>\frac{F}{2}.$

\begin{lemma}\label{mgran}
    Let $F,m$ be nonnegative integers  with $m> \frac{F}{2}$. Then, $$n_{F,m}=2^{F-m-1}.$$
  \end{lemma}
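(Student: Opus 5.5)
The plan is to describe all semigroups in ${\mathcal S}_{F,m}$ explicitly when $m>\frac{F}{2}$, as in the proof of Lemma~\ref{zerorows} for the case $m\geq\frac{F+1}{2}$, and show they correspond bijectively to the subsets of $[m+1,F-1]$. That interval has $F-m-1$ elements, which gives the formula.

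First we dispose of the degenerate values of $m$ using Lemma~\ref{polygon}. If $m=F+1$, the only semigroup is $\{0\}\cup[F+1,\infty)$, and $2^{F-m-1}=2^0=1$. If $m=F-1$, the interval $[m+1,F-1]$ is empty and the unique semigroup is $\{0,F-1\}\cup[F+1,\infty)$, again giving $1$. The value $m=F$ is impossible, since the multiplicity is a nongap and $F$ is a gap. So the statement must be read for $m\neq F$ (and $m\leq F+1$), and for $m=F$ we flag that $n_{F,F}=0$ as in Lemma~\ref{polygon}. The main case is therefore $\frac{F}{2}<m\leq F-1$.

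In this case, any $S\in{\mathcal S}_{F,m}$ contains $0$, $m$ and $[F+1,\infty)$. It avoids $[1,m-1]$ because $m$ is the multiplicity, and it avoids $F$ because $F$ is the Frobenius number. Hence $S$ is determined by $G=S\cap[m+1,F-1]$, and the map $S\mapsto G$ is injective.

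For surjectivity, let $G\subseteq[m+1,F-1]$ be arbitrary and put
$$S=\{0,m\}\cup G\cup[F+1,\infty).$$
Any sum of two nonzero elements of $S$ is at least $2m\geq F+1$, because $m>\frac{F}{2}$ and $m$ is an integer. So all such sums lie in $[F+1,\infty)\subseteq S$, and $S$ is closed under addition. The complement of $S$ is finite. Its smallest nonzero element is $m$, and its largest gap is $F$, since $F\notin S$ and every integer above $F$ lies in $S$. Thus $S\in{\mathcal S}_{F,m}$.

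The map is therefore a bijection between ${\mathcal S}_{F,m}$ and the power set of $[m+1,F-1]$, so $n_{F,m}=2^{F-m-1}$. There is no real obstacle in this argument. The only delicate points are the inequality $2m\geq F+1$, which needs the integrality of $m$, and the careful treatment of the boundary values $m\in\{F,F+1\}$.
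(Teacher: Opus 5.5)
Your main case $\frac{F}{2}<m\leq F-1$ is the paper's argument. The paper only asserts that $S\mapsto S\cap[m+1,F-1]$ is a bijection onto the subsets of $[m+1,F-1]$. You add the verification it leaves implicit: $\{0,m\}\cup G\cup[F+1,\infty)$ is closed under addition because $2m\geq F+1$. That part is correct and complete.

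Your treatment of the boundary values, however, contains a false step. For $m=F+1$ one has $F-m-1=-2$, so the formula predicts $2^{-2}=\frac14$, not $2^0$, while $n_{F,F+1}=1$ by Lemma~\ref{polygon}. You appear to have computed $2^{\#[m+1,F-1]}$, but $\#[m+1,F-1]=F-m-1$ only holds for $m\leq F$.

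The same failure occurs elsewhere. You correctly noticed $n_{F,F}=0\neq 2^{-1}$, and for $m>F+1$ one has $n_{F,m}=0<2^{F-m-1}$. So the identity fails for every $m\geq F$, and reading the statement for $m\neq F$, $m\leq F+1$ is not enough.

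The correct repair is to add the hypothesis $m\leq F-1$. The paper's one-line proof tacitly assumes it, and it is stated explicitly in the analogous Lemma~\ref{fmgmgran}. It also holds wherever Lemma~\ref{mgran} is used, since Theorem~\ref{t:twostepdoubling} assumes $m<F-1$. With the $m=F+1$ claim dropped and that hypothesis made explicit, your proof is fine.
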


\begin{proof}
If $m> \frac{F}{2}$, the semigroups with Frobenius number $F$ and multiplicity $m$ are in bijection with the subsets of $[m+1,F-1]$, where the correspondence is given by assigning to a semigroup the subset of nongaps between $m+1$ and $F-1$. This proves the formula.
  \end{proof}

\begin{theorem}\label{t:twostepdoubling}
  Let $F,m$ be nonnegative integers. If $\frac{F+2}{3}\leq m<F-1$, then $$n_{F,m}=2n_{F-2,m-1}.$$
\end{theorem}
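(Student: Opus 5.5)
The plan is to split according to whether $m\leq \frac{F}{2}$ or $m>\frac{F}{2}$, and in each case reduce both $n_{F,m}$ and $n_{F-2,m-1}$ to an explicit formula that depends on $m$ only through a power of $2$.

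Observe that the pair $(F',m')=(F-2,m-1)$ satisfies $F'-2m'=F-2m$. In particular the relevant index $d=F-2m$ does not change, and the conditions $m\leq\frac F2$ and $m'\leq\frac{F'}{2}$ are equivalent, as are $m>\frac F2$ and $m'>\frac{F'}2$.

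\emph{Case $m>\frac F2$.} We have $m'=m-1>\frac{F-2}{2}=\frac{F'}{2}$, so Lemma~\ref{mgran} applies to both pairs. This gives $n_{F,m}=2^{F-m-1}$ and $n_{F-2,m-1}=2^{(F-2)-(m-1)-1}=2^{F-m-2}$, hence $n_{F,m}=2n_{F-2,m-1}$. The hypothesis $m<F-1$ is only used to stay in a meaningful range. When $m=F-1$ the formula still gives $n_{F,m}=1$, but $n_{F-2,F-2}=0$ by Lemma~\ref{polygon}, which explains why that value is excluded. I would also check that Lemma~\ref{mgran} genuinely requires $m'\leq F'-1$, or that its edge cases are harmless; the condition $m<F-1$ gives $m'<F'$, and $m'=F'-1$ still yields $2^0=1$, which is correct.

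\emph{Case $\frac{F+2}{3}\leq m\leq\frac F2$.} Apply the second formula of Theorem~\ref{pot2} to $(F,m)$; its hypothesis $m\geq\frac{F+1}{3}$ holds. This gives $n_{F,m}=\sum_{A\in B_d}2^{\nu_{d,m}(A)}$. For $(F',m')$, Theorem~\ref{pot2} requires $m-1\geq\frac{F-1}{3}$, i.e.\ $3m\geq F+2$, which is exactly the hypothesis $m\geq\frac{F+2}{3}$. It also requires $m'\leq \frac{F'}{2}$, which was noted above. So $n_{F-2,m-1}=\sum_{A\in B_d}2^{\nu_{d,m-1}(A)}$ over the same index set $B_d$. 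Since $\nu_{d,m}(A)=m-2-\sigma_d(A)$ depends on $m$ linearly with slope $1$, we have $\nu_{d,m}(A)=\nu_{d,m-1}(A)+1$ termwise, and doubling follows.

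The main obstacle is making sure Theorem~\ref{pot2} is legitimately applicable to $(F-2,m-1)$, including its implicit requirements. The first is nonnegativity of the exponents $\nu_{d,m-1}(A)$; by Lemma~\ref{nubound} this is exactly the condition $3(m-1)\geq F-1$, i.e.\ $m\geq\frac{F+2}{3}$. The second is the boundary subcase $d=0$, i.e.\ $F=2m$, where $B_0$ should be interpreted so that both sides are $0$ (consistent with $n_{2m,m}=0$ from Lemma~\ref{zerorows} or Lemma~\ref{polygon}). I would treat $d=0$ separately: $m$ divides $F$ and $m-1$ divides $F-2$, so both counts vanish and the identity holds trivially.
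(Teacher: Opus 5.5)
Your proof is correct and matches the paper's argument. It uses the same split into $m>\frac F2$ and $\frac{F+2}{3}\le m\le\frac F2$, applies Lemma~\ref{mgran} to both pairs in the first case, and in the second applies Theorem~\ref{pot2} to both pairs with the termwise identity $\nu_{d,m}(A)=1+\nu_{d,m-1}(A)$ over the common index set $B_d$. Your separate treatment of $d=0$ (i.e.\ $F=2m$, where both counts vanish) is a worthwhile extra: the paper defines $B_d$ only for $d\ge 1$, yet its proof applies Theorem~\ref{pot2} in that case without comment.
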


\begin{proof}
  Suppose first that $m>\frac{F}{2}$. By Lemma~\ref{mgran}, $$n_{F,m}=2^{F-m-1}.$$
  But, if $m>\frac{F}{2}$, we also have $m-1>\frac{F-2}{2}$ and, again, by Lemma~\ref{mgran} applied to $F-2$ and $m-1$, we get $$n_{F-2,m-1}=2^{F-m-2}.$$ Then, it follows that
  $$n_{F,m}=2n_{F-2,m-1}.$$
  Suppose now that $\frac{F+2}{3}\leq m\leq \frac{F}{2}$. By Theorem~\ref{pot2},
  $$n_{F,m}=\sum_{A\in B_d}2^{\nu_{d,m}(A)},$$
  where $d=F-2m$.
  But, if $\frac{F+2}{3}\leq m\leq \frac{F}{2}$, we also have $\frac{(F-2)+1}{3}\leq m-1\leq \frac{F-2}{2}$ and, again, by Theorem~\ref{pot2} applied to $F-2$ and $m-1$, we get $$n_{F-2,m-1}=\sum_{A\in B_d}2^{\nu_{d,m-1}(A)},$$
  where $d=(F-2)-2(m-1)=F-2m$ stays the same.

   Notice that $\nu_{d,m}(A)=m-2-\sigma_d(A)=1+(m-1)-2-\sigma_d(A)=1+\nu_{d,m-1}(A)$.
   Hence,
$$n_{F,m}=\sum_{A\in B_d}2^{\nu_{d,m}(A)}=\sum_{A\in B_d}2^{1+\nu_{d,m-1}(A)}=2\sum_{A\in B_d}2^{\nu_{d,m-1}(A)}=2 n_{F-2,m-1}.$$
  \end{proof}

The previous theorem can be checked looking at the last columns of
Table~\ref{tF24} and Table~\ref{tF26}.
For instance, $n_{26,10}=768$, which coincides with $2\cdot n_{24,9}=2\cdot 384$.
Table~\ref{taulaFm} gives all values of $n_{F,m}$ for $F$ up to $30$. The values where Theorem~\ref{t:twostepdoubling} holds are highlighted in bold.

We want to remark that the hypotheses of the theorem are necessary. Indeed,
(i) if $m=F-1$, then $n_{F,F-1}=1$, while $n_{F-2,F-2}=0$;
(ii) if $m=F$, then $n_{F,F}=0$, while $n_{F-2,F-1}=1$;
(iii) if $m=F+1$, then $n_{F,F+1}=1$, while $n_{F-2,F}=0$.
On the other hand, we find cases with $m<\frac{F+2}{2}$ for which the equality is not satisfied,
for instance:
(i) $n_{21,5}=31$, while $n_{19,4}=16$; (ii) $n_{21,4}=17$; while $n_{19,3}=4$. This is just to show different examples in which $n_{F,m}>2n_{F-2,m-1}$ and in which $n_{F,m}<2n_{F-2,m-1}$.

  \begin{table}
  \rotatebox{90}{\resizebox{.8\textheight}{!}{{\renewcommand{\arraystretch}{2}\begin{tabular}{||l|rrrrrrrrrrrrrrrrrrrrrrrrrrrrrr|r||}\hline\hline $F\setminus m$   & m=2  & m=3  & m=4  & m=5  & m=6  & m=7  & m=8  & m=9  & m=10  & m=11  & m=12  & m=13  & m=14  & m=15  & m=16  & m=17  & m=18  & m=19  & m=20  & m=21  & m=22  & m=23  & m=24  & m=25  & m=26  & m=27  & m=28  & m=29  & m=30  & m=31 & total \\\hline\hline
F=1 & 1 &&&&&&&&&&&&&&&&&&&&&&&&&&&&&& 1 \\\hline
 F=2 & 0 & 1 &&&&&&&&&&&&&&&&&&&&&&&&&&&&& 1 \\\hline
 F=3 & 1 & 0 & 1 &&&&&&&&&&&&&&&&&&&&&&&&&&&& 2 \\\hline
 F=4 & 0 & 1 & 0 & 1 &&&&&&&&&&&&&&&&&&&&&&&&&&& 2 \\\hline
 F=5 & 1 & 2 & 1 & 0 & 1 &&&&&&&&&&&&&&&&&&&&&&&&&& 5 \\\hline
 F=6 & 0 & 0 & 2 & 1 & 0 & 1 &&&&&&&&&&&&&&&&&&&&&&&&& 4 \\\hline
 F=7  & 1  & {\bf 2}  & {\bf 4}  & {\bf 2}  & 1  & 0  & 1 &&&&&&&&&&&&&&&&&&&&&&&& 11 \\\hline
F=8  & 0  & {\bf 2}  & {\bf 0}  & {\bf 4}  & {\bf 2}  & 1  & 0  & 1 &&&&&&&&&&&&&&&&&&&&&&& 10 \\\hline
F=9  & 1  & {\bf 0}  & {\bf 4}  & {\bf 8}  & {\bf 4}  & {\bf 2}  & 1  & 0  & 1 &&&&&&&&&&&&&&&&&&&&&& 21 \\\hline
F=10  & 0  & 2  & {\bf 4}  & {\bf 0}  & {\bf 8}  & {\bf 4}  & {\bf 2}  & 1  & 0  & 1 &&&&&&&&&&&&&&&&&&&&& 22 \\\hline
F=11  & 1  & 3  & {\bf 7}  & {\bf 8}  & {\bf 16}  & {\bf 8}  & {\bf 4}  & {\bf 2}  & 1  & 0  & 1 &&&&&&&&&&&&&&&&&&&& 51 \\\hline
F=12  & 0  & 0  & {\bf 0}  & {\bf 8}  & {\bf 0}  & {\bf 16}  & {\bf 8}  & {\bf 4}  & {\bf 2}  & 1  & 0  & 1 &&&&&&&&&&&&&&&&&&& 40 \\\hline
F=13  & 1  & 3  & 8  & {\bf 14}  & {\bf 16}  & {\bf 32}  & {\bf 16}  & {\bf 8}  & {\bf 4}  & {\bf 2}  & 1  & 0  & 1 &&&&&&&&&&&&&&&&&& 106 \\\hline
F=14  & 0  & 3  & 6  & {\bf 14}  & {\bf 16}  & {\bf 0}  & {\bf 32}  & {\bf 16}  & {\bf 8}  & {\bf 4}  & {\bf 2}  & 1  & 0  & 1 &&&&&&&&&&&&&&&&& 103 \\\hline
F=15  & 1  & 0  & 11  & {\bf 0}  & {\bf 28}  & {\bf 32}  & {\bf 64}  & {\bf 32}  & {\bf 16}  & {\bf 8}  & {\bf 4}  & {\bf 2}  & 1  & 0  & 1 &&&&&&&&&&&&&&&& 200 \\\hline
F=16  & 0  & 3  & 0  & 14  & {\bf 28}  & {\bf 32}  & {\bf 0}  & {\bf 64}  & {\bf 32}  & {\bf 16}  & {\bf 8}  & {\bf 4}  & {\bf 2}  & 1  & 0  & 1 &&&&&&&&&&&&&&& 205 \\\hline
F=17  & 1  & 4  & 12  & 22  & {\bf 50}  & {\bf 56}  & {\bf 64}  & {\bf 128}  & {\bf 64}  & {\bf 32}  & {\bf 16}  & {\bf 8}  & {\bf 4}  & {\bf 2}  & 1  & 0  & 1 &&&&&&&&&&&&&& 465 \\\hline
F=18  & 0  & 0  & 9  & 20  & {\bf 0}  & {\bf 56}  & {\bf 64}  & {\bf 0}  & {\bf 128}  & {\bf 64}  & {\bf 32}  & {\bf 16}  & {\bf 8}  & {\bf 4}  & {\bf 2}  & 1  & 0  & 1 &&&&&&&&&&&&& 405 \\\hline
F=19  & 1  & 4  & 16  & 31  & 57  & {\bf 100}  & {\bf 112}  & {\bf 128}  & {\bf 256}  & {\bf 128}  & {\bf 64}  & {\bf 32}  & {\bf 16}  & {\bf 8}  & {\bf 4}  & {\bf 2}  & 1  & 0  & 1 &&&&&&&&&&&& 961 \\\hline
F=20  & 0  & 4  & 0  & 0  & 48  & {\bf 96}  & {\bf 112}  & {\bf 128}  & {\bf 0}  & {\bf 256}  & {\bf 128}  & {\bf 64}  & {\bf 32}  & {\bf 16}  & {\bf 8}  & {\bf 4}  & {\bf 2}  & 1  & 0  & 1 &&&&&&&&&&& 900 \\\hline
F=21  & 1  & 0  & 17  & 31  & 75  & {\bf 0}  & {\bf 200}  & {\bf 224}  & {\bf 256}  & {\bf 512}  & {\bf 256}  & {\bf 128}  & {\bf 64}  & {\bf 32}  & {\bf 16}  & {\bf 8}  & {\bf 4}  & {\bf 2}  & 1  & 0  & 1 &&&&&&&&&& 1828 \\\hline
F=22  & 0  & 4  & 12  & 32  & 68  & 101  & {\bf 192}  & {\bf 224}  & {\bf 256}  & {\bf 0}  & {\bf 512}  & {\bf 256}  & {\bf 128}  & {\bf 64}  & {\bf 32}  & {\bf 16}  & {\bf 8}  & {\bf 4}  & {\bf 2}  & 1  & 0  & 1 &&&&&&&&& 1913 \\\hline
F=23  & 1  & 5  & 22  & 43  & 122  & 152  & {\bf 343}  & {\bf 400}  & {\bf 448}  & {\bf 512}  & {\bf 1024}  & {\bf 512}  & {\bf 256}  & {\bf 128}  & {\bf 64}  & {\bf 32}  & {\bf 16}  & {\bf 8}  & {\bf 4}  & {\bf 2}  & 1  & 0  & 1 &&&&&&&& 4096 \\\hline
F=24  & 0  & 0  & 0  & 42  & 0  & 144  & {\bf 0}  & {\bf 384}  & {\bf 448}  & {\bf 512}  & {\bf 0}  & {\bf 1024}  & {\bf 512}  & {\bf 256}  & {\bf 128}  & {\bf 64}  & {\bf 32}  & {\bf 16}  & {\bf 8}  & {\bf 4}  & {\bf 2}  & 1  & 0  & 1 &&&&&&& 3578 \\\hline
F=25  & 1  & 5  & 23  & 0  & 132  & 228  & 382  & {\bf 686}  & {\bf 800}  & {\bf 896}  & {\bf 1024}  & {\bf 2048}  & {\bf 1024}  & {\bf 512}  & {\bf 256}  & {\bf 128}  & {\bf 64}  & {\bf 32}  & {\bf 16}  & {\bf 8}  & {\bf 4}  & {\bf 2}  & 1  & 0  & 1 &&&&&& 8273 \\\hline
F=26  & 0  & 5  & 16  & 44  & 109  & 216  & 334  & {\bf 667}  & {\bf 768}  & {\bf 896}  & {\bf 1024}  & {\bf 0}  & {\bf 2048}  & {\bf 1024}  & {\bf 512}  & {\bf 256}  & {\bf 128}  & {\bf 64}  & {\bf 32}  & {\bf 16}  & {\bf 8}  & {\bf 4}  & {\bf 2}  & 1  & 0  & 1 &&&&& 8175 \\\hline
F=27  & 1  & 0  & 28  & 58  & 161  & 319  & 561  & {\bf 0}  & {\bf 1372}  & {\bf 1600}  & {\bf 1792}  & {\bf 2048}  & {\bf 4096}  & {\bf 2048}  & {\bf 1024}  & {\bf 512}  & {\bf 256}  & {\bf 128}  & {\bf 64}  & {\bf 32}  & {\bf 16}  & {\bf 8}  & {\bf 4}  & {\bf 2}  & 1  & 0  & 1 &&&& 16132 \\\hline
F=28  & 0  & 5  & 0  & 57  & 142  & 0  & 450  & 711  & {\bf 1334}  & {\bf 1536}  & {\bf 1792}  & {\bf 2048}  & {\bf 0}  & {\bf 4096}  & {\bf 2048}  & {\bf 1024}  & {\bf 512}  & {\bf 256}  & {\bf 128}  & {\bf 64}  & {\bf 32}  & {\bf 16}  & {\bf 8}  & {\bf 4}  & {\bf 2}  & 1  & 0  & 1 &&& 16267 \\\hline
F=29  & 1  & 6  & 30  & 77  & 241  & 337  & 850  & 1104  & {\bf 2249}  & {\bf 2744}  & {\bf 3200}  & {\bf 3584}  & {\bf 4096}  & {\bf 8192}  & {\bf 4096}  & {\bf 2048}  & {\bf 1024}  & {\bf 512}  & {\bf 256}  & {\bf 128}  & {\bf 64}  & {\bf 32}  & {\bf 16}  & {\bf 8}  & {\bf 4}  & {\bf 2}  & 1  & 0  & 1 && 34903 \\\hline
F=30  & 0  & 0  & 20  & 0  & 0  & 331  & 676  & 991  & {\bf 0}  & {\bf 2668}  & {\bf 3072}  & {\bf 3584}  & {\bf 4096}  & {\bf 0}  & {\bf 8192}  & {\bf 4096}  & {\bf 2048}  & {\bf 1024}  & {\bf 512}  & {\bf 256}  & {\bf 128}  & {\bf 64}  & {\bf 32}  & {\bf 16}  & {\bf 8}  & {\bf 4}  & {\bf 2}  & 1  & 0  & 1 & 31822 \\\hline
\hline\end{tabular}}}} 
  \caption{Number of semigroups with Frobenius number $F$ and multiplicity $m$, for $F$ from $1$ to $30$ and $2\leq m\leq F+1$.}
\label{taulaFm}
\end{table}

\subsection{Fibonacci-like growth by the genus}

Kaplan proves in \cite{Kaplan}
that, if $N(m,g)$ is the number of numerical semigroups of multiplicity $m$ and genus $g$, then, under the hypothesis that $m$ and $g$ satisfy $2g \leq 3m-1$, it holds $$N (m - 1, g - 1) + N (m - 1, g - 2) = N (m, g).$$
Here we prove a more refined version of this result, by taking also the Frobenius number $F$ of a numerical semigroup into account.

The next lemma gives a well known formula for $n_{F,m,g}$ for the case in which $m>\frac{F}{2}.$

\begin{lemma}\label{fmgmgran}
  Let $F,m$ be nonnegative integers with $\frac{F}{2}<m\leq F-1$.
  Then for any integer $g$,
  $$n_{F,m,g}=\binom{F-m-1}{g-m}.$$
\end{lemma}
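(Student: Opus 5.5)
The plan is to refine the bijection used in the proof of Lemma~\ref{mgran} so that it also tracks the genus. Assume $\frac{F}{2}<m\leq F-1$ and let $S\in{\mathcal S}_{F,m}$. By the definitions of multiplicity and Frobenius number, $S\cap[1,m-1]=\emptyset$, $m\in S$, $F\notin S$ and $[F+1,\infty)\subseteq S$. So $S$ is determined by $T=S\cap[m+1,F-1]$. The map $S\mapsto T$ is what we will use.

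First I check that every subset $T\subseteq[m+1,F-1]$ occurs. Set $S_T=\{0,m\}\cup T\cup[F+1,\infty)$. For any two nonzero elements $a,b$ of $S_T$ we have $a,b\geq m$, so $a+b\geq 2m\geq F+1$, using $2m>F$. Hence $a+b\in S_T$, and sums involving $0$ are trivial. Therefore $S_T$ is a numerical semigroup. Its multiplicity is $m$ and its Frobenius number is $F$, since $F\notin S_T$ and everything above $F$ is in $S_T$; here we use $m\leq F-1$, so that $m\neq F$ and $F$ really is a gap. Thus $S\mapsto T$ is a bijection between ${\mathcal S}_{F,m}$ and the subsets of $[m+1,F-1]$, a set with $F-m-1$ elements.

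Next I compute the genus in terms of $\#T$. The gaps of $S_T$ are $[1,m-1]$, which has $m-1$ elements, the element $F$, and $[m+1,F-1]\setminus T$. So
$$g=(m-1)+1+(F-m-1-\#T)=F-1-\#T.$$
Hence ${\mathcal S}_{F,m,g}$ corresponds to the subsets $T$ with $\#T=F-1-g$. Their number is $\binom{F-m-1}{F-1-g}$, and by the symmetry of binomial coefficients this equals $\binom{F-m-1}{(F-m-1)-(F-1-g)}=\binom{F-m-1}{g-m}$. Under the convention stated before Theorem~\ref{pot2}, this is $0$ when $g<m$ or $g>F-1$, which matches Lemma~\ref{polygon}.

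There is no real obstacle here. The only points needing care are the boundary conditions: $2m\geq F+1$ is what makes closure under addition automatic, and $m\leq F-1$ is what makes $F$ a gap. The symmetry step $\binom{n}{k}=\binom{n}{n-k}$ must also respect the zero convention when $g$ lies out of range.
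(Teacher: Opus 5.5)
Your proof is correct and uses essentially the same bijection as the paper, which assigns to each semigroup its intersection with $[m+1,F-1]$. The only difference is that the paper records the gaps in $[m+1,F-1]$, which directly number $g-m$, while you record the nongaps there and then apply $\binom{n}{k}=\binom{n}{n-k}$; as you note, this step is harmless under the stated zero convention.
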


\begin{proof}
  If $m> \frac{F}{2}$, the semigroups with Frobenius number $F$, multiplicity $m$ are in bijection with the subsets of $[m+1,F-1]$, of size $g-m$, where the correspondence is given by assigning to a semigroup the subset of gaps between $m+1$ and $F-1$. This proves the formula.
  \end{proof}

\begin{theorem}\label{kaplanrefinat}
  Let $F,m$ be nonnegative integers such that
  $\frac{F+2}{3}\leq m\leq F-1$. It holds
$$n_{F,m,g}=n_{F-2,m-1,g-1}+n_{F-2,m-1,g-2}.$$
\end{theorem}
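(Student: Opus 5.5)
The plan is to split the range $\frac{F+2}{3}\leq m\leq F-1$ into three regimes: $\frac{F}{2}<m\leq F-2$, $\frac{F+2}{3}\leq m\leq\frac{F}{2}$, and the boundary value $m=F-1$. In each regime I compare the two sides with the explicit formulas already available.

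\emph{Regime $\frac{F}{2}<m\leq F-2$.} Here $m-1>\frac{F-2}{2}$ and $m-1\leq (F-2)-1$. So Lemma~\ref{fmgmgran} applies both to $(F,m)$ and to $(F-2,m-1)$, giving
$n_{F,m,g}=\binom{F-m-1}{g-m}$ and $n_{F-2,m-1,g'}=\binom{F-m-2}{g'-m+1}$.
Taking $g'=g-1$ and $g'=g-2$, the right-hand side becomes $\binom{F-m-2}{g-m}+\binom{F-m-2}{g-m-1}$. By Pascal's rule this equals $\binom{F-m-1}{g-m}$. The convention $\binom{a}{b}=0$ outside $0\leq b\leq a$ keeps Pascal's rule valid for every integer $g$, since $F-m-2\geq 0$.

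\emph{Regime $\frac{F+2}{3}\leq m\leq\frac{F}{2}$.} As in the proof of Theorem~\ref{t:twostepdoubling}, the pair $(F-2,m-1)$ satisfies $\frac{(F-2)+1}{3}\leq m-1\leq\frac{F-2}{2}$, and it has the same $d=F-2m$. So Theorem~\ref{pot2} applies to both sides, with sums over the same index set $B_d$. From the definitions,
$\nu_{d,m}(A)=1+\nu_{d,m-1}(A)$ and $\gamma_{d,F}(A)=2+\gamma_{d,F-2}(A)$.
Hence for each $A\in B_d$ the summand $\binom{\nu_{d,m}(A)}{\gamma_{d,F}(A)-g}$ equals $\binom{\nu_{d,m-1}(A)+1}{\gamma_{d,F-2}(A)-(g-2)}$. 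Pascal's rule splits this as
$$\binom{\nu_{d,m-1}(A)}{\gamma_{d,F-2}(A)-(g-1)}+\binom{\nu_{d,m-1}(A)}{\gamma_{d,F-2}(A)-(g-2)},$$
which is legitimate because $\nu_{d,m-1}(A)\geq0$ by Lemma~\ref{nubound}. Summing over $A\in B_d$ and applying Theorem~\ref{pot2} to $(F-2,m-1)$ with genera $g-1$ and $g-2$ gives the identity. I also need Theorem~\ref{pot2} for all integers $g$, not only $\frac{F+1}{2}\leq g\leq F$. Outside that range both sides vanish by Lemma~\ref{polygon}, and every binomial also vanishes because $\gamma_{d,F}(A)-\nu_{d,m}(A)\geq\ldots$ lies inside it by Corollary~\ref{cor}. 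This bookkeeping is where I expect to spend the most care.

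\emph{Boundary $m=F-1$: the statement is false here as worded.} This is the real obstacle. By Lemma~\ref{polygon}(1d), $n_{F,F-1,g}$ equals $1$ at $g=F-1$. The right-hand side involves $n_{F-2,F-2,\cdot}$, which is identically $0$ by Lemma~\ref{polygon}(1e), or by (1a) when $F=3$. Concretely, $n_{3,2,2}=1$, from $\{0,2\}\cup[4,\infty)$, while $n_{1,1,1}+n_{1,1,0}=0$. So the identity cannot hold at $m=F-1$, and I would not claim a proof there.

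The argument I would write proves the theorem exactly for $\frac{F+2}{3}\leq m\leq F-2$, via the first two regimes. For $m=F-1$ I would include the explicit counterexample above and state that the hypothesis must read $m\leq F-2$. This matches the restriction $m<F-1$ in Theorem~\ref{t:twostepdoubling}, to which the present identity reduces on summing over $g$.
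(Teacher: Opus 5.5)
Your route is the paper's own. You split into $m>\frac F2$, handled by Lemma~\ref{fmgmgran} applied to $(F,m)$ and $(F-2,m-1)$ plus Pascal's rule, and $m\le\frac F2$, handled by Theorem~\ref{pot2} for both pairs over the common index set $B_d$ plus termwise Pascal's rule via $\nu_{d,m}=1+\nu_{d,m-1}$ and $\gamma_{d,F}=2+\gamma_{d,F-2}$. Your appeal to Lemma~\ref{nubound} to justify Pascal's rule is a point the paper leaves implicit.

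Your objection at $m=F-1$ is correct. There the paper's first case silently applies Lemma~\ref{fmgmgran} to $(F-2,F-2)$, which violates its hypothesis $m-1\le (F-2)-1$. Under the paper's binomial convention one also has $\binom{0}{0}\neq\binom{-1}{0}+\binom{-1}{-1}$. Concretely, $n_{F,F-1,F-1}=1$ while $n_{F-2,F-2,g}=0$ for all $g$ and every $F\ge 3$. This agrees with the paper's own remark after Theorem~\ref{t:twostepdoubling} that $n_{F,F-1}=1$ but $n_{F-2,F-2}=0$. So the identity holds exactly for $\frac{F+2}{3}\le m\le F-2$, as you say.

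One value needs separate treatment, in your proposal and in the paper alike. For even $F$ and $m=\frac F2$ you have $d=0$, but $B_d$ is only defined for $d\ge 1$. The literal reading $B_0=\{\emptyset\}$ would make Theorem~\ref{pot2} return $\binom{m-2}{F-2-g}$, which equals $1$ at $g=F-2$ even though $n_{F,F/2,g}=0$. So do not route this case through Theorem~\ref{pot2}; handle $m=\frac F2$ directly. Both sides vanish because $m\mid F$ and $m-1\mid F-2$.

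For the $g$-range bookkeeping you flagged, there is a cleaner argument than the one you sketched. The bijection in the proof of Lemma~\ref{SG} never uses $\frac{F+1}{2}\le g\le F$. Hence, for $d\ge1$, the first formula of Theorem~\ref{pot2} holds for every integer $g$ with the stated binomial convention.
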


\begin{proof}
If $m>\frac{F}{2}$, 
by Lemma~\ref{fmgmgran},
\begin{eqnarray*}n_{F,m,g}&=&\binom{F-m-1}{g-m}\\&=&\binom{F-m-2}{g-m}+\binom{F-m-2}{g-m-1}\\&=&\binom{(F-2)-(m-1)-1}{(g-1)-(m-1)}+\binom{(F-2)-(m-1)-1}{(g-2)-(m-1)}
\\&=&n_{F-2,m-1,g-1}+n_{F-2,m-1,g-2}\end{eqnarray*}

Suppose now that $m\leq\frac{F}{2}$.
  Since $\frac{F+1}{3}\leq m\leq \frac{F}{2}$, 
    by Theorem~\ref{pot2},
  $$n_{F,m,g}=\sum_{\{A\in B_{d}\}}\binom{\nu_{d,m}(A)}{\gamma_{d,F}(A)-g}$$
    and, since
$\frac{(F-2)+1}{3}\leq (m-1)\leq \frac{(F-2)}{2}$,
    by Theorem~\ref{pot2},
  $$n_{F-2,m-1,g-1}=\sum_{\{A\in B_{d}\}}\binom{\nu_{d,m-1}(A)}{\gamma_{d,F-2}(A)-(g-1)}$$
    and
$$n_{F-2,m-1,g-2}=\sum_{\{A\in B_{d}\}}\binom{\nu_{d,m-1}(A)}{\gamma_{d,F-2}(A)-(g-2)}$$
    Now,
  \begin{eqnarray*}
    n_{F-2,m-1,g-1}+n_{F-2,m-1,g-2}&=&\sum_{\{A\in B_{d}\}}\binom{\nu_{d,m-1}(A)}{\gamma_{d,F-2}(A)-(g-1)}+\sum_{\{A\in B_{d}\}}\binom{\nu_{d,m-1}(A)}{\gamma_{d,F-2}(A)-(g-2)}\\
    &=&\sum_{\{A\in B_{d}\}}\left(\binom{\nu_{d,m-1}(A)}{\gamma_{d,F-2}(A)-(g-1)}+\binom{\nu_{d,m-1}(A)}{\gamma_{d,F-2}(A)-(g-2)}\right)\\
    &=&\sum_{\{A\in B_{d}\}}\binom{1+\nu_{d,m-1}(A)}{2+\gamma_{d,F-2}(A)-g}\\
    &=&\sum_{\{A\in B_{d}\}}\binom{1+(m-1)-2-\sigma_d(A)}{2+(F-2)-2-\#A-\sigma_d(A)-g}\\
    &=&\sum_{\{A\in B_{d}\}}\binom{m-2-\sigma_d(A)}{F-2-\#A-\sigma_d(A)-g}\\
    &=&\sum_{\{A\in B_{d}\}}\binom{\nu_{d,m}(A)}{\gamma_{d,F}(A)-g}=n_{F,m,g}
    \end{eqnarray*} 
\end{proof}

As an example,
$n_{24,9,17}+n_{24,9,18}=87+76=163=n_{26,10,19}$, which can be checked in Table~\ref{tF24} and Table~\ref{tF26}.

We notice that Kaplan's result is a consequence of Theorem~\ref{kaplanrefinat}. Indeed, suppose that $2g\leq3m-1$, then any numerical semigroup with multiplicity $m$ and genus $g$ will have Frobenius number $F\leq 2g-1\leq3m-2$, and, so, it will hold that $m\geq \frac{F+2}{3}$. Now, by Theorem~\ref{kaplanrefinat},
\begin{eqnarray*}N(m,g)&=&\sum_{F=g}^{2g-1}n_{F,m,g}=\sum_{F=g}^{2g-1}\left(n_{F-2,m-1,g-1}+n_{F-2,m-1,g-2}\right)\\
  &=&\sum_{F=g-2}^{2(g-1)-1}\left(n_{F,m-1,g-1}+n_{F,m-1,g-2}\right)\\
  &=&\sum_{F=g-2}^{2(g-1)-1}n_{F,m-1,g-1}+\sum_{F=g-2}^{2(g-2)-1+2}n_{F,m-1,g-2},\end{eqnarray*}
Now, since $n_{g-2,m-1,g-1}=0$, we have $\sum_{F=g-2}^{2(g-1)-1}n_{F,m-1,g-1}=\sum_{F=g-1}^{2(g-1)-1}n_{F,m-1,g-1}=N(m-1,g-1)$.
And, since $n_{2(g-2)-1+1,m-1,g-2}=n_{2(g-2)-1+2,m-1,g-2}=0$, we have
$\sum_{F=g-2}^{2(g-2)-1+2}n_{F,m-1,g-2}=\sum_{F=g-2}^{2(g-2)-1}n_{F,m-1,g-2}=N(m-1,g-2)$, and the result follows.

Notice that the hypotheses of the theorem are necessary. Indeed,
$n_{20,6,12}=7$, while $n_{18,5,11}+n_{18,5,10}=4+2.$
On the other hand, $n_{21,5,12}=5$, while $n_{19,4,11}+n_{19,4,10}=4+3$.
So, we have counterexamples with 
$n_{F,m,g}>n_{F-2,m-1,g-1}+n_{F-2,m-1,g-2}$ and counterexamples with $n_{F,m,g}<n_{F-2,m-1,g-1}+n_{F-2,m-1,g-2}.$

\begin{remark}\label{remkaplan}
  The second part of the previous proof, that is, the case when $F> 2m$, could have been proved in the language of Kunz coordinates, by means of the bijection used in Kaplan's proof \cite[Theorem 1]{Kaplan}.
  It is used there that semigroups with $\frac{F+2}{3}\leq m\leq F-1$ have Kunz coordinates $(k_1,\dots,k_{m-1})$ with $k_1,\dots,k_{m-2}\in\{1,2,3\}$ and $k_{m-1}\in \{1,2\}$. The genus of a semigroup with Kunz coordinates $(k_1,\dots,k_{m-1})$ is $k_1+\dots+k_{m-1}$.
  Let $K(m,g)$ denote the set of Kunz coordinates of semigroups with multiplicity $m$ and genus $g$.
  Now,
  if $\frac{F+2}{3}\leq m\leq F-1$, then the next mapping is a bijection.
  $$
  \begin{array}{rcl}
    K(m,g)&\to&K(m-1,g-1)\cup K(m-1,g-2)\\
(k_1,\dots,k_{m-1})&\mapsto&(k_1,\dots,k_{m-2})
  \end{array}$$
  This is due to the fact that if $(k_1,\dots,k_{m-1})\in K(m,g)$, with $k_1,\dots,k_{m-2}\in\{1,2,3\}$ and $k_{m-1}\in\{1,2\}$, then
  $(k_1,\dots,k_{m-2})$ is also a set of Kunz coordinates of length decreased by one and total weight decreased by $k_{m-2}$, which is either $1$ or $2$.
  On the other hand, if $(k_1,\dots,k_{m-2})\in K(m-1,g-1)$ with $k_1,\dots,k_{m-2}\in\{1,2,3\}$, then $(k_1,\dots,k_{m-2},1)\in K(m,g)$, while, if $(k_1,\dots,k_{m-2})\in K(m-1,g-2)$ with $k_1,\dots,k_{m-2}\in\{1,2,3\}$, then $(k_1,\dots,k_{m-2},2)\in K(m,g)$.
  Let us restrict now to the case of semigroups with $F>2m$. If a semigroup has Kunz coordinates $(k_1,\dots,k_{m-1})$, then the condition $F>2m$ implies that there must be at least one coordinate $k_i=3$. If $k_i$ is the coordinate with largest index $i$ such that $k_i=3$, then the Frobenius number of the corresponding semigroup is $2m+i$. Since $k_{m-1}\in\{1,2\}$, we deduce that $i<m-1$. The previous bijection will map $(k_1,\dots,k_{m-1})$ to $(k_1,\dots,k_{m-2})$ and the corresponding semigroups will have respectively Frobenius number $2m+i$ and $2(m-1)+i$, hence, decreasing by $2$.
  \end{remark}

\section{Leaf-discrimintating trees}
\label{s:arbres}

Given a numerical semigroup $\Lambda$, denote $F(\Lambda)$ its Frobenius number.
Let ${\mathscr{T}}$ be the semigroup tree, whose nodes are all numerical semigroups, whose root is ${\mathbb N}_0$, and where the parent of a non-trivial numerical semigroup $\Lambda$ is the semigroup $\Lambda\cup\{F(\Lambda)\}$.

Suppose that we can associate to each numerical semigroup a parameter $P$ such that it strictly increases along the semigroup tree.
It is the case, for instance, of the genus, the Frobenius number, or the least right generator (which we consider to be $\infty$ if it does not exist).

Given an integer $p$, consider the set ${\mathscr N}_{\leq p}$ of semigroups that contains all semigroups having descendants in ${\mathscr T}$ with the parameter $P$ equal to $p$.
If a semigroup $\Lambda$ is in this set, so is $\Lambda\cup\{F(\Lambda)\}$.
Then we can define the $P$-leaf-discriminating tree associated to a particular value $p$ as the tree whose nodes are all semigroups in ${\mathscr N}_{\leq p}$  and such that each non-trivial semigroup $\Lambda$ is connected to $\Lambda\cup\{F(\Lambda)\}$. If not empty, this is, indeed, a connected tree and its root is the trivial semigroup ${\mathbb N}_0$.

The leaves of the $P$-leaf-discriminating tree are then exactly all semigroups that have the parameter $P$ equal to $p$. Hence, counting semigroups by $P$ will be equivalent to counting the number of leaves of the $P$-leaf-discriminating trees.

The {\em left elements} of a numerical semigroup $\Lambda$ are those nongaps that are smaller than its Frobenius number. The set of left elements of $\Lambda$ is denoted $L(\Lambda)$.
Given a numerical semigroup $\Lambda$, let $\langle L(\Lambda)\rangle$
be the monoid generated by the left elements of $\Lambda$.
We notice that if $\Lambda'$ is the preceding sibling of $\Lambda$ in the semigroup tree or its parent in case there is no preceding sibling, then $\langle L(\Lambda)\rangle$ can be computed from $\langle L(\Lambda')\rangle$.
We use $\gcd(L(\Lambda))$
to denote the greatest common divisor of the left elements of $\Lambda$.
It is well know that a set of integers generates a numerical semigroup if and only if they are coprime.

For a fixed genus $g$, let us call ${\mathscr G}_g$ the genus-leaf-discriminating tree associated to $g$.
The next lemma is shown in \cite{unleaved}.

\begin{lemma}\label{l:charactg}
  The genus-leaf-discriminating tree ${\mathscr G}_g$ contains all nodes corresponding to semigroups of genus at most $g$ such that either
\begin{enumerate}\item $\gcd(L(\Lambda))\neq 1$,
\item $\gcd(L(\Lambda))=1$ and
$\langle L(\Lambda)\rangle$
    has genus larger than or equal to $g$.
  \end{enumerate}
\end{lemma}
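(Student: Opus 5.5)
We need to show that a semigroup $\Lambda$ of genus at most $g$ has a descendant of genus exactly $g$ in ${\mathscr T}$ if and only if either $\gcd(L(\Lambda))\neq 1$, or $\gcd(L(\Lambda))=1$ and $\langle L(\Lambda)\rangle$ has genus at least $g$. The plan rests on one structural fact about the semigroup tree. The descendants of $\Lambda$ are exactly the numerical semigroups $\Lambda'\subseteq\Lambda$ such that $\Lambda\setminus\Lambda'\subseteq[F(\Lambda)+1,\infty)$. Indeed, going from a node to a child removes a generator larger than the current Frobenius number, and the Frobenius number strictly increases along the way. Conversely, given such a $\Lambda'$, repeatedly adding back its Frobenius number walks up the tree and eventually reaches $\Lambda$, because every removed element exceeds $F(\Lambda)$. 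Every such $\Lambda'$ has the same left part $L(\Lambda)$ below $F(\Lambda)$, so it contains $\langle L(\Lambda)\rangle\cup\{0\}$.

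\emph{Case $\gcd(L(\Lambda))=1$.} Here $\langle L(\Lambda)\rangle$ is itself a numerical semigroup. First observe that $\langle L(\Lambda)\rangle$ is a descendant of $\Lambda$ (or $\Lambda$ itself). It is contained in $\Lambda$ and agrees with $\Lambda$ on $[0,F(\Lambda)]$, since every element of $\Lambda$ below $F(\Lambda)$ is a left element and the sums it generates lie in $\Lambda$. So it fits the descendant description above.

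\emph{Step 1: genus is bounded in this case.} Every descendant $\Lambda'$ contains $\langle L(\Lambda)\rangle$, so its genus is at most the genus $g_0$ of $\langle L(\Lambda)\rangle$. Hence if $g_0<g$, no descendant has genus $g$.

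\emph{Step 2: all intermediate genera are reached.} Conversely, suppose $g_0\geq g$. Along the path from $\Lambda$ down to $\langle L(\Lambda)\rangle$ the genus increases by exactly one at each step. Since the genus of $\Lambda$ is at most $g$, some node on this path has genus exactly $g$.

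\emph{Case $\gcd(L(\Lambda))=d\neq 1$.} We will show there are descendants of every genus at least that of $\Lambda$, by exhibiting an infinite chain. Note that $L(\Lambda)$ is empty exactly when $\Lambda={\mathbb N}_0$; that root case is handled the same way, using the chain of semigroups $\{0\}\cup[k,\infty)$. Otherwise, we will show $\Lambda$ always has a child, namely one whose left elements are still all multiples of $d$. Take $x$ to be the smallest element of $\Lambda$ larger than $F(\Lambda)$ that is not a multiple of $d$; such an $x$ exists because $\Lambda$ contains all large integers. We claim $x$ is a minimal generator of $\Lambda$. Suppose $x=a+b$ with $a,b\in\Lambda\setminus\{0\}$. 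If both $a,b$ are at most $F(\Lambda)$, they are left elements, so $d$ divides $x$, which contradicts the choice of $x$. Otherwise one of them, say $a$, exceeds $F(\Lambda)$, and $b\geq m>0$ forces $a<x$. Then $d$ must divide $a$ by minimality of $x$, so $b$ is not a multiple of $d$. That rules out $b<F(\Lambda)$, since such $b$ would be a left element and hence a multiple of $d$; so $b>F(\Lambda)$ as well, and $b<x$ contradicts minimality. Hence $x$ is a generator larger than $F(\Lambda)$, and removing it gives a child. The new left elements are the old ones plus the elements of $\Lambda$ strictly between $F(\Lambda)$ and $x$, all multiples of $d$ by the choice of $x$. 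So the gcd of the left elements of the child is still $d\neq1$.

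Iterating this produces an infinite chain of descendants whose genera increase by one at each step. That chain hits genus $g$, so $\Lambda$ lies in ${\mathscr G}_g$. The main obstacle is this generator-existence argument, in particular its careful case analysis on the summands $a$ and $b$.
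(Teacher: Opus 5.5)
Your proof is correct and complete. The paper itself does not prove this lemma; it quotes it from \cite{unleaved}. The natural comparison is therefore with its proof of the Frobenius analogue, Lemma~\ref{lemmakeepgoingF}.

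Your coprime case uses the same two ingredients as that proof. First, every descendant of $\Lambda$ contains $\langle L(\Lambda)\rangle$, which gives the obstruction. Second, an explicit descendant built from $\langle L(\Lambda)\rangle$ gives existence. There it is $\langle L(\Lambda)\rangle\cup[F+1,\infty)$; here it is $\langle L(\Lambda)\rangle$ itself, together with your observation that the genus rises by exactly one per edge, so the path to it passes through genus $g$.

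Your description of the descendants of $\Lambda$ adds something the paper leaves implicit. You characterize them as the numerical semigroups $\Lambda'\subseteq\Lambda$ with $\Lambda\setminus\Lambda'\subseteq[F(\Lambda)+1,\infty)$. This is exactly what justifies the step the paper states without comment in Lemma~\ref{lemmakeepgoingF}, namely that its witness really is a descendant.

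The non-coprime case has no counterpart there. Adding $[F+1,\infty)$ makes the Frobenius witness cofinite whatever the gcd is. In the genus setting, $\langle L(\Lambda)\rangle$ is a numerical semigroup only when the gcd is $1$, which is why you need the split.

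For that case you build an explicit infinite chain. You repeatedly remove the least element above the Frobenius number that is not a multiple of $d$. You correctly show this element is a minimal generator and that its removal keeps the gcd of the left elements equal to $d$. This gives a self-contained, constructive proof that such a node has descendants of every genus from its own onward. It does not depend on any outside result about nodes with infinitely many descendants; the only cost is the short case analysis on $x=a+b$.

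One small remark: for the ordinary semigroups $\{0\}\cup[c,\infty)$ with $c\geq 2$, one has $L(\Lambda)=\{0\}$ and $d=0$. Your construction still works there if ``multiple of $0$'' is read literally, which gives $x=c$. Alternatively, you can treat them with the same chain you use for the root.
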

The semigroups of genus $g$ are exactly the leaves of ${\mathscr G}_g$.
In Figure~\ref{fig:unleavedtreeg7} one can see ${\mathscr G}_7$.

\begin{figure}[ht]
  \input{inputfile-unleavedg-trim-list-g7}
  \caption{Trimmed tree ${{\mathcal G}_7}$}
  \label{fig:unleavedtreeg7}
  \end{figure}~\begin{figure}[h]
%
%
%
%

\providecommand\circledcolorednumb{}\renewcommand\circledcolorednumb[2]{\resizebox{0.065741\textwidth}{!}{\tikz[baseline=(char.center)]{\node[shape = circle,draw, inner sep = 2pt,fill=#1](char)    {\phantom{00}};\node[anchor=center] at (char.center) {\makebox(0,0){\large{{\sf #2}}}};}}}
\providecommand\dotscircles{}\renewcommand\dotscircles{\resizebox{0.065741\textwidth}{!}{\dots}}
\robustify{\circledcolorednumb}
\providecommand\nongap{}\renewcommand\nongap[1]{\circledcolorednumb{yellow}{#1}}
\providecommand\gap{}\renewcommand\gap[1]{\circledcolorednumb{black!20}{\phantom{#1}}}
\providecommand\generator{}\renewcommand\generator[1]{\circledcolorednumb{orange!80}{#1}}
{\begin{tikzpicture}[grow'=right, sibling distance=6.000000mm]\tikzset{level 1/.style={level distance=7.000000cm}}\tikzset{level 2/.style={level distance=8.750000cm}}\tikzset{level 3/.style={level distance=11.375000cm}}\tikzset{level 4/.style={level distance=14.000000cm}}\tikzset{level 5/.style={level distance=17.500000cm}}\tikzset{level 6/.style={level distance=17.850000cm}}\tikzset{level 7+/.style={level distance=18.900000cm}}\node (arbre) at (current page.north) {\adjustbox{max width=\textwidth,max height=.9\textheight} { \Tree[.{\begin{tabular}{c}{$\nongap{0}\generator{1}\nongap{2}\dotscircles$} \\\end{tabular}} [.{\begin{tabular}{c}{$\nongap{0}\gap{1}\generator{2}\generator{3}\dotscircles$} \\\end{tabular}} [.{\begin{tabular}{c}{$\nongap{0}\gap{1}\gap{2}\generator{3}\generator{4}\generator{5}\dotscircles$} \\\end{tabular}} [.{\begin{tabular}{c}{$\nongap{0}\gap{1}\gap{2}\gap{3}\generator{4}\generator{5}\generator{6}\generator{7}\dotscircles$} \\\end{tabular}} [.{\begin{tabular}{c}{$\nongap{0}\gap{1}\gap{2}\gap{3}\gap{4}\generator{5}\generator{6}\generator{7}\generator{8}\generator{9}\dotscircles$} \\\end{tabular}} [.{\begin{tabular}{c}{$\nongap{0}\gap{1}\gap{2}\gap{3}\gap{4}\gap{5}\generator{6}\generator{7}\generator{8}\generator{9}\generator{10}\generator{11}\dotscircles$} \\\end{tabular}} [.{\begin{tabular}{c}{$\nongap{0}\gap{1}\gap{2}\gap{3}\gap{4}\gap{5}\gap{6}\generator{7}\generator{8}\generator{9}\generator{10}\generator{11}\generator{12}\generator{13}\dotscircles$} \\\end{tabular}} [.{\begin{tabular}{c}{$\nongap{0}\gap{1}\gap{2}\gap{3}\gap{4}\gap{5}\gap{6}\gap{7}\generator{8}\generator{9}\generator{10}\generator{11}\generator{12}\generator{13}\generator{14}\generator{15}\dotscircles$} \\\end{tabular}} ]][.{\begin{tabular}{c}{$\nongap{0}\gap{1}\gap{2}\gap{3}\gap{4}\gap{5}\nongap{6}\gap{7}\generator{8}\generator{9}\generator{10}\generator{11}\nongap{12}\generator{13}\dotscircles$} \\\end{tabular}} ]][.{\begin{tabular}{c}{$\nongap{0}\gap{1}\gap{2}\gap{3}\gap{4}\nongap{5}\gap{6}\generator{7}\generator{8}\generator{9}\nongap{10}\generator{11}\dotscircles$} \\\end{tabular}} [.{\begin{tabular}{c}{$\nongap{0}\gap{1}\gap{2}\gap{3}\gap{4}\nongap{5}\gap{6}\gap{7}\generator{8}\generator{9}\nongap{10}\generator{11}\generator{12}\nongap{13}\dotscircles$} \\\end{tabular}} ]][.{\begin{tabular}{c}{$\nongap{0}\gap{1}\gap{2}\gap{3}\gap{4}\nongap{5}\nongap{6}\gap{7}\generator{8}\generator{9}\nongap{10}\nongap{11}\dotscircles$} \\\end{tabular}} ]][.{\begin{tabular}{c}{$\nongap{0}\gap{1}\gap{2}\gap{3}\nongap{4}\gap{5}\generator{6}\generator{7}\nongap{8}\generator{9}\dotscircles$} \\\end{tabular}} [.{\begin{tabular}{c}{$\nongap{0}\gap{1}\gap{2}\gap{3}\nongap{4}\gap{5}\gap{6}\generator{7}\nongap{8}\generator{9}\generator{10}\nongap{11}\dotscircles$} \\\end{tabular}} [.{\begin{tabular}{c}{$\nongap{0}\gap{1}\gap{2}\gap{3}\nongap{4}\gap{5}\gap{6}\gap{7}\nongap{8}\generator{9}\generator{10}\generator{11}\nongap{12}\nongap{13}\dotscircles$} \\\end{tabular}} ]][.{\begin{tabular}{c}{$\nongap{0}\gap{1}\gap{2}\gap{3}\nongap{4}\gap{5}\nongap{6}\gap{7}\nongap{8}\generator{9}\nongap{10}\generator{11}\dotscircles$} \\\end{tabular}} ]][.{\begin{tabular}{c}{$\nongap{0}\gap{1}\gap{2}\gap{3}\nongap{4}\nongap{5}\gap{6}\generator{7}\nongap{8}\nongap{9}\dotscircles$} \\\end{tabular}} [.{\begin{tabular}{c}{$\nongap{0}\gap{1}\gap{2}\gap{3}\nongap{4}\nongap{5}\gap{6}\gap{7}\nongap{8}\nongap{9}\nongap{10}\generator{11}\dotscircles$} \\\end{tabular}} ]][.{\begin{tabular}{c}{$\nongap{0}\gap{1}\gap{2}\gap{3}\nongap{4}\nongap{5}\nongap{6}\gap{7}\nongap{8}\nongap{9}\dotscircles$} \\\end{tabular}} ]][.{\begin{tabular}{c}{$\nongap{0}\gap{1}\gap{2}\nongap{3}\gap{4}\generator{5}\nongap{6}\generator{7}\dotscircles$} \\\end{tabular}} [.{\begin{tabular}{c}{$\nongap{0}\gap{1}\gap{2}\nongap{3}\gap{4}\gap{5}\nongap{6}\generator{7}\generator{8}\nongap{9}\dotscircles$} \\\end{tabular}} [.{\begin{tabular}{c}{$\nongap{0}\gap{1}\gap{2}\nongap{3}\gap{4}\gap{5}\nongap{6}\gap{7}\generator{8}\nongap{9}\generator{10}\nongap{11}\dotscircles$} \\\end{tabular}} ]][.{\begin{tabular}{c}{$\nongap{0}\gap{1}\gap{2}\nongap{3}\gap{4}\nongap{5}\nongap{6}\gap{7}\nongap{8}\nongap{9}\dotscircles$} \\\end{tabular}} ]]][.{\begin{tabular}{c}{$\nongap{0}\gap{1}\nongap{2}\gap{3}\nongap{4}\generator{5}\dotscircles$} \\\end{tabular}} [.{\begin{tabular}{c}{$\nongap{0}\gap{1}\nongap{2}\gap{3}\nongap{4}\gap{5}\nongap{6}\generator{7}\dotscircles$} \\\end{tabular}} [.{\begin{tabular}{c}{$\nongap{0}\gap{1}\nongap{2}\gap{3}\nongap{4}\gap{5}\nongap{6}\gap{7}\nongap{8}\generator{9}\dotscircles$} \\\end{tabular}} ]]]]] }  };
\end{tikzpicture}}
  \caption{Trimmed tree ${{\mathcal F}_7}$}
  \label{fig:unleavedtreeF7}
\end{figure}

In Figure~\ref{fig:unleavedtreeg10} one can appreciate which are the nodes belonging to ${\mathscr G}_{10}$ (in blue) compared to the set of all nodes in the general tree (in both gray and blue) for genus $10$. In this case there are exactly 364 blue nodes out of a total of $478$ nodes. The unleaved-RGD algorithm in \cite{unleaved} explores ${\mathscr G}_g$. It does not descend to the leaves but it stops at the previous layer. Hence, the number of explored nodes descends to $160$. The algorithm we present in Section~\ref{s:algorithmF} descends only up to three levels above the leaves. This way, in the example of genus $10$, it represents exploring only $45$ nodes compared to the $160$ nodes explored by the unleaved-RGD algorithm.

\begin{figure}[ht]
\input{inputfile-unleavedg-g10}
\caption{Nodes of ${{\mathcal G}_{10}}$ (in blue) with respect to all nodes of semigroups of genus up to $10$ (in both gray and blue)}
\label{fig:unleavedtreeg10}
\end{figure}~\begin{figure}[h]
\input{inputfile-unleavedF-F10}
\caption{Nodes of ${{\mathcal F}_{10}}$ (in blue) with respect to all nodes of semigroups of genus up to $10$ (in both gray and blue)}
\label{fig:unleavedtreeF10}
\end{figure}

Now, for a fixed Frobenius number $F$, denote by ${\mathscr F}_F$ the Frobenius-leaf-discriminating tree associated to $F$.
The next lemma characterizes the nodes of ${\mathscr F}_F$.

\begin{lemma}\label{lemmakeepgoingF}
  The Frobenius-leaf-discriminating tree ${\mathscr F}_F$ contains all nodes corresponding to semigroups $\Lambda$
  such that both
\begin{enumerate}
\item $F(\Lambda)\leq F$,
\item $F\not\in \langle L(\Lambda)\rangle$.
\end{enumerate}
\end{lemma}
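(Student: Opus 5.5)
We need to show: $\Lambda$ has a descendant in the semigroup tree with Frobenius number exactly $F$ if and only if $F(\Lambda)\leq F$ and $F\notin\langle L(\Lambda)\rangle$. Throughout we use two facts. First, descendants of $\Lambda$ are exactly the semigroups $\Lambda'\subseteq\Lambda$ obtained by successively removing the current Frobenius-larger generators. Second, $\Lambda'$ is a descendant of $\Lambda$ if and only if $\Lambda'\subseteq\Lambda$ and $\Lambda\cap[0,F(\Lambda)]=\Lambda'\cap[0,F(\Lambda)]$. The second fact holds because along the path from $\Lambda'$ to the root the removed Frobenius numbers increase strictly, so $\Lambda$ is an ancestor exactly when it is obtained from $\Lambda'$ by adding back all gaps of $\Lambda'$ above $F(\Lambda)$.

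\textbf{Necessity.} Let $\Lambda'$ be a descendant with $F(\Lambda')=F$. Since the Frobenius number strictly increases along the tree, $F(\Lambda)\leq F$. By the characterization of descendants, $\Lambda'$ contains $\Lambda\cap[0,F(\Lambda)]$, so $L(\Lambda)\subseteq\Lambda'$ and hence $\langle L(\Lambda)\rangle\subseteq\Lambda'$. As $F\notin\Lambda'$, we get $F\notin\langle L(\Lambda)\rangle$. If $\Lambda$ itself has Frobenius number $F$, then $F\notin\Lambda\supseteq\langle L(\Lambda)\rangle$, so this case is consistent.

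\textbf{Sufficiency.} Assume $F(\Lambda)\leq F$ and $F\notin\langle L(\Lambda)\rangle$. If $F(\Lambda)=F$ we are done, so take $F(\Lambda)<F$. Consider the candidate
$$\Lambda'=\langle L(\Lambda)\rangle\cap[0,F(\Lambda)]\;\cup\;\bigl(\langle L(\Lambda)\rangle\cap[F(\Lambda)+1,F-1]\bigr)\;\cup\;[F+1,\infty).$$
This is $\langle L(\Lambda)\rangle\cup[F+1,\infty)$; it equals $\langle L(\Lambda)\cup[F+1,\infty)\rangle$, and is therefore a numerical semigroup.

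We then check three things.
\begin{enumerate}
\item $F\notin\Lambda'$ by hypothesis, so $F(\Lambda')=F$.
\item $\Lambda'\cap[0,F(\Lambda)]=\Lambda\cap[0,F(\Lambda)]$. Elements of $\langle L(\Lambda)\rangle$ that are at most $F(\Lambda)$ lie in $\Lambda$ but are not equal to $F(\Lambda)$, and every left element lies in $\langle L(\Lambda)\rangle$.
\item $\Lambda'\subseteq\Lambda$, since $\langle L(\Lambda)\rangle\subseteq\Lambda$ and $[F+1,\infty)\subseteq[F(\Lambda)+1,\infty)\subseteq\Lambda$.
\end{enumerate}
By the second fact, $\Lambda'$ is a descendant of $\Lambda$ with Frobenius number $F$, so $\Lambda\in{\mathscr F}_F$.

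The main point needing care is the descendant characterization used in the second fact. I would justify it by induction, removing from $\Lambda$ the elements of $\Lambda\setminus\Lambda'$ in increasing order. At each step the removed element is the smallest element of $\Lambda\setminus\Lambda'$ above the current Frobenius number, so it exceeds that Frobenius number. After its removal, the set is $\Lambda'$ union a tail of $\Lambda$, which remains closed under addition. The removed element must be a generator, since otherwise it would be a sum of smaller elements of the current set, all of which lie in $\Lambda'$, forcing it into $\Lambda'$. Thus each step moves to a child in the tree.
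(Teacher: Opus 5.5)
Your proof is correct and follows the paper's argument. For necessity, every descendant contains $L(\Lambda)$, hence $\langle L(\Lambda)\rangle$, and has Frobenius number at least $F(\Lambda)$; for sufficiency, $\langle L(\Lambda)\rangle\cup[F+1,\infty)$ is an explicit descendant with Frobenius number $F$. The only difference is that you also prove the characterization of descendants as subsets of $\Lambda$ agreeing with it on $[0,F(\Lambda)]$, which the paper uses implicitly, and your induction removing elements in increasing order justifies it soundly.
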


\begin{proof}
  If $F\in \langle L(\Lambda)\rangle$, then any descendant of $\Lambda$ contains the left elements of $\Lambda$ and, so, it contains $\langle L(\Lambda)\rangle$. In particular, $F$ can not be a gap of any descendant of $\Lambda$, and so, it can not be the Frobenius number of any descendant of $\Lambda$.

  If $F\not\in\langle L(\Lambda)\rangle$, but the Frobenius number of $\Lambda$ is larger than $F(\Lambda)>F$, then, any descendant will have Frobenius number strictly larger than $F$, and, so, $F$ can not be the Frobenius number of any descendant of $\Lambda$, either.

  On the other hand, if $F\not\in \langle L(\Lambda)\rangle$ and
  the Frobenius number of $\Lambda$ is at most $F$, then $\langle L(\Lambda)\rangle \cup [F+1,\infty)$ is either $\Lambda$ or a descendant of $\Lambda$ with Frobenius number equal to $F$.
  \end{proof}

In Figure~\ref{fig:unleavedtreeF7} we draw ${\mathscr F}_{7}$. When the seeds algorithm is adapted to count semigroups by Frobenius number, it is restricted to explore ${\mathcal F}_F$.
In Figure~\ref{fig:unleavedtreeF10} one can appreciate which are the nodes in ${\mathcal F}_{10}$ (in blue) compared to the
set of all nodes of genus up to $10$. Notice that we limit the tree to genus $10$ since any semigroup with Frobenius number $10$ has genus at most $10$. In this case, there are $52$ blue nodes out of a total of $478$ general nodes. 
The trimming becomes even more dramatic when we restrict the exploration to multiplicities $m<\frac{F+1}{3}$, which can be done since for all the other multiplicities we can use the formula of $n_{F,m,g}$.
In the example of Frobenius number $10$, this represents exploring only $9$ nodes.

\section{The seeds algorithm exploring the genus-leaf-dis\-cri\-mi\-na\-ting tree}
\label{s:algorithmg}

In \cite{unleaved} it is explained how to explore the genus-leaf-discriminating tree. This is done thanks to the characterization in Lemma~\ref{l:charactg}.
At each node we compute the greatest common divisor (gcd) of the left elements of the corresponding semigroup. If it is not one, then we keep descending. If it is one, we only descend if the genus of the semigroup generated by the left elements is at least the target genus. The genus of the left elements, as well as the semigroup generated by them (divided by their gcd if not one) can be easily updated every time we descend from one node to its children. Using this for just exploring just the nodes in ${\mathscr G}_g$ results in a very significant improvement of the required computation time to explore the semigroup tree.

The seeds algorithm presented in \cite{seeds1} is based on the so-called seeds of a numerical semigroup.
Given a numerical semigroup $\Lambda=\{\lambda_0=0<\lambda_1<\dots\}$ with Frobenius number $F$, a nongap $\lambda_t\in\Lambda$ with $\lambda_t>F$ is an {\em order-$i$ seed} of $\Lambda$ if
$\lambda_t+\lambda_i\neq\lambda_j+\lambda_{j'}$ for all $i<j,\,j'<t$.
In particular, the order-zero seeds of $\Lambda$ are its right generators.
One can prove that there are no seeds of order larger than $F-g$, and that order-$i$ seeds are at most $F+\lambda_{i+1}-\lambda_i$.
This allows to encode the seeds of $\Lambda$ in a binary vector of length $\lambda_1+(\lambda_2-\lambda_1)+(\lambda_3-\lambda_2)+\dots+(\lambda_{F-g+1}-\lambda_{F-g})=\lambda_{F-g+1}=F+1$. We denote this binary vector the seeds bitstream of $\Lambda$.

The seeds algorithm 
computes for each explored node its seeds bitstream as well as its gap sequence.
The set of descendants of each node is then very easily described by the seeds bitstream and there is a fast descending algorithm that computes the seeds bitstream and the gap sequence of a descendant from those of its preceding sibling in the semigroup tree or from its parent in case there is no preceding sibling.
The seeds algorithm has another remarkable advantage, introduced in
\cite{seeds2}, which is that, while the RGD algorithm descends the tree up to the level above the leaves, with the seeds algorithm we can use that we know the number of great-grand children of a node from its seeds bitstream without the need to descend the node.

At the moment, the seeds algorithm is the most efficient algorithm to explore the semigroup tree. However, the fastest available implementation is in c++ and, with the current native integers, it can only be used to explore semigroups of genus up to 64. It can be parallelized by splitting semigroups by their first three jumps.
In contrast, the largest known term of the sequence $n_g$ of semigroups of each given genus $g$ is today $n_{77}$, which has been obtained using the RGD-unleaved algorithm~\cite{rgd},
by means of exploring the genus-leaf-discriminating tree, as explained above, and descending the tree up to level $g-1$. The implementation in \cite{rgd} computes $n_g$ using some other tricks to visit less nodes, but disabling it to obtain the split count by the first three jumps.

We boost the performance of the seeds algorithm by adapting it to explore the genus-leaf-discriminating tree. For this we added the descending discrimination criteria at each node, and so we introduced the computation of the gcd and the genus of the semigroup generated by the left elements at each step, whenever the left elements are coprime. We perform it by specific updating strategies.

In Table~\ref{t:nodesvisitatsg} we give, for several genera $g$ up to $40$: (i) the number of nodes in the whole tree of numerical semigroups of genus at most $g$, (ii) the number of nodes of ${\mathcal G}_g$, (iii) the number of nodes of ${\mathcal G}_g$ up to level $g-1$, (iv) the number of nodes explored by the RGD algorithm in \cite{unleaved}, and (v) the number of nodes explored by the new seeds algorithm (seeds-LDT). We derive that the search
space of the seeds-LDT algorithm is 23\% of that required by the previous RGD-unleaved algorithm.

\begin{table}
  \begin{center}
\begin{tabular}{|r|r|r|r|r|r|}\hline
  $g$ & \begin{tabular}{l}nodes of \\genus$\leq g$\end{tabular} & \begin{tabular}{l}nodes in \\${\mathcal G}_g$\end{tabular} & \begin{tabular}{l}nodes \\in ${\mathcal G}_g$\\ of genus\\$\leq g-1$\end{tabular} & \begin{tabular}{l}nodes\\ explored\\ by RGD-unleaved\end{tabular} & \begin{tabular}{l}nodes \\explored \\by the \\seeds-LDT alg.\end{tabular} 
  \\\hline
  10 & 478 & 364 & 160 & 61 & 45 \\
  15 & 6964 & 4833 & 1976 & 1325 & 428 \\
  20 & 93142 & 61469 & 24073 & 16774 & 4312 \\
  25 & 1179597 & 759972 & 292748 & 196433 & 47278 \\
  30 & 14396338 & 9146174 & 3499401 & 2282567 & 536246 \\
  35 & 171202690 & 107815637 & 41128436 & 26454236 & 6128094 \\
  40 & 1998799015 & 1251716100 & 477101816 & 304794995 & 70187790 \\
\hline
\end{tabular}
\end{center}
  \caption{(i) The number of nodes in the whole tree of numerical semigroups of genus at most $g$, (ii) the number of nodes of ${\mathcal G}_g$, (iii) the number of nodes of ${\mathcal G}_g$ up to level $g-1$, (iv) the number of nodes explored by the RGD-unleaved algorithm, and (v) the number of nodes explored by the new seeds-LDT algorithm.}
\label{t:nodesvisitatsg}
\end{table}

In Appendix A we include a pseudocode corresponding to the recursive descending function from a parent to its children in ${\mathcal G}_g$. In variable $S$ we keep the seeds vector of the semigroup being handled and in $spanS$ the monoid generated by its left elements.
The variable $gd$ counts the difference of the target genus and the genus of the semigroup being handled minus 3.

The base case is when $gd=0$. In this case the function returns the result of the formula of the number of great grandchildren of a semigroup from its seeds bitstream given in \cite[Theorem 9]{seeds2}. We use $w_0^a(A)$ to denote the number of $1$-bits in $A$ at positions from $0$ to $a$.

For the general case, the condition for recursion is
that the gcd of the left elements is different than one, or that, in case it is one, the genus of $spanS$ is larger than the target genus.
In the pseudocode there are three blocks.
There is a block for processing the bit at position $0$ of $S$. If its value is $1$, then there is a child with the same left elements as its parent and the conductor is the parent's conductor augmented by one. Neither the gcd of the left elements nor $spanS$  need to be updated in this case.
There is another block for processing the bit at position $1$ of $S$. If its value is $1$, then there is a child with left elements equal to the left elements of the parent together with its parents's conductor. The gcd of the new left elements is then just the gcd of two values, the first one being the gcd of the left elements of the parent, which is already computed, and the second one being the parent's conductor. The monoid $spanS$ needs to be updated in this case, by adding the parent's conductor as a new generator.
Finally there is a block for processing the rest of the bits of $S$, as far as there are left generators to be processed. In this case, if the value of a bit is $1$, then there is a child with left elements equal to the left elements of the parent together with an interval starting at the parent's conductor. The gcd of the new left elements is $1$, since the interval has at least two consecutive elements. The monoid $spanS$ needs to be updated in this case. The update of $spanS$ can be done from the monoid of the preceding sibling, should it exist.

In order to be able to explore the tree at depths larger than 64,
we implemented two independent libraries: one for 192-bits and one for 256-bits integers.

As a result of our computations we obtained
$$n_{78}=76122842716469053$$
$$n_{79}=123262186663081640$$
$$n_{80}=199583757387067387$$
Here we want to remark that $n_{50}$ was first computed by brute approach in \cite{Br:fibonacci} during 19 days. Fromentin and Hivert computed the same number with $290$ seconds in \cite{FromentinHivert}.
With {\tt posix} implementations in an ordinary laptop we obtain the same number in 213 seconds using the RGD algorithm, 210 seconds using the seeds algorithm, 121 seconds using the RGD-unleaved algorithm, and 40 seconds using our new implementation for exploring the genus-leaf-discriminating tree with the seeds algorithm.

Experiments suggest that the execution time grows exponentially with the genus by a factor of $1.618$.
In \cite{unleaved} $n_{77}$ was computed during 48.5 days.
With the new implementations run in Kebnekaise at HPC2N
and Tetralith at NAISS,
$n_{78}$ was obtained in 6 days, $n_{79}$ in $13$ days, and $n_{80}$ also in 13 days, because duration depends also on process priorities in the shared high performance cluster.
In core-hours, the time required to compute $n_{80}$ was $231.3 \cdot 1000$ core-hours, which corresponds to more than $26$ years.
In Section~\ref{s:parallel} we explain how we could cut the whole computation in parts that were computable within 7 days each.

\section{The seeds algorithm exploring the Frobenius-leaf-dis\-cri\-mi\-na\-ting tree}
\label{s:algorithmF}
Let $N_F$ denote the number of numerical semigroups with Frobenius number $F$. We present an adaptation of the seeds algorithm to compute $N_F$ that combines (i) the application of formulas in Section~\ref{s:multiparameter} for multiplicities $m$ such that $m\geq \frac{F+1}{2}$, and (ii) the exploration of the Frobenius-leaf-discriminating tree ${\mathcal F}_F$ for the other multiplicities. 
  The exploration of ${\mathcal F}_F$ exploits Lemma~\ref{lemmakeepgoingF}. With this algorithm we got the terms of the sequence $N_F$ for $F$ from $100$ to $132$.

The pseudocode of the algorithm that recursively descends from parents to their children in ${\mathcal F}_F$ is provided in Appendix B. It is similar to that of when counting by the genus. Now the variables $gd, u, v$ as well as the $gcd$ are not needed. Its general structure is as follows: 

The base case is when the semigroup being handled has Frobenius number equal to the target Frobenius number, that is when the conductor of the semigroup is the target Frobenius number plus one.

For the general case, the function has the same three blocks as before.
The main difference is that the condition for recursion is now, according to
Lemma~\ref{lemmakeepgoingF},
that the target Frobenius number is a gap of the monoid $spanS$, generated by the left elements of the semigroup being handled.

In Table~\ref{t:nodesvisitatsF} we give, for several Frobenius numbers $F$ up to $40$: (i) the number of nodes in the whole tree of numerical semigroups of genus at most $F$, (ii) the number of nodes of ${\mathcal F}_F$, and (iii) the number of nodes explored by our seeds-LDT algorithm, that is, nodes that are at tree depth at most $F-3$ and for which the multiplicity satisfies $m<\frac{F+1}{3}$.
In this case the search space of the seeds-LDT algorithm reduces to 0.03\% of that required by the previous algorithms exploring the whole semigroup tree.

\begin{table}
  \begin{center}
\begin{tabular}{|r|r|r|r|}\hline
  $F$ & nodes of genus$\leq F$ & nodes in ${\mathcal F}_F$ & explored nodes\\\hline
 10 & 478 & 52 &  9\\
 15 & 6964 & 423 &  42\\
 20 & 93142 & 1853 &  130\\
 25 & 1179597 & 16830 & 1761\\
 30 & 14396338 & 63994 & 4305\\
 35 & 171202690 & 840171 & 95024\\
40 & 1998799015 & 4601656 &  537338\\\hline
\end{tabular}
\end{center}
\caption{(i) The number of nodes in the whole tree of numerical semigroups of genus at most $F$, (ii) the number of nodes of ${\mathcal F}_F$, and (iii) the number of nodes explored by our algorithm.}
\label{t:nodesvisitatsF}
\end{table}

In Table~\ref{taulaNF} we give the number of semigroups with Frobenius number between $100$ 
and $132$. 
Computing $N_{131}$ took almost 115 hours while $N_{132}$ took almost $62$ hours. 
We have also used this algorithm to compute the values of $n_{F,m}$ for $F$ up to $132$. In Section~\ref{s:multiparameter} we included Table~\ref{taulaFm} that provides the values of $n_{F,m}$ for $F$ up to $30$, highlighting in bold the values where Theorem~\ref{t:twostepdoubling} holds.

\begin{table}
\resizebox{\textwidth}{!}{\begin{tabular}{|r|r||r|r||r|r|}
 \hline
$F$ & $N_F$ & $F$ & $N_F$ & $F$ & $N_F$\\
  \hline
  100 & 1411163010552929 & 111 & 91508738209042815   & 122 & 2902926372048783514 \\  
  101 & 2859805177566292  & 112 & 90573608825024748   & 123 & 5869220828013043596 \\
102 & 2815711168628566  & 113 & 183434646614916731  & 124 & 5805952197312285718 \\
103 & 5721342683868839  & 114 & 180884445758448980  & 125 & 11754175518744959986 \\
104 & 5653084347812039  & 115 & 366926717014466133  & 126 & 11601801163893032006 \\
105 & 11419546095451295 & 116 & 362597785059240733  & 127 & 23510410318215209870 \\
106 & 11309173804469087 & 117 & 732982650574525666  & 128 & 23234609826654274797 \\
107 & 22907719344734948 & 118 & 725290053632625924  & 129 & 46986766251335527340
\\
108 & 22573332850270825 & 119 & 1468507257142735198 & 130 & 46471844310284935850 \\
109 & 45825142465629930 & 120 & 1448820184698501073 & 131 & 94070406446689407239 \\
110 & 45285644532433944 & 121 & 2937342548770115212 & 132 & 92876546429911575576
 \\
\hline
\end{tabular}}
\caption{Number of semigroups with Frobenius number $F$, for $F$ from $100$ to $132$.}
\label{taulaNF}
\end{table}

\section{Counting irreducible numerical semigroups}
\label{s:irred}

It is well known that the Frobenius number $F$ of a numerical semigroup of genus $g$ is upper bounded by $2g-1$. This follows by the pigeon's principle, since for each pair of integers addaing up to $F$ at least one of them must be a gap.
A numerical semigroup with genus $g$ and Frobenius number $F$ is said to be {\em symmetric} if $F$ is exactly $2g-1$ and {\em pseudo-symmetric} if $F=2g-2$.

The intersection of two different numerical semigroups is a numerical semigroup and the semigroups that can not be expressed as a proper intersection of two numerical semigroups are called {\em irreducible} numerical semigroups.
Branco and Rosales proved that the set of irreducible numerical semigroups is the union of the set of symmetric semigroups and the set of pseudo-symmetric semigroups \cite{BrancoRosales}.

The number of irreducible numerical semigroups of each Frobenius number was computed in \cite{fundamentalgaps} for Frobenius number up to $39$ and is published at the entry A158206 of the Online Encyclopedia of Integer Sequences
({\tt https://oeis.org/A158206}) \cite{oeisA158206}.
With our algorithm for exploring ${\mathcal F}_F$ we could compute this sequence for Frobenius number up to $132$ as shown in Table~\ref{t:irred}.

\begin{table}
  \begin{tabular}{|ccc||ccc||ccc|}
  \hline
  $F$ & $g$ & n. irred. sem. & $F$ & $g$ & n. irred. sem. & $F$ & $g$ & n. irred sem.\\
  \hline
  38 & 20 & 182 &    70 & 36 & 10977  &  102 & 52 & 439443 \\
  39 & 20 & 227   &  71 & 36 & 19812  &   103 & 52 & 1018271 \\
  40 & 21 & 196   &  72 & 37 & 9667   &  104 & 53 & 721159 \\
  41 & 21 & 420   &  73 & 37 & 25405  &  105 & 53 & 947145 \\
  42 & 22 & 203   &  74 & 38 & 19020  &  106 & 54 & 984242 \\
  43 & 22 & 546   &  75 & 38 & 23297  &  107 & 54 & 1655267 \\
  44 & 23 & 342   &  76 & 39 & 21564  &  108 & 55 & 839515 \\
  45 & 23 & 498   &  77 & 39 & 41642  &  109 & 55 & 2106583 \\
  46 & 24 & 527   &  78 & 40 & 22178  &  110 & 56 & 1570954 \\
  47 & 24 & 926   &  79 & 40 & 53629  &  111 & 56 & 2001431 \\
  48 & 25 & 411   &  80 & 41 & 35886  &  112 & 57 & 1907837 \\
  49 & 25 & 1182  &  81 & 41 & 51367  &  113 & 57 & 3417576 \\
  50 & 26 & 844   &  82 & 42 & 51572  &  114 & 58 & 1904303 \\
  51 & 26 & 1121  &  83 & 42 & 88093  &  115 & 58 & 4273853 \\
  52 & 27 & 981   &  84 & 43 & 41858  &  116 & 59 & 3035371 \\
  53 & 27 & 2015  &  85 & 43 & 109693 &  117 & 59 & 4162984 \\
  54 & 28 & 1039  &  86 & 44 & 84770  &  118 & 60 & 4213006 \\
  55 & 28 & 2496  &  87 & 44 & 106526 &  119 & 60 & 7023282 \\
  56 & 29 & 1715  &  88 & 45 & 100439 &  120 & 61 & 3669716 \\
  57 & 29 & 2436  &  89 & 45 & 184466 &  121 & 61 & 8945931 \\
  58 & 30 & 2499  &  90 & 46 & 98332  &  122 & 62 & 6829161 \\
  59 & 30 & 4350  &  91 & 46 & 233557 &  123 & 62 & 8512341 \\
  60 & 31 & 1857  &  92 & 47 & 159833 &  124 & 63 & 8001407 \\
  61 & 31 & 5602  &  93 & 47 & 222990 &  125 & 63 & 14243981 \\
  62 & 32 & 4173  &  94 & 48 & 227059 &  126 & 64 & 8191764 \\
  63 & 32 & 5317  &  95 & 48 & 375582 &  127 & 64 & 18354729 \\
  64 & 33 & 4866  &  96 & 49 & 195266 &  128 & 65 & 13300505 \\
  65 & 33 & 8925  &  97 & 49 & 490585 &  129 & 65 & 17488308 \\
  66 & 34 & 4839  &  98 & 50 & 369450 &  130 & 66 & 17564858 \\
  67 & 34 & 11971 &  99 & 50 & 471036 &  131 & 66 & 29633619 \\
  68 & 35 & 7826  & 100 & 51 & 419865 &  132 & 67 & 15515628 \\
  69 & 35 & 11276 & 101 & 51 & 799237 &&&\\  
  \hline
\end{tabular}

\caption{Number of irreducible semigroups of each Frobenius number from $38$ to $132$.}
\label{t:irred}
\end{table}

\section{Parallelization of the seeds algorithm for counting by the genus}
\label{s:parallel}
Let $m$ be the multiplicity of a numerical semigroup
and let $u$ and $v$ be the second and third jumps, respectively. That is, $u$ is the difference between the second largest nonzero nongap and the multiplicity and $v$ is the difference between the third and the second largest nonzero nongaps. 
Let $w_g(m,u,v)$ be the number of semigroups of genus $g$, multiplicity $m$, second jump $u$ and third jump $v$.

Roughly speaking, our program generates a parallelizable task for each combination of parameters $m$, $u$, $v$ (except for some extremal values such as for $m=2$ or $m=g+1$). That is, given parameters $m,u,v$ we compute in parallel $w_g(m,u,v)$. The computation time to obtain $w_g(m,u,v)$ is proportional to $w_g(m,u,v)$ itself. For a good performance, and due to high performance computing requirements, it is desirable that computational demands of different tasks are balanced. Because of that, computation is further subdivided for selected $m, u, v$ with very large and time-consuming $w_g(m,u,v)$. In contrast, the least time consuming $w_g(m,u,v)$ are grouped in single tasks. 
For these two reasons we need an estimated order of the tasks, from the most time-consuming to the least one.

Each task corresponding to the parameters $m,u,v$ is prioritized in terms of $m$ and $u+v$. This means that we give the same priority to all tasks with the same multiplicity $m$ and for which the sum $u+v$ is the same, which implies that $m$ and $m+u+v$, i.e. the first and third nonzero nongaps of the computed semigroups, are the same. For this reason we define
$w_g(m,\omega)=\sum_{u+v=\omega} w_g(m,u,v)$
and
we aim to order the tasks by decreasing order of $\frac{w_g(m,u+v)}{u+v-1}$, as this is the mean value of the number of elements in $w_g(m,u+v)$ over all combinations of $m,u,v$ with the same sum $u+v$.
This is equivalent to ordering the tasks by decreasing order of
$\log_2\left(\frac{w_g(m,u+v)}{u+v-1}\right).$

Consider the matrix with the values
$w_g(m,\omega)$ with $2\leq m\leq g+1$ and $2\leq \omega\leq \lfloor\frac{2(g+3)}{3}\rfloor$. The upper bound on $\omega$ is justified in Lemma~\ref{omegabound} below.
The columns of this matrix have maximum values over the line $m+\omega=g\gamma+2\varphi$.
Indeed, for
a numerical semigroup $S=\{\lambda_0=0,\lambda_1,\lambda_2\dots\}$ with $\lambda_i<\lambda_{i+1}$, of genus $g$, define the following map $\psi_S:[0,1]\rightarrow[0,2]$, such that $\psi_S(\alpha)=\frac{\lambda_{1+\lfloor\alpha(g-1)\rfloor}}{g}.$
Define also
$$\psi(\alpha)=\left\{\begin{array}{ll}\gamma+\varphi\alpha&\mbox{ if }0\leq\alpha\leq\frac{1}{\sqrt{5}},\\1+\alpha&\mbox{ if }\frac{1}{\sqrt{5}}\leq \alpha\leq 1,
\end{array}\right.$$
where $\varphi=\frac{1+\sqrt{5}}{2}$ and $\gamma=\frac{5+\sqrt{5}}{10}$.
It is proved in \cite{BrasKaplanSinghal} that, as $g$ grows to infinity, $\psi_S$ approaches $\psi$ uniformly in probability.
Let $t$ be an integer such that $t\leq 1+\frac{g-1}{\sqrt{5}}$.
The $t$th nonzero element of a semigroup $S$ of genus $g$ is $\lambda_t=g\psi_S\left(\frac{t-1}{g-1}\right),$ and, by the aforementioned result, this is expected to be very close to $g\psi\left(\frac{t-1}{g-1}\right)=g\gamma+(t-1)\varphi\frac{g}{g-1}.$

Since the value $m+\omega$ equals the third nonzero nongap,
by the previous remark, it should be around $g\gamma+2\varphi$. 
Hence, the columns in the matrix have maximum values around the line $m+\omega=\rho(g)$, where 
$\rho(g)$ is the integer rounding of $g\gamma+2\varphi$.

Dividing each column by $\omega-1$, and also taking the logs of all the obtained entries, keeps the property that the columns in the matrix have maximum values over the line $m+\omega=\rho(g)$.
For this reason, we chose to approximate the matrix of
$\log_2\left(\frac{w_g(m,\omega)}{\omega-1}\right)$ by a double plane with edge in the line $m+\omega=\rho(g)$.
The double plane of a target genus $t$ can be estimated by the values of a previous reference genus $r$. 
It is known \cite{Zhai} that the number of numerical semigroups of genus $t$, $n_t$, can be approximated by $\varphi^{t-r} n_r$.
For this reason we work with the hypothesis that,
for integer values of $m$, $u$ and $v$ with $m+u+v$ close to $\rho(t)$,
$w_t(m,u,v)\approx \varphi^{t-r}w_r(m,u,v),$
and from this approximated values we can assign a priority to each pair $(m,\omega)$.
Once we have this task ordering, tasks are grouped into batches which are submitted independently to the high performance computer Kebnekaise at HPC2N. Each batch requests
from 8 to 32 cores depending on the estimation of time consumption.
In addition, the computation also establishes whether the number of semigroups of $(m,u,v)$ is computed in a single run, or if the computation is split in 4, 8, 16, or 32 subtasks. 

We conclude this section with a bound on the sum $u+v$, which is needed to bound the dimension of the prioritizing matrix.

\begin{lemma}\label{omegabound}
  If there exists a numerical semigroup of genus $g$ with first jumps $m,u,v$, then
  $u+v\leq\lfloor\frac{2(g+3)}{3}\rfloor$.
  \end{lemma}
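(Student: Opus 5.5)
The approach is to bound $\omega=u+v$ in two independent ways: by counting the gaps lying below the third nonzero nongap, and by using closure under addition applied to the multiplicity. Combining the two bounds gives the claim.

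Write the first three nonzero nongaps of the semigroup $S$ as $\lambda_1=m$, $\lambda_2=m+u$ and $\lambda_3=m+u+v=m+\omega$, with $u,v\geq 1$.

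\emph{First bound (gap count).} Every integer in $[1,m+\omega-1]$ other than $m$ and $m+u$ is a gap of $S$. There are $m-1$ gaps in $[1,m-1]$, $u-1$ gaps strictly between $m$ and $m+u$, and $v-1$ gaps strictly between $m+u$ and $m+\omega$. Hence
$$g\geq (m-1)+(u-1)+(v-1)=m+\omega-3,$$
that is, $\omega\leq g+3-m$.

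\emph{Second bound (closure).} We show $\omega\leq 2m$. The element $2m$ is a nonzero nongap larger than $m$, so either $2m\in\{m+u,m+\omega\}$ or $2m>m+\omega$.
\begin{itemize}
\item If $2m\geq m+\omega$, then $\omega\leq m\leq 2m$.
\item Otherwise $2m<m+\omega$, and since $2m$ is a nongap strictly between $m$ and $\lambda_3$, necessarily $2m=m+u$, i.e.\ $u=m$. Then $3m\in S$ and $3m>2m=\lambda_2$, so $3m\geq\lambda_3=m+\omega$, giving $\omega\leq 2m$.
\end{itemize}
So in all cases $m\geq \omega/2$.

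\emph{Combination.} Substituting $m\geq\omega/2$ into the first bound gives $\omega\leq g+3-\omega/2$, hence $\omega\leq \frac{2(g+3)}{3}$. Since $\omega$ is an integer, $\omega\leq\left\lfloor\frac{2(g+3)}{3}\right\rfloor$.

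The only step needing care is the case analysis in the second bound, specifically checking that $2m$ cannot sit strictly between $\lambda_2$ and $\lambda_3$. This holds because $\lambda_2$ and $\lambda_3$ are consecutive nonzero nongaps. Everything else is a direct count.
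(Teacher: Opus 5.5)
Your proposal is correct and follows essentially the same route as the paper: it uses the same two inequalities, $g\geq m+u+v-3$ and $u+v\leq 2m$. You combine them by substituting $m\geq\omega/2$ into the first, where the paper splits into the cases $m\leq\frac{g+3}{3}$ and $m\geq\frac{g+3}{3}$; this difference is immaterial, and you also justify both inequalities, which the paper only asserts.
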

\begin{proof}
  On one hand we have $g\geq m+u+v-3$ and on the other hand we have $u+v\leq 2m$.
  Hence,
  $$u+v\leq\min(g-m+3,2m).$$

  The result is proved by considering two cases.

  \begin{itemize}
  \item If $m\leq \frac{g+3}{3}$ then 
    $u+v\leq\min(g-m+3,2m)=2m\leq \frac{2(g+3)}{3}$.

  \item If $m\geq \frac{g+3}{3}$ then
    $u+v\leq\min(g-m+3,2m)=g-m+3\leq g+3-\frac{g+3}{3}=\frac{2(g+3)}{3}$.
  \end{itemize}

  Hence, $u+v\leq \frac{2(g+3)}{3}$ and, since $u+v$ is an integer, we conclude that   $u+v\leq \lfloor\frac{2(g+3)}{3}\rfloor$.

\end{proof}

\section{Conclusion}

We proved a formula for the exact values of $n_{F,m,g}$ for the cases in which $m\geq \frac{F+1}{3}$, obtaining multiparameter exact recurrence relations, in this case, of the Fibonacci-like behavior of the sequence $n_g$ and the two-step doubling behavior of the sequence $N_F$.

We presented a new version of the seeds algorithm for counting semigroups by genus, but also for giving the number of semigroups of each genus and each combination of the three first jumps of a semigroup, the first one being the multiplicity. We used in-depth parallelization tools and long integer bitwise computation that allowed to get to $n_{80}$.

We presented and described the Frobenius-leaf-discriminating tree and used it, together with the former formulas, to compute $N_F$ and the number of irreducible semigroups of Frobenius number $F$ for $F$ up to $132$. We also computed the multiparameter decomposition of these numbers with respect to the multiplicity and the genus.

Our approach reduces the search space for genus to 23\% and for
    Frobenius to 0.03\% of that required by previous approaches.
We contributed to four sequences of the Online Encyclopedia of Integer Sequences:
\begin{itemize}
\item {\tt A007323: Number of numerical semigroups of genus n} \cite{oeisA007323}, \\terms {\tt a(n) for n=78..80},
\item {\tt A124506: Number of numerical semigroups with Frobenius number n} \cite{oeisA124506},
  \\terms {\tt a(n) for n=101..132},
\item {\tt A158206: Number of irreducible numerical semigroups with Frobenius number n} \cite{oeisA158206}, \\terms {\tt a(n) for n=40..132},
  \item {\tt A398185: Triangular array: T(n,m) gives the number of numerical semigroups of Frobenius number n and multiplicity k, (n>=1, 2<=k<=n+1)} \cite{oeisA398185}, \\terms {\tt a(n) for n=1..5050}.
\end{itemize}

\section*{Acknowledgement}

This research is supported by project PID2024-156636NB-C21 (MATSE), funded by MCIN/AEI/10.13039/501100011033/ FEDER, UE, the project RED2024-153572-T, funded by MICIU/AEI/10.13039/501100011033, and by the project ``HERMES'' funded by the European Union NextGenerationEU/PRTR via INCIBE.

This work was partially supported by the Wallenberg AI, Autonomous Systems and
Software Program (WASP) funded by the Knut and Alice Wallenberg Foundation.
This research was conducted using the resources of High Performance Computing
Center North (HPC2N).
The computations were enabled by resources provided by the National Academic
Infrastructure for Supercomputing in Sweden (NAISS), partially funded by the
Swedish Research Council through grants agreement no. 2025-22-1579, NAISS
2026/3-170.

The authors would like to thank Axel Bacher and Roman Iakymchuk for their comments and Nathan Kaplan for interesting remarks, and specially for suggesting the alternative proof of Theorem 2.12 in Remark~\ref{remkaplan}, by means of Kunz coordinates.


\section*{Appendix A}
\begin{algorithmic}[1]
\Function{Sdesc$_g$}{$S, \text{\textit{spanS}}, c, m, u, v, r, \text{\textit{gd}}, \text{\textit{gcd}}$}
     \State $ng \leftarrow 0$
     \vspace{1mm}
     \If{$\text{\textit{gd}} = 0$}

     \Comment{Base case}
     \State $s \leftarrow w_0^{u-1}(S \wedge (S \ll m))$
     \State $ng \leftarrow ng + s \cdot (r - 1)+ \binom{r}{3}$
     \If{$c > m + u$}
     \State $s \leftarrow w_0^{v-1}(S \wedge (S \ll u) \wedge (S \ll (m + u))$
     \State $ng \leftarrow ng + s$
     \EndIf
     \Else
        \State $\text{\textit{newc}} \leftarrow 0$, $s \leftarrow 2$, $\text{\textit{sold}} \leftarrow 0$, $\text{\textit{spanSg}} \leftarrow 0$
    \State $\text{\textit{gd}} \leftarrow \text{\textit{gd}} - 1$
        \State $A \gets \neg(\mathit{spanS}\ll 1)$    
        
        \Comment{Block 1: Bit 0 processing}
        \If{\Call{bit}{$S, 0$}}
            \If{\Call{bit}{$S, m$}}
                \State $ng \leftarrow \Call{Sdesc$_g$}{(S \ll 1) \vee (3 \ll (c-1)), \text{\textit{spanS}}, c + 1, m, u, v, r, \text{\textit{gd}}, \text{\textit{gcd}}}$
                \State $r \leftarrow r - 1$
            \Else
                \State $r \leftarrow r - 1$
                \State $ng \leftarrow \Call{Sdesc$_g$}{(S \ll 1)\vee (3 \ll (c-1)), \text{\textit{spanS}}, c + 1, m, u, v, r, \text{\textit{gd}}, \text{\textit{gcd}}}$
            \EndIf
        \EndIf
        
        \Comment{Block 2: Bit 1 processing}
        \If{$r > 0$ \textbf{and} \Call{bit}{$S, 1$}}
            \State $\text{\textit{gcd}} \leftarrow \Call{gcd}{c,gcd}$
            \State $\text{\textit{spanS}} \leftarrow \Call{Saddelement}{\text{\textit{spanS}}, c}$
            \If{$\text{\textit{gcd}} = 1$}
                \State $\text{\textit{spanSg}} \leftarrow \Call{genus}{\text{\textit{spanS}}}$
                \If{$\text{\textit{spanSg}} \le g $}
                    \If{$\text{\textit{spanSg}} = g$}
                        \State \Return $ng + 1$
                    \Else
                        \State \Return $ng$
                    \EndIf
                \EndIf
            \EndIf
            \State $A \leftarrow A \gg 1$
            \State $S \leftarrow S \text{ $\wedge$ } A$
            \If{\Call{bit}{$S, m + 1$}}
                \State $ng \leftarrow ng + \Call{Sdesc$_g$}{(S \ll 2) \vee (7 \gg (c-1)), \text{\textit{spanS}}, c + 2, m, u, v, r, \text{\textit{gd}}, \text{\textit{gcd}}}$
                \State $r \leftarrow r - 1$
            \Else
                \State $r \leftarrow r - 1$
                \State $ng \leftarrow ng + \Call{Sdesc$_g$}{(S \ll 2) \vee (7 \gg (c - 1)), \text{\textit{spanS}}, c + 2, m, u, v, r, \text{\textit{gd}}, \text{\textit{gcd}}}$
            \EndIf
            \State $\text{\textit{sold}} \leftarrow 1$
        \EndIf
        
    \Comment{Block 3: Larger bits processing}
        \While{$r > 0$}
            \If{\Call{bit}{$S, s$}}
                \State $\text{\textit{newc}} \leftarrow c + s + 1$
                \State $\text{\textit{spanS}} \leftarrow \Call{Saddinterval}{\text{\textit{spanS}}, c, \text{\textit{newc}} - 2}$
                \State $\text{\textit{spanSg}} \leftarrow \Call{genus}{\text{\textit{spanS}}}$
                \If{$\text{\textit{spanSg}} \le g$}
                    \If{$\text{\textit{spanSg}} = g$}
                        \State \Return $ng + 1$
                    \Else 
                        \State \Return $ng$
                    \EndIf
                \EndIf
                \For{$i \leftarrow \text{\textit{sold}}$ \textbf{to} $s - 1$}
                    \State $A \leftarrow A \gg 1$
                    \State $S \leftarrow S \wedge A$
                \EndFor
                \If{\Call{bit}{$S, m + s$}}
                    \State $ng \leftarrow ng + \Call{Sdesc$_g$}{(S \ll (s + 1)) \vee (7 \gg (\textit{newc} - 3)), \text{\textit{spanS}}, \text{\textit{newc}}, m, u, v, r, \text{\textit{gd}}, 1}$
                    \State $r \leftarrow r - 1$
                \Else
                    \State $r \leftarrow r - 1$
                    \State $ng \leftarrow ng + \Call{Sdesc$_g$}{(S \ll (s + 1)) \vee (7 \gg (\textit{newc} - 3)), \text{\textit{spanS}}, \text{\textit{newc}}, m, u, v, r, \text{\textit{gd}}, 1}$
                \EndIf
                \State $\text{\textit{sold}} \leftarrow s$
            \EndIf
            \State $s \leftarrow s + 1$
        \EndWhile
        \EndIf
    \State \Return $ng$
\EndFunction
\end{algorithmic}

\section*{Appendix B}
\begin{algorithmic}[1]
\Function{Sdesc$_F$}{$S, \mathit{spanS}, c, m, r$}
    \State $s \gets 2$, $\mathit{sold} \gets 0$
    \State $\mathit{nF} \gets 0$
   
    \Comment{Base case}
    \If{$c = F + 1$}
        \State \Return $1$
    \EndIf
    
    \State $g \gets g + 1$
    
    \Comment{Block 1: Bit 0 processing}
    \If{$\mathit{BIT}(S, 0)$}
        \If{$\mathit{BIT}(S, m)$}
            \If{$c + m = F$}
                \State \Return $2^{r - 1}$
            \Else
            \State $\mathit{nF} \gets \Call{Sdesc}{(S\ll 1) \lor (3\gg(c - 1)), \mathit{spanS}, c + 1, m, r}$
            \State $r \gets r - 1$
            \EndIf
        \Else
            \If{$c + m = F$}
                \State \Return $\mathit{nF}$
            \Else
                \State $r \gets r - 1$
                \State $\mathit{nF} \gets \Call{Sdesc$_F$}{(S\ll 1) \lor (3\gg(c - 1)), \mathit{spanS}, c + 1, m, r}$
            \EndIf
        \EndIf
    \EndIf
    \State $A \gets \neg(\mathit{spanS}\ll 1)$
    \Comment{Block 2: Bit 1 processing}
    \If{$r > 0 \land \mathit{BIT}(S, 1)$}
        \State $\mathit{spanS} \gets \Call{Saddelement}{\mathit{spanS}, c}$
        \If{$\mathit{BIT}(\mathit{spanS}, F)$}
            \State \Return $\mathit{nF}$
        \EndIf
        
        \State $A \gets (A\gg 1)$
        \State $S \gets S \land A$
        
        \If{$\mathit{BIT}(S, m + 1)$}
            \If{$c + m + 1 = F$}
            \State \Return $\mathit{nF} + 2^{r - 1}$
            \Else
                \State $\mathit{nF} \gets \mathit{nF} + \Call{Sdesc$_F$}{(S\ll 2) \lor (7\gg (c - 1)), \mathit{spanS}, c + 2, m, r}$
                \State $r \gets r - 1$
                \EndIf
        \Else
            \If{$c + m + 1 = F$}
                \State \Return $\mathit{nF}$
            \Else
                \State $r \gets r - 1$
                \State $\mathit{nF} \gets \mathit{nF} + \Call{Sdesc$_F$}{(S\ll 2) \lor (7\gg(c - 1)), \mathit{spanS}, c + 2, m, r}$
            \EndIf
        \EndIf
        \State $\mathit{sold} \gets 1$
    \EndIf
    
    \Comment{Block 3: Larger bits processing}
    \While{$r > 0$}
        \If{$\mathit{BIT}(\mathit{spanS}, F - c - s + 1)$}
            \State \Return $\mathit{nF}$
        \EndIf
        
        \If{$\mathit{BIT}(S, s)$}
            \State $\mathit{newc} \gets c + s + 1$
            \State $\mathit{spanS} \gets \Call{Saddinterval}{\mathit{spanS}, c, \mathit{newc} - 2}$
            \If{$\mathit{BIT}(\mathit{spanS}, F)$}
                \State \Return $\mathit{nF}$
            \EndIf
            
            \For{$i \gets \mathit{sold} \textrm{ to } s - 1$}
                \State $A \gets (A\gg 1)$
                \State $S \gets S \land A$
            \EndFor
            
            \If{$\mathit{BIT}(S, m + s)$}
                \If{$c + m + s = F$}
                    \State \Return $\mathit{nF} + 2^{r - 1}$
                \Else
                \State $\mathit{nF} \gets \mathit{nF} + \Call{Sdesc$_F$}{(S\ll (s + 1)) \lor (7\gg(\mathit{newc} - 3)), \mathit{spanS}, \mathit{newc}, m, r}$
                \State $r \gets r - 1$
                \EndIf
            \Else
                \If{$c + m + s = F$}
                    \State \Return $\mathit{nF}$
                \Else
                    \State $r \gets r - 1$
                    \State $\mathit{nF} \gets \mathit{nF} + \Call{Sdesc$_F$}{(S\ll(s + 1)) \lor (7\gg (\mathit{newc} - 3), \mathit{spanS}, \mathit{newc}, m, r}$
                \EndIf
            \EndIf
            \State $\mathit{sold} \gets s$
        \EndIf
        \State $s \gets s + 1$
    \EndWhile
    
    \State \Return $\mathit{nF}$
\EndFunction
\end{algorithmic}

\bibliographystyle{plain}

\end{document}